\newtheorem{theorem}{Theorem}[section]
\newtheorem{lemma}[theorem]{Lemma}
\newtheorem{proposition}[theorem]{Proposition}
\theoremstyle{definition}
\newtheorem{definition}[theorem]{Definition}
\newtheorem{fact}[theorem]{Fact}
\theoremstyle{remark}
\newtheorem{remark}[theorem]{Remark}
\newtheorem{example}[theorem]{Example}
\numberwithin{equation}{section}
\newcommand{\Leb}{\mathrm{Leb}}
\newcommand{\conv}{\mathrm{conv}}
\newcommand{\cl}{\mathrm{cl}}
\newcommand{\polar}{\mathrm{polar}}
\newcommand{\bipolar}{\mathrm{bipolar}}
\newcommand{\solid}{\mathrm{solid}}
\newcommand{\ud}{\,\mathrm{d}}
\newcommand{\e}{\mathrm{e}}
\DeclareMathOperator*{\esssup}{ess\,sup}
\begin{document}
%
%
%
%
\title[Perpetual consumption duality]{Duality for optimal consumption
  under no unbounded profit with bounded risk}

\author[Michael Monoyios]{Michael Monoyios} 

\address{Mathematical Institute \\ 
University of Oxford \\
Radcliffe Observatory Quarter\\
Woodstock Road\\
Oxford OX2 6GG\\
UK}

\email{monoyios@maths.ox.ac.uk}

\date{\today}

\thanks{The author would like to thank Anastasiya Tanana for helpful
comments, as well as two anonymous referees and an Associate Editor
and Editor for constructive comments that improved the paper.}

\begin{abstract}

We give a definitive treatment of duality for optimal consumption
over the infinite horizon, in a semimartingale incomplete market
satisfying no unbounded profit with bounded risk (NUPBR). Rather than
base the dual domain on (local) martingale deflators, we use a class
of supermartingale deflators such that deflated wealth plus cumulative
deflated consumption is a supermartingale for all admissible
consumption plans. This yields a strong duality, because the enlarged
dual domain of processes dominated by deflators is naturally closed,
without invoking its closure. In this way we automatically reach the
bipolar of the set of deflators. We complete this picture by proving
that the set of processes dominated by local martingale deflators is
dense in our dual domain, confirming that we have identified the
natural dual space. In addition to the optimal consumption and
deflator, we characterise the optimal wealth process. At the optimum,
deflated wealth is a supermartingale and a potential, while deflated
wealth plus cumulative deflated consumption is a uniformly integrable
martingale. This is the natural generalisation of the corresponding
feature in the terminal wealth problem, where deflated wealth at the
optimum is a uniformly integrable martingale. We use no constructions
involving equivalent local martingale measures. This is natural, given
that such measures typically do not exist over the infinite horizon
and that we are working under NUPBR, which does not require their
existence. The structure of the duality proof reveals an interesting
feature compared with the terminal wealth problem. There, the dual
domain is $L^{1}$-bounded, but here the primal domain has this
property, and hence many steps in the duality proof show a marked
reversal of roles for the primal and dual domains, compared with the
proofs of Kramkov and Schachermayer \cite{ks99,ks03}.

\end{abstract}

\maketitle

\tableofcontents

\section{Introduction}
\label{sec:intro}

This paper gives a definitive treatment of duality for the optimal
consumption and investment problem for an agent maximising cumulative
discounted utility from consumption over an infinite horizon. This
problem has a long history, first being solved in a constant
coefficient complete Brownian model by Merton \cite{merton69} using
dynamic programming methods. The same model was studied in great
detail, considering also issues such as non-negativity constraints on
consumption, and bankruptcy, by Karatzas et al \cite{kal86} using
similar methods. Duality methods for a finite horizon version of the
problem to maximise utility from consumption and terminal wealth, in a
complete It\^o process market, were developed by Karatzas, Lehoczky
and Shreve \cite{kls87}. The infinite horizon problem for utility from
consumption in a complete It\^o market was treated via duality methods
by Huang and Pag\`es \cite{hp92}. In an incomplete It\^o market,
duality methods for the finite horizon problem of maximising utility
from terminal wealth were developed in a seminal paper by Karatzas et
al \cite{klsx91}. These methods were extended to finite horizon
problems including consumption and portfolio constraints (including
market incompleteness) by Cvitani\'{c} and Karatzas \cite{ck92} and
Shreve and Xu \cite{sx92}. Duality methods in an incomplete market
with general semimartingale asset prices were then developed for the
terminal wealth problem in a masterly contribution by Kramkov and
Schachermayer \cite{ks99,ks03}.

The finite horizon consumption problem in a semimartingale market,
under the no-arbitrage condition of No Free Lunch with Vanishing Risk
(NFLVR), was given a dual treatment by Karatzas and \v{Z}itkovi\'{c}
\cite{kz03} (who also incorporated a random endowment), building on
earlier work by \v{Z}itkovi\'{c} \cite{z02}. The infinite horizon
consumption problem remained an open problem to treat via duality
methods until fairly recently, when a significant advance was made by
Mostovyi \cite{most15}. Working under NFLVR, Mostovyi \cite{most15}
was able to show that most of the tenets of duality theory for utility
maximisation, as espoused by Kramkov and Schachermayer
\cite{ks99,ks03}, do hold true for the infinite horizon consumption
problem. This was extended by Chau et al \cite{chauetal17} to cover
the case where the no-arbitrage condition was weakened to the No
Unbounded Profit with Bounded Risk (NUPBR) condition, so that one need
not insist on the existence of equivalent local martingale measures
(ELMMs). This is a general observation, first made in explicit terms
by Karatzas and Kardaras \cite{kk07}, that all one needs for a
well-posed utility maximisation problem is the existence of a suitable
class of dual variables, or deflators, which need not be densities of
ELMMs, and which multiplicatively deflate primal variables to create
local martingales or supermartingales. This fact was implicit in
Karatzas et al \cite{klsx91}, which did not use ELMMs at all, and to
some extent was an underlying theme in the work of Kramkov and
Schachermayer \cite{ks99,ks03} who, despite working under NFLVR (so
ELMMs were definitively assumed to exist), expanded the dual domain to
a class of supermartingale deflators and found counter-examples where
the dual minimiser was not the density of an ELMM. We note that in
both Mostovyi \cite{most15} and Chau et al \cite{chauetal17} the
formulation could encompass other problems, by varying the measure (a
stochastic clock) that was used to aggregate utility from consumption
over time. (These papers also incorporated the stochastic clock into
the wealth dynamics, which amounts to a change of variable from a
traditional consumption rate, and we shall say more on this below.) By
varying this clock the approach in \cite{most15,chauetal17} can treat
the finite horizon utility from consumption problem, the terminal
wealth problem, as well as the finite horizon problem of utility from
both consumption and terminal wealth.

Given the above history, it is as well to point out where there is
still work to do and, as this is the focus of this paper, let us now
turn to this and describe the contribution.

First, we obtain a stronger duality statement than in Mostovyi
\cite{most15} and Chau et al \cite{chauetal17}, in the following
sense. In \cite{most15} and \cite{chauetal17} the initial dual domain
was based either on martingale deflators (in \cite{most15}, working
under NFLVR) or on local martingale deflators (in \cite{chauetal17},
working under NUPBR). The dual domain was then defined as the closure
(in an appropriate topology) of processes dominated by some element of
the set of deflators in question. The authors of
\cite{most15,chauetal17} were forced into taking the aforementioned
closure in order to obtain a closed dual domain, which could then be
shown to be the bipolar of the original domain of deflators, and thus
also the polar of the primal domain. Contrast this with the result of
Kramkov and Schachermayer \cite[Lemma 4.1]{ks99} in the terminal
wealth problem. There, one begins with a dual domain of
supermartingales (such that deflated admissible wealth is a
supermartingale for all strategies), then enlarges this domain to
consider random variables dominated by the terminal value of some
deflator. No closure is taken, but it is nevertheless shown that the
enlarged dual domain is naturally closed, so one reaches the bipolar
of the set of deflators, and perfect bipolarity between the primal and
dual domains is achieved. Herein lies our first contribution: we are
able to extend the prescription of Kramkov and Schachermayer
\cite{ks99}. First, we base our dual domain on a set of
supermartingales, this time such that \emph{deflated wealth plus
  cumulative deflated consumption is a supermartingale} for all
admissible consumption plans. Then, again in the spirit of
\cite{ks99}, we enlarge the dual domain to encompass processes
dominated by the deflators. Crucially, no closure needs to be
taken. We show that the enlarged dual domain is closed in the
appropriate topology, so that we reach the bipolar of the original
domain of supermartingales and obtain the duality between the primal
and dual optimisation problems without having to take a closure in
defining the enlarged dual domain. Finally, we show that our enlarged
dual domain coincides with the closure of processes dominated by local
martingale deflators, that is, the dual domain used in Chau et al
\cite{chauetal17}. Thus, the set of processes dominated by local
martingale deflators is dense in our dual domain. This result
(Proposition \ref{prop:dense}) is confirmation that we have chosen the
dual domain in just the right way to achieve a strong duality
statement. The underlying bipolarity results are obtained by
exploiting the Stricker and Yan \cite{sy98} version of the Optional
Decomposition Theorem (ODT), which uses deflators rather than ELMMs,
so we do not use any constructions whatsoever involving equivalent
measures. We shall say more on this aspect very shortly.

The second strengthening of the results in Mostovyi \cite{most15} and
Chau et al \cite{chauetal17} is fundamental. In addition to the
optimal consumption, we characterise the associated optimal wealth
process (and by extension the optimal strategy). Somewhat
surprisingly, neither of \cite{most15} or \cite{chauetal17} (or the
earlier works \cite{z02,kz03}) made any statement whatsoever regarding
the optimal wealth. This turns out to be a satisfying analysis which
shows shows that, at the optimum, deflated wealth is a supermartingale
and also a potential, decaying to zero, while deflated wealth plus
cumulative deflated consumption at the optimum is a uniformly
integrable martingale. This is natural, though to the best of our
knowledge has not been shown before in a general semimartingale
infinite horizon consumption problem. It is the natural generalisation
of the Kramkov and Schachermayer \cite{ks99,ks03} terminal wealth
result that, at the optimum, deflated wealth is transformed from a
supermartingale to a uniformly integrable martingale.

The next aspect of our work concerns the use of, or more accurately
the avoidance of, any constructions involving ELMMs. We are working on
an infinite horizon, and it is well known that in this case hardly any
models will admit ELMMs, because the candidate change of measure
density is not a uniformly integrable martingale over the infinite
timescale. While this can be dealt with, by (for example) eliminating
the tail $\sigma$-algebra in some way when wishing to use equivalent
measures restricted to a finite horizon $\sigma$-field, we bypass any
such pitfalls by exploiting the Stricker and Yan \cite{sy98} version
of the ODT and so avoiding ELMMs. As we are working under NUPBR, where
ELMMs might not exist at all (a case is point is a stock driven by a
three-dimensional Bessel process, which we use in an example of a
utility maximisation problem in our framework in Section
\ref{sec:example}), it is natural to construct proofs which avoid any
use of ELMMs if possible, and this is what we do.

Finally, the proof of the main duality theorem in our approach reveals
an interesting structure of the consumption problem compared with the
terminal wealth problem. In contrast to \cite{most15,chauetal17}, we
do not incorporate a stochastic clock into the wealth dynamics, so our
consumption rate is with respect to calendar time. The change of
variable used in \cite{most15,chauetal17} was convenient in those
papers, as it allowed the authors to assume that a constant
``consumption'' stream was allowed. This amounts to, in essence, a
decaying real consumption rate. (It is manifestly the case that with a
true consumption rate, one cannot guarantee being able to consume at a
constant rate for ever.) By choosing to work with the real consumption
rate, two aspects of the problem's underlying structure emerge. First,
it naturally leads to the correct supermartingale constraint that one
should apply at the outset: that deflated wealth plus cumulative
deflated consumption is a supermartingale. This leads to the correct
choice of dual domain. Second, it reveals a role reversal for the
primal and dual domains compared with the terminal wealth problem of
Kramkov and Schachermayer \cite{ks99,ks03}. In \cite{ks99,ks03},
because the constant wealth $X^{0}\equiv 1$ is admissible, the dual
domain in bounded in $L^{1}(\mathbb{P})$. But in the consumption
problem it is the \emph{primal} domain that is bounded in $L^{1}$
(with respect to an appropriate measure).
This role reversal of the primal and dual domains then manifests
itself in the proofs. In numerous steps of the program, a method that
works for the primal domain in \cite{ks99,ks03} is diverted to the
dual domain here, and vice versa. A prime example is the proof of
conjugacy of the value functions: in the terminal wealth problem one
creates a compact subset of the primal domain so as to apply the
minimax theorem, and proves that the dual value function is the convex
conjugate of the primal value function. Here, instead, one creates the
compact subset in the dual domain, and applies a transformed minimax
theorem (replacing maximisation with minimisation, and a concave
function with a convex one, and so on) and proves that the primal
value function is the concave conjugate of the dual value
function. There are many other instances of this role reversal, which
will be pointed out in the course of the proof of the duality theorems
in Section \ref{sec:pdt}. In view of these facets, we choose to give a
complete and self-contained treatment of the duality proofs in their
entirety.

The rest of the paper is structured as follows. In Section
\ref{sec:market} we describe the financial market, the admissible
consumption plans, and the class of dual variables (consumption
deflators), alongside the alternatives such as local martingale
deflators. In Section \ref{sec:cpd} we formulate the primal and dual
problems. The main duality theorem (Theorem \ref{thm:consd}) is given
in Section \ref{sec:dt}. In Section \ref{sec:abpd} we give an abstract
version of the bipolarity relations (Proposition \ref{prop:abp})
between suitably defined primal and dual domains, an associated
abstract version of the duality theorem (Theorem \ref{thm:adt}), and
state Proposition \ref{prop:dense}, that the set of processes
dominated by local martingale deflators is dense in the set of
processes dominated by consumption deflators. The bipolarity relations
are proven in Section \ref{sec:bcbpr} by considering the infinite
horizon budget constraint for consumption, and showing that it is both
a necessary and sufficient condition for admissibility. Here, we
complete the discussion on ramifications of using an alternative
choice of dual domain based on local martingale deflators, and prove
Proposition \ref{prop:dense}. In Section \ref{sec:pdt} we prove the
abstract duality, then establish Proposition \ref{prop:owp}
characterising the optimal wealth process, followed by the concrete
duality theorem. In Section \ref{sec:example} we give an example with
power utility and a stock driven by a three-dimensional Bessel
process, with stochastic volatility and correlation, for which the
dual minimiser is a strict local martingale, fitting well into our
earlier program.

\section{The market}
\label{sec:market}

We have an infinite horizon financial market containing $d$ stocks and
a cash asset, on a complete stochastic basis
$(\Omega,\mathcal{F},\mathbb{F}:=(\mathcal{F}_{t})_{t\geq
  0},\mathbb{P})$, with the filtration $\mathbb{F}$ satisfying the
usual conditions of right-continuity and augmentation with the
$\mathbb{P}$-null sets of $\mathcal{F}$. We shall use the cash asset
as num\'eraire, so work with discounted quantities. The (discounted)
stock price vector is given by a positive $d$-dimensional c\`adl\`ag
semimartingale $S=(S^{1},\ldots,S^{d})$.

An agent with initial capital $x>0$ can trade the stocks and cash and
may consume at a non-negative c\`adl\`ag adapted rate
$c=(c_{t})_{t\geq 0}$, assumed to satisfy the minimal condition
$\int_{0}^{t}c_{s}\ud s<\infty$, almost surely, $\forall\,t\geq
0$. The associated wealth process is $X$, given by
\begin{equation}
X_{t} = x + (H\cdot S)_{t} - \int_{0}^{t}c_{s}\ud s, \quad t\geq
0, \quad x>0.
\label{eq:wealth}
\end{equation}
In \eqref{eq:wealth}, $(H\cdot S)$ denotes the stochastic integral and
the trading strategy $H$ is a predictable $S$-integrable vector
process for the number of units of each stock held. Write
\begin{equation*}
C_{t} :=\int_{0}^{t}c_{s}\ud s, \quad t\geq 0,
\end{equation*}
for the non-decreasing cumulative consumption process. Then, with
\begin{equation}
X^{0} := x + (H\cdot S)
\label{eq:sfw}  
\end{equation}
denoting the wealth process of a self-financing portfolio
corresponding to strategy $H$, we have the decomposition
\begin{equation}
X = X^{0} - C.
\label{eq:wdecomp}  
\end{equation}

\subsection{Admissible consumption plans}
\label{subsec:acp}

We will assume solvency at all times, so $X\geq 0$ almost surely in
\eqref{eq:wealth}. In this case, for a given $x>0$, we call the pair
$(H,c)$ (or $(X,c)$) an $x$-admissible investment-consumption
strategy. If, for a consumption process $c$ we can find a predictable
$S$-integrable process $H$ such that $(H,c)$ is an $x$-admissible
investment-consumption strategy, then we say that $c$ is an
$x$-admissible consumption process or, briefly, an admissible
consumption plan. Denote the set of $x$-admissible consumption plans
by $\mathcal{A}(x)$:
\begin{equation}
\mathcal{A}(x) := \left\{c\geq 0:\exists\, H \, \mbox{such
that} \, X:=x+(H\cdot S) - \int_{0}^{\cdot}c_{s}\ud s\geq
0,\,\mbox{a.s}\right\}, \quad x>0. 
\label{eq:Ax}
\end{equation}
For $x=1$ we write $\mathcal{A}\equiv\mathcal{A}(1)$, and we note that
$\mathcal{A}(x)=x\mathcal{A}$ for $x>0$. We observe that $\mathcal{A}$
is a convex set.

For $c\equiv 0$, the wealth process is that of a self-financing
portfolio, with wealth process $X^{0}$ as in \eqref{eq:sfw}. Define
$\mathcal{X}(x)$ as the set of almost surely non-negative
self-financing wealth processes with initial value $x>0$:
\begin{equation*}
\mathcal{X}(x) := \left\{X^{0}:X^{0}=x+(H\cdot S)\geq
  0,\,\mbox{a.s.}\right\}, \quad x>0.
\end{equation*}
As for the admissible consumption plans, we write
$\mathcal{X}\equiv\mathcal{X}(1)$, with $\mathcal{X}(x)=x\mathcal{X}$
for $x>0$, and we note that $\mathcal{X}$ is a convex set.

Given the wealth decomposition in \eqref{eq:wdecomp}, an equivalent
characterisation of the admissible consumption plans is that there
exists a self-financing wealth process which dominates cumulative
consumption (such a wealth process will necessarily be non-negative,
so will lie in $\mathcal{X}(x)$).

\subsection{Deflators for consumption plans}
\label{subsec:dfcp}

The dual domain for our infinite horizon utility maximisation problem
from inter-temporal consumption will be a specialisation of the one
used by Kramkov and Schachermayer \cite{ks99,ks03} for the terminal
wealth problem. We shall refer to the processes in the dual domain as
\emph{deflators} (or, sometimes, as \emph{consumption deflators}, if
we need to distinguish them from the corresponding deflators in the
absence of consumption).

Define the set of positive c\`adl\`ag processes such that deflated
wealth plus cumulative deflated consumption is a supermartingale for
every admissible consumption plan:
\begin{equation}
\mathcal{Y}(y) := \left\{Y>0,\,\mbox{c\`adl\`ag},\, Y_{0}=y:
\mbox{$XY+\int_{0}c_{s}Y_{s}\ud s$
is a supermartingale, $\forall\,c\in\mathcal{A}$}\right\}.
\label{eq:mcY}  
\end{equation}
Using $\mathcal{A}$ rather than $\mathcal{A}(x)$ in \eqref{eq:mcY} is
without loss of generality, given $\mathcal{A}(x)=x\mathcal{A},\,x>0$.
As usual, we write $\mathcal{Y}\equiv\mathcal{Y}(1)$ and we have
$\mathcal{Y}(y)=y\mathcal{Y}$ for $y>0$. In \eqref{eq:mcY}, the wealth
process $X$ is the one on the left-hand-side of \eqref{eq:wealth} or
\eqref{eq:wdecomp} with $x=1$, so incorporating consumption. We note
that, since $(X,c)\equiv(1,0)$ is an admissible consumption-investment
pair, each $Y\in\mathcal{Y}(y)$ is a supermartingale. The set
$\mathcal{Y}$ is easily seen to be convex.

In the case $c\equiv 0$ (which is admissible) we have that deflated
self-financing wealth is a supermartingale for any choice of
consumption deflator. Thus, the set $\mathcal{Y}(y)$ is included in
the set of \emph{wealth deflators} that were used by Kramkov and
Schachermayer \cite{ks99,ks03}. We shall write $Y^{0}$ to denote such
deflators, and the set of wealth deflators will be denoted by
$\mathcal{Y}^{0}(y)$:
\begin{equation*}
\mathcal{Y}^{0}(y) := \left\{Y^{0}>0,\,\mbox{c\`adl\`ag},\, Y^{0}_{0}=y:
\mbox{$X^{0}Y^{0}$ is a supermartingale, for all
$X^{0}\in\mathcal{X}$}\right\}.
\end{equation*}
As before, we write $\mathcal{Y}^{0}\equiv\mathcal{Y}^{0}(1)$ and we
have $\mathcal{Y}^{0}(y)=y\mathcal{Y}^{0}$ for $y>0$. Since
$X^{0}\equiv 1$ lies in $\mathcal{X}$, each
$Y^{0}\in\mathcal{Y}^{0}(y)$ is a supermartingale. The wealth
deflators are also known as \emph{supermartingale deflators}. Clearly,
the set $\mathcal{Y}^{0}$ is convex.

The set $\mathcal{Z}$ of \emph{local martingale deflators} (LMDs) is
composed of positive c\`adl\`ag local martingales $Z$ with unit
initial value such that deflated self-financing wealth $X^{0}Z$, for
all $X^{0}\in\mathcal{X}$, is a local martingale:
\begin{equation}
\mathcal{Z} := \left\{Z>0,\,\mbox{c\`adl\`ag},\, Z_{0}=1:
\mbox{$X^{0}Z$ is a local martingale, for all
$X^{0}\in\mathcal{X}$}\right\}. 
\label{eq:mcZ}
\end{equation}
Since the local martingale $X^{0}Z\geq 0$ for all
$X^{0}\in\mathcal{X}$, it is also a supermartingale and, since
$X^{0}\equiv 1$ lies in $\mathcal{X}$, each $Z\in\mathcal{Z}$ is also
a supermartingale. The set $\mathcal{Z}$ contains the density
processes of equivalent local martingale measures (ELMMs) in
situations where those would exist. We shall not, however, be using
any constructions involving ELMMs, even restricted to a finite
horizon. We shall say more on this in Section
\ref{subsubsec:completion}.

We observe that we have the inclusions
\begin{equation*}
\mathcal{Z} \subseteq \mathcal{Y} \subseteq \mathcal{Y}^{0}.  
\end{equation*}
(To see the first inclusion, recall the wealth decomposition in
\eqref{eq:wdecomp}. Applying the It\^o product rule to the process
$CZ$ gives
$XZ + \int_{0}^{\cdot}c_{s}Z_{s}\ud s = X^{0}Z -
\int_{0}^{\cdot}C_{s-}\ud Z_{s}$, the left-hand-side of which is
non-negative, with the right-hand-side a local martingale, so the
left-hand-side is a non-negative local martingale and thus a
supermartingale. Thus, any LMD $Z\in\mathcal{Z}$ also lies in
$\mathcal{Y}$.)

The standing no-arbitrage assumption we shall make is that the set of
supermartingale deflators is non-empty:
\begin{equation}
\mathcal{Y}^{0}\neq \emptyset.  
\label{eq:noarb}
\end{equation}
It is well-known that \eqref{eq:noarb} is equivalent to the no
unbounded profit with bounded risk (NUPBR) condition (also known as no
arbitrage of the first kind, or $\mathrm{NA}_{1}$), weaker than the no
free lunch with vanishing risk (NFLVR) condition, the latter requiring
the existence of ELMMs, which is often problematic over the infinite
horizon, as we discuss in Section \ref{subsubsec:completion}. There
are a number of equivalent characterisations of NUPBR, including that
the set $\mathcal{Z}$ of LMDs is non-empty: see Karatzas and Kardaras
\cite{kk07}, Kardaras \cite{k12}, Takaoka and Schweizer \cite{ts14}
and Chau et al \cite{chauetal17}, as well as the recent overview by
Kabanov, Kardaras and Song \cite{kks16}.

\subsubsection{Completion of the stochastic basis and equivalent
measures}
\label{subsubsec:completion}

As indicated earlier, we shall avoid completely any constructions
which invoke equivalent local martingale measures (ELMMs), even
restricted to a finite horizon. This is partly for aesthetic reasons:
since we work under NUPBR and assume only the existence of various
classes of deflators, which is the minimal requirement for well posed
utility maximisation problems, it seems natural to seek proofs which
use only deflators. This is what we do.

There is some mathematical rationale for avoiding ELMMs.  We are
working on an infinite horizon and have have assumed the usual
conditions. Thus, each element of the filtration
$\mathbb{F}=(\mathcal{F}_{t})_{t\geq 0}$ includes all the
$\mathbb{P}$-null sets of
$\mathcal{F}:=\sigma(\bigcup_{t\geq
  0}\mathcal{F}_{t})=:\mathcal{F}_{\infty}$, the tail
$\sigma$-algebra. So, ultimate events (as time $t\uparrow\infty$) of
measure zero are included in any finite time $\sigma$-field
$\mathcal{F}_{T},\,T<\infty$.

It is well-known that in such a scenario many financial models will
not admit an equivalent martingale measure over the infinite horizon,
because the candidate change of measure density is not a uniformly
integrable martingale. (This is true of the Black-Scholes model, see
Karatzas and Shreve \cite[Section 1.7]{ks98}.) One then has to proceed
with caution when invoking arguments which utilise equivalent
measures, by finding a consistent way to eliminate the tail
$\sigma$-algebra from the picture when restricting to a finite horizon
$T<\infty$.

One route forward is to not complete the space, as in Huang and
Pag\`es \cite{hp92}, in an infinite horizon consumption model in a
complete Brownian market. This is sound, though care is needed to
ensure that no results are used which require the usual hypotheses to
hold.

Another way to proceed, if one wishes to consider equivalent measures
restricted to a finite horizon $T<\infty$, is to augment the space
with null events of a $\sigma$-field generated over a finite horizon
at least as big as $T$, that is by
$\sigma\left(\bigcup_{0\leq t\leq T^{\prime}}\mathcal{F}_{t}\right)$,
for some $0\leq T\leq T^{\prime}<\infty$. This can be done in a
consistent way, and relies on an application of Carath\'eodory's
extension theorem (Rogers and Williams \cite[Theorem
II.5.1]{rwvol1}). One can then obtain equivalent measures in an
infinite horizon model when restricting such measures to any finite
horizon. This procedure is carried out in a Brownian filtration in
Karatzas and Shreve \cite[Section 1.7]{ks98}, with a cautionary
example \cite[Example 1.7.6]{ks98}, showing that augmenting the
$\sigma$-field generated by Brownian motion over any finite horizon
with null sets of the corresponding tail $\sigma$-algebra would render
invalid the construction of equivalent measures, even over a finite
horizon.

The message is that one has to be careful in using any constructions
involving equivalent measures, even restricted to a finite horizon,
when working in infinite horizon financial model.

We avoid having to invoke such fixes, since we avoid all constructions
involving ELMMs. In particular, in Section \ref{sec:bcbpr} we
establish bipolarity results between the primal and dual domains using
only the Stricker and Yan \cite{sy98} version of the optional
decomposition theorem, relying on deflators rather than equivalent
measures.


\section{The consumption problem and its dual}
\label{sec:cpd}

Let $U:\mathbb{R}_{+}\to\mathbb{R}$ be a utility function, strictly
concave, strictly increasing, continuously differentiable on
$\mathbb{R}_{+}$ and satisfying the Inada conditions
\begin{equation}
\lim_{x\downarrow 0}U^{\prime}(x) = +\infty, \quad
\lim_{x\to\infty}U^{\prime}(x) = 0.
\label{eq:inada}
\end{equation}
To guarantee a well-posed consumption problem, one could also impose
here the reasonable asymptotic elasticity condition of Kramkov and
Schachermayer \cite{ks99}:
\begin{equation}
\mathrm{AE}(U) :=
\limsup_{x\to\infty}\frac{xU^{\prime}(x)}{U(x)} < 1.   
\label{eq:AEU}
\end{equation}
The condition in \eqref{eq:AEU} was shown in \cite{ks99} to be a
minimal condition, in an arbitrary market model, to guarantee that the
terminal wealth utility maximisation problem satisfied all the tenets
of a general duality theory. It was later shown, again by Kramkov and
Schachermayer \cite{ks03}, that if one instead assumes a market model
such that the weak condition of a finite dual value function holds,
then this alternative set-up gives a consistent duality
theory. Furthermore, finiteness of the dual problem, along with a
minimal condition on the primal value function (to be finitely valued
for at least one value of initial capital) so as to exclude a trivial
problem, implies the reasonable asymptotic elasticity condition. For
this reason, we shall follow the spirit of \cite{ks03} and just impose
weak finiteness conditions on the primal and dual value functions so
as to exclude trivial problems, and then later make the (standard)
remark in the style of \cite[Note 2]{ks03} on how these are consistent
with \eqref{eq:AEU} (see Remark \ref{rem:aeu}).

Let $\kappa:(0,\infty)\to\mathbb R_{+}$ be a positive finite measure
which will determine how utility of consumption is discounted through
time, assumed to be almost surely absolutely continuous with respect
to Lebesgue measure and satisfying
\begin{equation}
\frac{\ud\kappa_{t}}{\ud t} \leq 1, \quad \mbox{almost surely},
\quad t\geq 0, \quad
\mathbb{E}\left[\int_{0}^{\infty}c_{t}\ud\kappa_{t}\right] \leq
K<\infty,\,\forall\, c\in\mathcal{A}, 
\label{eq:kappa}
\end{equation}
for some constant $K>0$. For later use, define the positive process
$\gamma=(\gamma_{t})_{t\geq 0}$ as the reciprocal of
$(\ud\kappa_{t}/\ud t)_{t\geq 0}$:
\begin{equation}
\gamma_{t} := \left(\frac{\ud\kappa_{t}}{\ud t}\right)^{-1}, \quad
t\geq 0.
\label{eq:gamma}
\end{equation}

Define the primal value function from optimal consumption by
\begin{equation}
u(x) := \sup_{c\,\in\mathcal{A}(x)}\mathbb{E}\left[
\int_{0}^{\infty}U(c_{t})\ud\kappa_{t}\right], \quad x>0.
\label{eq:vf}
\end{equation}
To exclude a trivial problem, we shall assume throughout that
$u(x)>-\infty$ for all $x>0$. This is guaranteed by the weak condition
that $\mathbb{E}\left[\int_{0}^{\infty}
\min[0,U(c_{t})]\ud\kappa_{t}\right]>-\infty$. 

The supremum in \eqref{eq:vf} is written as one over consumption
processes. This should not obscure the fact that an optimal
consumption process must also determine an associated optimal wealth
process (equivalently an optimal trading strategy). This is clear from
the definition in \eqref{eq:Ax}, where the consumption process is
defined with reference to the associated investment strategy. Indeed,
in traditional formulations of the problem, this is acknowledged in
the notation by writing the value function as a supremum over a pair
of controls involving either $(X,c)$ or $(H,c)$. Our goal is to find
an optimal consumption process $\widehat{c}$, but to also characterise
the associated optimal wealth process $\widehat{X}$. Note that no such
characterisation of the optimal wealth process was given in either of
Mostovyi \cite{most15} or Chau et al \cite{chauetal17}. This turns out
to be an interesting feature of the analysis, with a nice result
(Proposition \ref{prop:owp}) incorporated into the main duality
theorem: at the optimum, the deflated wealth process is a
supermartingale and a potential, while the deflated wealth plus
cumulative deflated consumption is a uniformly integrable
martingale. These results are the natural extensions of the result for
the terminal wealth problem in Kramkov and Schachermayer
\cite{ks99,ks03}, in which optimal deflated wealth is a uniformly
integrable martingale.

\begin{example}[Infinite horizon discounted utility from consumption]
\label{ex:ihdufc}

The example we are primarily interested in is the case where
$\ud\kappa_{t}=\e^{-\alpha t}\ud t$, for some positive impatience
parameter $\alpha>0$ (which could also be made stochastic).  In this
case we have $\gamma_{t}=\e^{\alpha t},\,t\geq 0$, which is the factor
which inflates the natural deflators in the dual problem, as we shall
see.

The problem in \eqref{eq:vf} is then
$\mathbb{E}\left[\int_{0}^{\infty}\e^{-\alpha t}U(c_{t})\ud
  t\right]\to\max!$ We shall illustrate the solution of such a
problem with a stock driven by a three-dimensional Bessel process, and
with stochastic volatility and correlation, in Example
\ref{examp:3dbessel}.

\end{example}

\subsection{On stochastic clocks}
\label{subsec:sc}

We discuss briefly some variations of the problem \eqref{eq:vf} which
can be incorporated into our framework (but which are not the main
focus of our analysis).

In Mostovyi \cite{most15} and Chau et al \cite{chauetal17} the measure
$\kappa$ is taken to be a \emph{stochastic clock}, that is, a
non-decreasing, c\`adl\`ag adapted process satisfying
\begin{equation*}
\kappa_{0}=0, \quad \kappa_{\infty} \leq K <\infty,\,\mbox{a.s.},
\quad \mathbb{P}[\kappa_{\infty}>0]>0,
\end{equation*}
for some finite positive constant $K$. As shown by Mostovyi
\cite[Examples 2.5--2.9]{most15}, by appropriate choice of the
stochastic clock a number of different problems can be included within
the framework of \eqref{eq:vf}, such as the terminal wealth problem,
the finite horizon consumption problem, the finite horizon consumption
and terminal wealth problem, as well as the infinite horizon problem
in Example \ref{ex:ihdufc}. The same observation applies to our
problem, provided we choose the measure $\kappa$ to be a stochastic
clock of the appropriate type. Our primary focus, however, is to give
a definitive treatment of the traditional infinite horizon discounted
utility of consumption problem.

Note also that in \cite{most15,chauetal17}, the stochastic clock was
incorporated into the wealth dynamics: for some process
$\bar{c}$, \eqref{eq:wealth} was replaced by
\begin{equation*}
X_{t} = x + (H\cdot S)_{t} - \int_{0}^{t}\bar{c}_{s}\ud\kappa_{s},
\quad t\geq 0, \quad x>0.
\end{equation*}
Thus, the process $\bar{c}$ (let us call it a pseudo-consumption rate,
to distinguish it from our variable) of those papers involves a change
of variable from our consumption rate. The approach in
\cite{most15,chauetal17} allows for a constant positive
pseudo-consumption rate, which can sometimes be mathematically
convenient. With a true consumption rate and an infinite horizon, a
constant consumption plan is not possible. Each approach can be
converted to the other, as we now illustrate.

For concreteness, suppose the measure $\kappa$ is as in
\eqref{eq:kappa}. The pseudo-consumption rate $\bar{c}$ is then
related to the real consumption rate by
$\bar{c}_{t}=\gamma_{t}c_{t},\,t\geq 0$. The problems considered in
\cite{most15,chauetal17} are of the form
\begin{equation}
\mathbb{E}\left[\int_{0}^{\infty}
\overline{U}(t,\gamma_{t}c_{t})\ud\kappa_{t}\right] \to \max\,!
\label{eq:mostproblem}
\end{equation}
for some time dependent utility function
$\overline{U}(\cdot,\cdot)$. (This utility was also stochastic in
\cite{most15,chauetal17}, but this makes no difference to the argument
here.) To make the problem in \eqref{eq:mostproblem} equivalent to our
problem in \eqref{eq:vf} requires
$\overline{U}(t,\gamma_{t}c_{t})=U(c_{t})$ almost surely for all
$t\geq 0$, and this is easy to satisfy. For example, if
$\gamma_{t}=\e^{\alpha t},\,t\geq 0$ and $U(\cdot)=\log(\cdot)$ is
logarithmic utility, we choose
$\overline{U}(t,\bar{c})=\log(\bar{c})-\alpha t$. If
$U(c)=c^{p}/p,\,p<1,p\neq 0$ is power utility, then we choose
$\overline{U}(t,\bar{c})=\e^{-\alpha pt}\bar{c}^{p}/p$. Hence, we can
always restore a problem of the form in \eqref{eq:vf} (equivalent to
the problems in \cite{most15,chauetal17} up to an additive or
multiplicative constant, typically).

We choose in this work to adopt the classical definition of
consumption. Part of our reason for doing so is to make very
transparent the underlying supermartingale constraint on deflated
wealth plus cumulative deflated consumption that one must apply, if
one is to show how the program of Kramkov and Schachermayer
\cite{ks99,ks03}, suitably modified and extended, creates a natural
procedure for characterising the classical consumption duality. As
will be seen, this reveals an interesting role reversal of the primal
and dual domains in many steps of the proofs, compared with the
terminal wealth problem, because it turns out that the primal domain
in the consumption problem is $L^{1}$-bounded (with respect to a
suitable measure), but it is the dual domain that has this property in
the terminal wealth case.

\begin{remark}[Discounted units]
\label{rem:du}

There is no loss of generality in working with discounted quantities
(so in effect a zero interest rate). To see this, suppose instead that
we have a positive interest rate process $r=(r_{t})_{t\geq 0}$, so the
cash asset with initial value $1$ has positive price process
$A_{t}=\e^{\int_{0}^{t}r_{s}\ud s},\,t\geq 0$. If $\tilde{c}$ is the
un-discounted consumption process, then the problem in \eqref{eq:vf} is
$\mathbb{E}\left[
  \int_{0}^{\infty}U\left(\tilde{c}_{t}/A_{t}\right)\ud\kappa_{t}\right]
\to \max!$ We can define another utility function
$\widetilde{U}:\mathbb{R}^{2}_{+}\to\mathbb{R}$ such that
$\widetilde{U}(A_{t},\tilde{c}_{t})=U(\tilde{c}_{t}/A_{t}),\,t\geq 0$,
and the problem in \eqref{eq:vf} can then be transported to one in
terms of the raw (un-discounted) consumption rate. For example, if
$\gamma_{t}=\e^{\alpha t},\,t\geq 0$ and $U(\cdot)=\log(\cdot)$ is
logarithmic utility, we choose
$\widetilde{U}(A,\tilde{c})=\log(\tilde{c})-\log(A)$. If
$U(c)=c^{p}/p,\,p<1,p\neq 0$ is power utility, then we choose
$\widetilde{U}(A,\tilde{c})=A^{-p}\tilde{c}^{p}/p$.

\end{remark}

\begin{remark}[Stochastic utility]
\label{rem:su}

In the problem \eqref{eq:vf} we can allow $U(\cdot)$ to be stochastic,
so to also depend on $\omega\in\Omega$ in an optional way, as done by
Mostovyi \cite{most15}. The analysis is unaffected, as the reader can
easily verify, so one can read the proofs with a stochastic utility in
mind and with dependence on $\omega\in\Omega$ suppressed throughout.

\end{remark}

\subsection{The dual problem}
\label{subsec:dual}

Let $V:\mathbb{R}_{+}\to\mathbb{R}$ denote the convex conjugate of
$U(\cdot)$, defined by
\begin{equation*}
V(y) := \sup_{x>0}[U(x)-xy], \quad y>0.
\end{equation*}
The map $y\mapsto V(y),\,y>0$, is strictly convex, strictly decreasing,
continuously differentiable on $\mathbb{R}_{+}$, $-V(\cdot)$ satisfies
the Inada conditions, and we have the bi-dual relation
\begin{equation*}
U(x) := \inf_{y>0}[V(y)+xy], \quad x>0,
\end{equation*}
as well as
$V^{\prime}(\cdot)=-I(\cdot)=-(U^{\prime})^{-1}(\cdot)$,
where $I(\cdot)$ denotes the inverse of marginal utility. In
particular, we have the inequality
\begin{equation}
V(y) \geq U(x)-xy, \quad \forall\, x,y>0, \quad \mbox{with equality
iff $U^{\prime}(x)=y$}.  
\label{eq:VUbound}
\end{equation}

For each consumption deflator $Y\in\mathcal{Y}(y)$ defined in
\eqref{eq:mcY}, define a process $Y^{\gamma}$ by
\begin{equation}
Y^{\gamma}_{t} := \gamma_{t}Y_{t}, \quad t\geq 0,
\label{eq:Ybar}
\end{equation}
where $\gamma$ was defined in \eqref{eq:gamma}. For later use, denote
the set of such processes by $\widetilde{\mathcal{Y}}(y)$:
\begin{equation}
\widetilde{\mathcal{Y}}(y) := \left\{Y^{\gamma}:
\mbox{$Y^{\gamma}$ is given by \eqref{eq:Ybar}, with
  $Y\in\mathcal{Y}(y)$}\right\}, \quad y>0, 
\label{eq:Ycbar}
\end{equation}
so the set $\widetilde{\mathcal{Y}}(y)$ is in one-to-one correspondence
with the set $\mathcal{Y}(y)$ of consumption deflators. As usual, we
write $\widetilde{\mathcal{Y}}\equiv\widetilde{\mathcal{Y}}(1)$, and we
have $\widetilde{\mathcal{Y}}(y)=y\widetilde{\mathcal{Y}}$ for
$y>0$. 

The dual problem to \eqref{eq:vf} has value function
$v:\mathbb{R}_{+}\to\mathbb{R}$ defined by
\begin{equation}
v(y) := \inf_{Y\in\mathcal{Y}(y)}\mathbb{E}\left[
\int_{0}^{\infty}V(\gamma_{t}Y_{t})\ud\kappa_{t}\right], \quad
y>0.    
\label{eq:dvf}
\end{equation}
We shall assume throughout that $v(y)<\infty$ for all $y>0$.

\section{The duality theorem}
\label{sec:dt}

Here is the main result, the perpetual consumption duality. It is
somewhat stronger and mathematically more robust than previous
results. We describe how the theorem differs from, and in which senses
it strengthens, existing results, after presenting the theorem.

\begin{theorem}[Perpetual consumption duality under NUPBR]
\label{thm:consd}

Define the primal consumption problem by \eqref{eq:vf} and the
corresponding dual problem by \eqref{eq:dvf}. Assume \eqref{eq:noarb},
\eqref{eq:inada} and that
\begin{equation*}
u(x)>-\infty,\,\forall x>0, \quad v(y)<\infty,\,\forall y>0.   
\end{equation*}
Then:

\begin{itemize}

\item[(i)] $u(\cdot)$ and $v(\cdot)$ are conjugate:
\begin{equation*}
v(y) = \sup_{x>0}[u(x)-xy], \quad u(x) =
\inf_{y>0}[v(y)+xy], \quad x,y>0.
\end{equation*}

\item[(ii)] The primal and dual optimisers
$\widehat{c}(x)\in\mathcal{A}(x)$ and
$\widehat{Y}(y)\in\mathcal{Y}(y)$ exist and are unique, so that
\begin{equation*}
u(x) =
\mathbb{E}\left[\int_{0}^{\infty}U(\widehat{c}_{t}(x))\ud\kappa_{t}\right],
\quad v(y) = \mathbb{E}\left[\int_{0}^{\infty}
V(\gamma_{t}\widehat{Y}_{t}(y))\ud\kappa_{t}\right], \quad x,y>0.
\end{equation*}

\item[(iii)] With $y=u^{\prime}(x)$ (equivalently,
$x=-v^{\prime}(y)$), the primal and dual optimisers are related
by
\begin{equation}
U^{\prime}(\widehat{c}_{t}(x)) = \gamma_{t}\widehat{Y}_{t}(y), \quad
\mbox{equivalently}, \quad \widehat{c}_{t}(x) =
-V^{\prime}(\gamma_{t}\widehat{Y}_{t}(y)), \quad t\geq 0,
\label{eq:pdc}
\end{equation}
and satisfy
\begin{equation}
\mathbb{E}\left[\int_{0}^{\infty}\widehat{c}_{t}(x)\widehat{Y}_{t}(y)\ud
t\right] = xy.
\label{eq:oihbc}
\end{equation}
Moreover, the associated optimal wealth process $\widehat{X}(x)$ is
given by
\begin{equation}
\widehat{X}_{t}(x)\widehat{Y}_{t}(y)  =
\mathbb{E}\left[\left.\int_{t}^{\infty}\widehat{c}_{s}(x)\widehat{Y}_{s}(y)\ud
s\right\vert\mathcal{F}_{t}\right], \quad t\geq 0,
\label{eq:owp}
\end{equation}
and the process
$\widehat{X}(x)\widehat{Y}(y) +
\int_{0}^{\cdot}\widehat{c}_{s}(x)\widehat{Y}_{s}(y)\ud s$ is a
uniformly integrable martingale.

\item[(iv)] The functions $u(\cdot)$ and $-v(\cdot)$ are strictly
increasing, strictly concave, satisfy the Inada conditions, and for
all $x,y>0$ their derivatives satisfy
\begin{equation*}
xu^{\prime}(x) = \mathbb{E}\left[\int_{0}^{\infty}
U^{\prime}(\widehat{c}_{t}(x))\widehat{c}_{t}(x)\ud\kappa_{t}\right],
\quad yv^{\prime}(y) = \mathbb{E}\left[\int_{0}^{\infty}
V^{\prime}(\gamma_{t}\widehat{Y}_{t}(y))\widehat{Y}_{t}(y)\ud t\right].
\end{equation*}

\end{itemize}

\end{theorem}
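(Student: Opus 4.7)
The plan is to follow the Kramkov--Schachermayer blueprint, suitably modified to handle the consumption setting with its inherent role reversal between primal and dual. The first step is to reduce the whole theorem to an abstract bipolar framework. Using the bipolarity relations of Proposition \ref{prop:abp} (proved via the Stricker--Yan ODT and thus free of ELMMs), a non-negative consumption process $c$ lies in $\mathcal{A}(x)$ if and only if $\mathbb{E}\left[\int_{0}^{\infty}c_{t}Y_{t}\ud t\right]\leq xy$ for every $Y\in\mathcal{Y}(y)$, with an analogous characterisation on the dual side in terms of $\widetilde{\mathcal{Y}}(y)$. This embeds the problem in a pair of abstract sets of non-negative random variables on $(\Omega\times\mathbb{R}_{+},\mathbb{P}\otimes\kappa)$ and $(\Omega\times\mathbb{R}_{+},\mathbb{P}\otimes\Leb)$, for which the abstract duality Theorem \ref{thm:adt} applies.

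I would next prove the abstract duality, yielding the conjugacy (i), existence and uniqueness (ii) and the derivative formulas (iv). The decisive structural fact is that the primal domain is $L^{1}(\mathbb{P}\otimes\kappa)$-bounded thanks to \eqref{eq:kappa}, whereas the dual is only bounded in probability; this forces the minimax step to be carried out on a compact convex subset of the \emph{dual} rather than of the primal, with a transformed minimax identity producing $u(x)=\inf_{y>0}[v(y)+xy]$, the finiteness hypotheses on $u$ and $v$ ruling out degeneracies. Existence of optimisers then proceeds by a standard Koml\'os/Fatou procedure: extract forward convex combinations of a maximising (respectively minimising) sequence which converge almost everywhere, use Fatou together with $L^{1}(\ud\kappa)$-boundedness to pin down the value, and invoke closedness and solidity of the bipolar sets to confirm admissibility of the limit. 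Strict concavity of $U$ and strict convexity of $V$ then deliver uniqueness, and the derivative formulas in (iv) are obtained by differentiating the conjugacy together with the pointwise link between $\widehat{c}$ and $\widehat{Y}$.

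The new content sits in part (iii) and Proposition \ref{prop:owp}. The pointwise relation \eqref{eq:pdc} is the equality case of \eqref{eq:VUbound}: at the optimum the maximiser characterisation forces $U^{\prime}(\widehat{c}_{t})=\gamma_{t}\widehat{Y}_{t}$ almost surely. Multiplying $U(\widehat{c}_{t})-V(\gamma_{t}\widehat{Y}_{t})=\widehat{c}_{t}\gamma_{t}\widehat{Y}_{t}$ by $\ud\kappa_{t}$, integrating, taking expectation and using $\gamma_{t}\ud\kappa_{t}=\ud t$ gives $u(x)-v(y)=\mathbb{E}\left[\int_{0}^{\infty}\widehat{c}_{t}\widehat{Y}_{t}\ud t\right]$; with $y=u^{\prime}(x)$, conjugacy reduces the left side to $xy$, which is \eqref{eq:oihbc}. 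For the optimal wealth, set $M:=\widehat{X}\widehat{Y}+\int_{0}^{\cdot}\widehat{c}_{s}\widehat{Y}_{s}\ud s$; by definition of $\mathcal{Y}(y)$ this is a non-negative supermartingale with $M_{0}=xy$, so $\mathbb{E}[M_{\infty}]\leq xy$. But $M_{\infty}\geq\int_{0}^{\infty}\widehat{c}_{s}\widehat{Y}_{s}\ud s$, whose expectation equals $xy$ by \eqref{eq:oihbc}, so $\mathbb{E}[M_{\infty}]=xy=M_{0}$ and $\widehat{X}_{t}\widehat{Y}_{t}\to 0$ in $L^{1}$; therefore $M$ is a uniformly integrable martingale, $\widehat{X}\widehat{Y}$ is a potential, and conditioning yields \eqref{eq:owp}.

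The hardest part will be the minimax / role-reversal step in paragraph two: because the dual is not $L^{1}$-bounded, the construction of a compact convex subset on which the minimax theorem can legitimately be applied is delicate and differs from \cite{ks99,ks03}, requiring a solidification-plus-truncation procedure combined with a Koml\'os extraction. A secondary subtle point is the promotion of $M$ from supermartingale to uniformly integrable martingale; this fails without the sharp equality in \eqref{eq:oihbc} and is precisely what rules out the degenerate scenario in which $\widehat{X}\widehat{Y}$ does not decay to zero or where strict supermartingality leaks mass to infinity.
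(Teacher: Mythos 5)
Your proposal is correct and follows essentially the same route as the paper: reduction to the abstract bipolar framework of Proposition \ref{prop:abp} (via the Stricker--Yan ODT), the role-reversed Kramkov--Schachermayer programme with the minimax applied on weak$*$-compact subsets of the dual and uniform integrability of $(U^{+}(g))_{g\in\mathcal{C}(x)}$ driven by the $L^{1}(\mu)$-boundedness of the primal domain, and then the equality case of the Fenchel inequality giving \eqref{eq:pdc}--\eqref{eq:oihbc}. Your argument for the optimal wealth process (the non-negative supermartingale $M$ with $\mathbb{E}[M_{\infty}]=M_{0}=xy$ forced by the saturated budget constraint, hence a uniformly integrable martingale, $\widehat{X}\widehat{Y}$ a potential, and \eqref{eq:owp} by conditioning) is in substance identical to Proposition \ref{prop:owp}.
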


The proof of Theorem \ref{thm:consd} will be given in Section
\ref{sec:pdt}, and will rely on bipolarity results and an abstract
version of the duality stated in Section \ref{sec:abpd}, with the
bipolarity results proven in Section \ref{sec:bcbpr}. A duality result
of this form was established by Mostovyi \cite{most15} under
NFLVR. This was strengthened to a result under NUPBR by Chau et al
\cite{chauetal17}. Compared to these papers, Theorem \ref{thm:consd}
makes a stronger statement in other ways.

First, we characterise the optimal wealth process, a statement that
was missing from \cite{most15,chauetal17}. This turns out to be a nice
result to prove (see Proposition \ref{prop:owp}), showing that the
optimal process $\widehat{X}\widehat{Y}$ is a supermartingale and a
potential, while
$\widehat{X}\widehat{Y} +
\int_{0}^{\cdot}\widehat{c}_{s}\widehat{Y}_{s}\ud s$ is a uniformly
integrable martingale. This is the natural extension of the result in
the terminal wealth problem that, at the optimum, deflated wealth is a
uniformly integrable martingale (see Kramkov and Schachermayer
\cite{ks99,ks03}), and confirms that the supermartingale condition we
placed on the process $XY+\int_{0}^{\cdot}c_{s}Y_{s}\ud s$ for
admissibility is the right criterion to start from.

Further, as we shall see in the course of proving Theorem
\ref{thm:consd}, the dual domain $\mathcal{Y}(y)$ will need to be
enlarged, in a spirit akin to Kramkov and Schachermayer
\cite{ks99,ks03}, to consider processes which are dominated by some
element of the original dual domain. This enlargement, as is known
from the terminal wealth scenario of \cite{ks99,ks03}, is needed in
order to reach the bipolar of the original dual domain, so that the
(enlarged) dual domain is closed in an appropriate topology. This in
turn guarantees that a unique dual optimiser will exist. This is one
of the key contributions made in \cite{ks99,ks03}. One does not assume
\emph{a priori} that either the primal or dual domains are closed.

Here, for the consumption problem, we shall see that we do not need to
enlarge the primal domain, only the dual domain. Mostovyi
\cite{most15} and Chau et al \cite{chauetal17} found a similar
phenomenon, but with the important caveat that they took the enlarged
dual domain to be the closure (in the appropriate topology) of the set
of processes dominated by local martingale deflators (in
\cite{chauetal17}) or martingale deflators (in \cite{most15}) .

Here, we do not explicitly make the dual domain closed (in the manner
of \cite{most15,chauetal17}) by construction, so we obtain a stronger
result. We merely enlarge the dual domain in a manner analogous to
Kramkov and Schachermayer \cite{ks99,ks03}, by considering processes
dominated by consumption deflators, and then show that the enlarged
domain is closed using supermartingale convergence results which
exploit so-called Fatou convergence of processes. We also prove that
our enlarged domain coincides with the closure of processes dominated
by local martingale deflators (see Proposition \ref{prop:dense}), so
coincides with that used in \cite{chauetal17}. In other words, the
domain used in \cite{chauetal17} is \emph{dense} in our domain. The
proof of Proposition \ref{prop:dense} will also reveal why the
supermartingale convergence results, used to show that our enlarged
domain is closed, cannot provide the same result for the (pre-closure)
domains used in \cite{most15,chauetal17}. Basically, the limiting
supermartingale is just that, a supermartingale, and it cannot be
shown to be a (local) martingale deflator.

This all reveals that, in a real sense, we have found just the right
dual domain for a strong duality statement.

Lastly, regarding some steps underlying the proof of Theorem
\ref{thm:consd}, and in particular the arguments in Section
\ref{sec:bcbpr} used to establish bipolarity relations connecting the
primal and dual domains, our proofs make no use at any point of
constructions involving equivalent measures, such as ELMMs, but use
only deflators. Since we are working under NUPBR this is natural, and
in some senses even desirable. Moreover, as we have alluded to in
Section \ref{subsubsec:completion}, there are potential complications
in using equivalent measures when working on an infinite horizon, so
there are sound reasons for taking the course we follow here.

In our scenario, therefore, we provide an unambiguously robust route
through the proofs which avoids any use of ELMMs. This, in addition to
the features described above, of showing that the naturally enlarged
dual domain is closed, without taking its closure to guarantee this,
makes Theorem \ref{thm:consd} a quite distinct infinite horizon
consumption duality result from those in \cite{most15,chauetal17}.

\begin{remark}[Incorporating a stochastic clock into the wealth
dynamics]
\label{rem:iscwd}

As discussed in Section \ref{subsec:sc}, one can incorporate a
stochastic clock into the wealth dynamics, as done by Mostovyi
\cite{most15} and Chau et al \cite{chauetal17}. Our entire program
works with this change, and we point out here how Theorem
\ref{thm:consd} would be altered. We modify the wealth dynamics
\eqref{eq:wealth} to
\begin{equation*}
X_{t} = x + (H\cdot S)_{t} - \int_{0}^{t}c_{s}\ud\kappa_{s},
\quad t\geq 0, \quad x>0,
\end{equation*} 
where $\kappa$ is a stochastic clock of the form described in Section
\ref{subsec:sc}. The process $c$ was denoted by $\bar{c}$ in Section
\ref{subsec:sc}, but for a clean notation we shall not make this
adjustment here. The primal value function is still given by
\eqref{eq:vf}. The consumption deflators are also unchanged, but the
key supermartingale constraint in \eqref{eq:mcY} is altered to reflect
the change of consumption variable, to:
\begin{equation*}
\mbox{$XY+\int_{0}c_{s}Y_{s}\ud\kappa_{s}$ is a supermartingale,
$\forall\,c\in\mathcal{A}$}, 
\end{equation*}
where admissible consumption plans are still those for which the
wealth process $X$ is non-negative. In other words, one simply alters
the measure used in the cumulative deflated consumption term in the
fundamental supermartingale constraint. As a result, the form of the
dual problem in \eqref{eq:dvf} is altered to
\begin{equation*}
v(y) := \inf_{Y\in\mathcal{Y}(y)}\mathbb{E}\left[
\int_{0}^{\infty}V(Y_{t})\ud\kappa_{t}\right], \quad
y>0,
\end{equation*}
so one loses the extraneous process $\gamma$ in the argument of
$V(\cdot)$ in the definition of the dual value function, and hence in
the expression for this function in item (ii) of the theorem. Similar
adjustments occur in the remaining results of Theorem
\ref{thm:consd}. Thus, item (iii) of the theorem is altered to:

With $y=u^{\prime}(x)$ (equivalently,
$x=-v^{\prime}(y)$), the primal and dual optimisers are related
by
\begin{equation*}
U^{\prime}(\widehat{c}_{t}(x)) = \widehat{Y}_{t}(y), \quad
\mbox{equivalently}, \quad \widehat{c}_{t}(x) =
-V^{\prime}(\widehat{Y}_{t}(y)), \quad t\geq 0,
\end{equation*}
and satisfy
\begin{equation*}
\mathbb{E}\left[\int_{0}^{\infty}\widehat{c}_{t}(x)\widehat{Y}_{t}(y)
\ud\kappa_{t}\right] = xy,
\end{equation*}
with the associated optimal wealth process $\widehat{X}(x)$ given by
\begin{equation*}
\widehat{X}_{t}(x)\widehat{Y}_{t}(y)  = \mathbb{E}\left[\left.
\int_{t}^{\infty}\widehat{c}_{s}(x)\widehat{Y}_{s}(y)\ud\kappa_{s}
\right\vert\mathcal{F}_{t}\right], \quad t\geq 0,
\end{equation*}
and the process
$\widehat{X}(x)\widehat{Y}(y) +
\int_{0}^{\cdot}\widehat{c}_{s}(x)\widehat{Y}_{s}(y)\ud\kappa_{s}$ is a
uniformly integrable martingale.

Item (iv) of the theorem is altered to:

The functions $u(\cdot)$ and $-v(\cdot)$ are strictly
increasing, strictly concave, satisfy the Inada conditions, and for
all $x,y>0$ their derivatives satisfy
\begin{equation*}
xu^{\prime}(x) = \mathbb{E}\left[\int_{0}^{\infty}
U^{\prime}(\widehat{c}_{t}(x))\widehat{c}_{t}(x)\ud\kappa_{t}\right],
\quad yv^{\prime}(y) = \mathbb{E}\left[\int_{0}^{\infty}
V^{\prime}(\widehat{Y}_{t}(y))\widehat{Y}_{t}(y)\ud\kappa_{t}\right].
\end{equation*}

\end{remark}

\section{Abstract bipolarity and duality}
\label{sec:abpd}

In this section we state a bipolarity result in abstract form, leading to an
abstract duality theorem, from which Theorem \ref{thm:consd} will
follow. Proofs of these results will follow in subsequent sections.

Set $\mathbf{\Omega}:=[0,\infty)\times\Omega$. Let $\mathcal{G}$
denote the optional $\sigma$-algebra on $\mathbf{\Omega}$, that is,
the sub-$\sigma$-algebra of
$\mathcal{B}([0,\infty))\otimes\mathcal{F}$ generated by evanescent
sets and stochastic intervals of the form
$\llbracket T,\infty\llbracket$ for arbitrary stopping times $T$.
Define the measure $\mu:=\kappa\times\mathbb{P}$ on
$(\mathbf{\Omega},\mathcal{G})$. On the resulting finite measure space
$(\mathbf{\Omega},\mathcal{G},\mu)$, denote by $L^{0}_{+}(\mu)$ the
space of non-negative $\mu$-measurable functions, corresponding to
non-negative infinite horizon processes.

The primal and dual domains for our optimisation problems
\eqref{eq:vf} and \eqref{eq:dvf} are now considered as subsets of
$L^{0}_{+}(\mu)$. The abstract primal domain $\mathcal{C}(x)$ is
identical to the set of admissible consumption plans, now considered
as a subset of $L^{0}_{+}(\mu)$:
\begin{equation}
\mathcal{C}(x) := \{g\in L^{0}_{+}(\mu): \mbox{$g=c,\,\mu$-a.e., for
some $c\in\mathcal{A}(x)$}\}, \quad x>0.  
\label{eq:Cx}
\end{equation}
As always we write $\mathcal{C}\equiv\mathcal{C}(1)$, with
$\mathcal{C}(x)=x\mathcal{C}$ for $x>0$, and the set $\mathcal{C}$ is
convex. (Since $\mathcal{C}=\mathcal{A}$ we do not really need to
introduce the new notation, and do so only for some notational
symmetry in the abstract formulation.) In the abstract notation, the
primal value function \eqref{eq:vf} is written as
\begin{equation}
u(x) := \sup_{g\in\mathcal{C}(x)}\int_{\mathbf{\Omega}}U(g)\ud\mu,
\quad x>0.
\label{eq:vfabs}
\end{equation}

For the dual problem, the abstract dual domain is an enlargement
of the original domain to accommodate processes dominated by the
original dual variables. To this end, define the set
\begin{equation}
\mathcal{D}(y) := \{h\in L^{0}_{+}(\mu): \mbox{$h\leq \gamma
  Y,\,\mu$-a.e., for some $Y\in\mathcal{Y}(y)$}\}, \quad y>0.
\label{eq:Dy}
\end{equation}
As usual, we write $\mathcal{D}\equiv\mathcal{D}(1)$, we have
$\mathcal{D}(y)=y\mathcal{D}$ for $y>0$, and the set $\mathcal{D}$ is
convex. With this notation, and since $V(\cdot)$ is decreasing, the
dual problem \eqref{eq:dvf} takes the form
\begin{equation}
v(y) := \inf_{h\in\mathcal{D}(y)}\int_{\mathbf{\Omega}}V(h)\ud\mu,
\quad y>0. 
\label{eq:dvfabs}
\end{equation}

The enlargement of the dual domain from $\mathcal{Y}$ (equivalently,
$\widetilde{\mathcal{Y}}$ in \eqref{eq:Ycbar}) to $\mathcal{D}$ is
needed for the same reason as in Kramkov and Schachermayer
\cite{ks99,ks03} in the context of the terminal wealth problem (where
one enlarged the dual domain from supermartingale deflators to
elements of $L^{0}_{+}(\mathbb{P})$ that were dominated by terminal
values of supermartingale deflators). The enlargement will ensure that
$\mathcal{D}$ is closed with respect to convergence in measure $\mu$
(proven in Lemma \ref{lem:Dclosed}). This in turn ensures that we
reach a perfect bipolarity between the primal and dual domains (as
given in Proposition \ref{prop:abp}), which is a key ingredient in
establishing full duality between the primal and dual
problems. Contrast this enlargement with the approach taken in Chau et
al \cite{chauetal17} and Mostovyi \cite{most15} as described
immediately below.

\subsection{Alternative dual domains}
\label{subsec:add}

In Chau et al \cite{chauetal17} (respectively, Mosotvyi \cite{most15})
the dual domain was not based on the deflators $Y\in\mathcal{Y}$ but
instead on the local martingale deflators $Z\in\mathcal{Z}$
(respectively, equivalent martingale deflators). Thus, translated into
our formulation (so using a true rather than a pseudo-consumption
rate), Chau et al \cite{chauetal17} use, in place of $\mathcal{D}(y)$,
a domain defined as the closure, with respect to
the topology of convergence in measure $\mu$, of a set $D(y)$, where
$D(y)$ is defined analogously to $\mathcal{D}(y)$ but with local
martingale deflators replacing the consumption deflators. Thus, with
$\overline{A}\equiv\cl(A)$ denoting the closure of any set
$A\subseteq L^{0}_{+}(\mu)$, we have
\begin{equation}
\overline{D}(y) \equiv \cl(D) := \cl\left\{h\in
L^{0}_{+}(\mu):\,h\leq y\gamma Z, \,\mbox{for some
$Z\in\mathcal{Z}$}\right\}, \quad y>0.
\label{eq:wtDy}
\end{equation}
As usual we write $D\equiv D(1)$, and $D(y)=yD$ for $y>0$, with the
same convention for $\overline{D}$. In this formulation,
therefore, the dual value function is represented as in
\eqref{eq:dvfabs} but with $\overline{D}(y)$ in place of
$\mathcal{D}(y)$.

The salient point here is the fact that the \emph{closure} of $D(y)$
has been taken in \eqref{eq:wtDy}. The reason for this will become
transparent in the proofs of Section \ref{sec:bcbpr}, but we outline
the issue here, and state a nice result (Proposition \ref{prop:dense})
which connects the domains $\mathcal{D}$, $D$ and $\overline{D}$.

In the approach of \cite{chauetal17} (and also of \cite{most15}, with
martingale deflators in place of local martingale deflators), if one
does not take the aforementioned closure, it becomes impossible (as
far as we can see) to prove that the dual domain is closed. It thus
becomes impossible to obtain a perfect bipolarity between the primal
and dual domains, on which the duality proofs ultimately rest. The
technical reason for this is that the closed property of $\mathcal{D}$
is established (see Lemma \ref{lem:Dclosed}) using a supermartingale
convergence result based on Fatou convergence of processes. The
limiting supermartingale in this procedure is known only to be a
supermartingale in $\mathcal{Y}$, so is not guaranteed to be a local
martingale deflator. This is the driving force behind our choice of
dual domain based on a supermartingale criterion. The approach in
\cite{chauetal17,most15} is simply not amenable to this procedure,
which is why those papers had to invoke the closure in
\eqref{eq:wtDy}.

In this way, we strengthen the duality theorems in
\cite{most15,chauetal17}, by not forcing the dual domain to be closed
by construction. This point is well made by Rogers \cite{rogers}, who
observes that having to take the closure of the dual domain in its
definition ``makes the statement of the main result somewhat weaker''.
We do denigrate in any way, however, the advances made in
\cite{most15,chauetal17}.

What is more, we have the proposition below, which reaffirms in some
sense that our choice of dual domain is the correct one: we have
chosen it in just the right way to reach the bipolar of the original
dual domain and hence the polar of the primal domain.

\begin{proposition}
\label{prop:dense}

With respect to the topology of convergence in measure $\mu$, the set
\begin{equation*}
D := \left\{h\in L^{0}_{+}(\mu):\,h\leq \gamma Z, \,\mbox{for some
$Z\in\mathcal{Z}$}\right\}, 
\end{equation*}
is dense in the set $\mathcal{D}\equiv\mathcal{D}(1)$ of
\eqref{eq:Dy}. That is, we have
\begin{equation*}
\mathcal{D} = \overline{D} \equiv \cl(D).
\end{equation*}

\end{proposition}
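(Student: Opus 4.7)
The plan is to establish the two inclusions $\overline{D}\subseteq\mathcal{D}$ and $\mathcal{D}\subseteq\overline{D}$ separately. The first is the easy direction: the inclusion of deflator sets $\mathcal{Z}\subseteq\mathcal{Y}$ noted in Section~\ref{subsec:dfcp} gives $D\subseteq\mathcal{D}$, and since Lemma~\ref{lem:Dclosed} will assert that $\mathcal{D}$ is closed in the topology of convergence in measure $\mu$, taking closures yields $\overline{D}\subseteq\overline{\mathcal{D}}=\mathcal{D}$.

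For the reverse inclusion $\mathcal{D}\subseteq\overline{D}$, I would fix $h\in\mathcal{D}$ together with a dominating consumption deflator $Y\in\mathcal{Y}$ so that $h\leq\gamma Y$ $\mu$-a.e., and then reduce the problem to approximating $Y$ by local martingale deflators in $\mu$-measure. Once a sequence $(Z^n)_{n\geq 1}\subseteq\mathcal{Z}$ is produced with $\gamma Z^n\to\gamma Y$ $\mu$-a.e.\ along a subsequence, the truncations $h^n:=h\wedge(\gamma Z^n)$ lie in $D$ by construction, are dominated by $h$, and converge pointwise $\mu$-a.e.\ to $h\wedge(\gamma Y)=h$, placing $h$ in $\overline{D}$.

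The substantive step is the construction of the approximating sequence. A natural first attempt is the multiplicative decomposition of the positive c\`adl\`ag supermartingale $Y$ as $Y=NK$, with $N$ a positive local martingale ($N_0=1$) and $K$ predictable non-increasing ($K_0=1$). However, the simple deterministic example $Y_t=\e^{-t}$ forces $N\equiv 1$, which is not an LMD unless every element of $\mathcal{X}$ is already a local martingale, so the factor $N$ generally does not lie in $\mathcal{Z}$. The construction must therefore combine $N$ with a fixed reference LMD $Z^{0}\in\mathcal{Z}$ (available by the NUPBR assumption \eqref{eq:noarb}) through stopping and concatenation: on a localising interval where $N$ is well-behaved one follows $N$, and past the localising stopping time one splices in an appropriately rescaled copy of $Z^{0}$, verifying in each case that $X^{0} Z^n$ remains a non-negative local martingale for every $X^{0}\in\mathcal{X}$ and that the resulting sequence converges to $Y$ in $\mu$-measure.

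The main obstacle is precisely this final construction: the passage from ``$Y$ is a consumption deflator'' to ``$Y$ is a $\mu$-measure limit of a sequence in $\mathcal{Z}$.'' This is also where the asymmetry highlighted in the paper's introduction manifests: supermartingale convergence results readily produce a Fatou limit of a sequence of LMDs, but that limit is guaranteed only to be a supermartingale in $\mathcal{Y}$, not an LMD in $\mathcal{Z}$. Consequently $D$ cannot itself be expected to be closed, explaining why $\mathcal{D}$, and not $D$, is the naturally closed object, with $D$ dense inside it.
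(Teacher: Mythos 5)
Your easy inclusion $\overline{D}\subseteq\mathcal{D}$ is fine (it follows from $\mathcal{Z}\subseteq\mathcal{Y}$ and the closedness of $\mathcal{D}$ in Lemma \ref{lem:Dclosed}), but the reverse inclusion, which is the entire content of the proposition, is not established. You reduce it to the claim that every $Y\in\mathcal{Y}$ is a $\mu$-measure (subsequential $\mu$-a.e.) limit of local martingale deflators $Z^{n}\in\mathcal{Z}$. That is a strictly stronger statement than the density of $D$ in $\mathcal{D}$ — elements of $D$ are only \emph{dominated} by some $\gamma Z$, not equal to one — and you yourself identify its proof as ``the main obstacle''. The construction you sketch does not close the gap: the local martingale factor $N$ in the multiplicative decomposition $Y=NK$ is only known to make $XY+\int_{0}^{\cdot}c_{s}Y_{s}\ud s$ a supermartingale; membership in $\mathcal{Z}$ requires the joint property that $X^{0}Z$ be a local martingale for \emph{every} $X^{0}\in\mathcal{X}$, and there is no reason the martingale part $N$ of a general supermartingale deflator has it. So ``following $N$ up to a localising time'' does not produce an element of $\mathcal{Z}$ on that stretch, and splicing a rescaled reference deflator $Z^{0}$ after the stopping time cannot repair the $N$-segment. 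This is precisely the obstruction the paper records (see Remark \ref{rem:relbp}): with supermartingale/Fatou techniques one cannot in general upgrade consumption deflators to (limits of) local martingale deflators, and such coalescence statements are left open there.

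The paper's proof avoids any pointwise approximation of $Y$ or of $h$ and works entirely through polarity. One first shows that the budget constraint with local martingale deflators characterises admissibility: $c\in\mathcal{A}$ if and only if $\mathbb{E}\left[\int_{0}^{\infty}c_{t}Z_{t}\ud t\right]\leq 1$ for all $Z\in\mathcal{Z}$ — necessity by localising $\int_{0}^{\cdot}C_{s-}\ud Z_{s}$, sufficiency by the Stricker--Yan optional decomposition theorem exactly as in Lemma \ref{lem:suffc}. Hence $\mathcal{A}=\widetilde{\mathcal{Z}}^{\circ}$ and $\mathcal{A}^{\circ}=\widetilde{\mathcal{Z}}^{\circ\circ}$. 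A Fatou argument extends $D\subseteq\mathcal{A}^{\circ}$ to $\overline{D}\subseteq\mathcal{A}^{\circ}=\widetilde{\mathcal{Z}}^{\circ\circ}$, while the Brannath--Schachermayer bipolar theorem gives the reverse inclusion, since $\overline{D}$ is closed, convex, solid and contains $\widetilde{\mathcal{Z}}$. Therefore $\overline{D}=\mathcal{A}^{\circ}$, and comparing with $\mathcal{D}=\mathcal{A}^{\circ}$ from Lemma \ref{lem:cbp} yields $\mathcal{D}=\overline{D}$. To make your outline work you would have to replace the direct approximation of $Y$ by an argument of this bipolarity type; as it stands, the substantive step is missing and the proposed construction would not yield elements of $\mathcal{Z}$.
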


The proof of Proposition \ref{prop:dense} will be given in Section
\ref{sec:bcbpr}, alongside the proof of the bipolarity result in
Proposition \ref{prop:abp} that is the subject of the next subsection.

\subsection{Abstract bipolarity}
\label{subsec:absbp}

The abstract duality theorem relies on the abstract bipolarity result
in Proposition \ref{prop:abp} below which connects the sets
$\mathcal{C}$ and $\mathcal{D}$. The result is of course in the spirit
of Kramkov and Schachermayer \cite[Proposition 3.1]{ks99}.

We shall sometimes employ the notation
\begin{equation*}
\langle g,h\rangle := \int_{\mathbf{\Omega}}gh\ud\mu, \quad
g,h\in L^{0}_{+} (\mu).
\end{equation*}

Let us recall some definitions, particularly the concepts of set
solidity and the polar of a set.

\begin{definition}[Solid set, closed set]
\label{def:sscs}
  
A subset $A\subseteq L^{0}_{+}(\mu)$ is called \emph{solid} if $f\in A$
and $0\leq g\leq f,\,\mu$-a.e. implies that $g\in A$. 

A set is \emph{closed in $\mu$-measure}, or simply {\em closed}, if it
is closed with respect to the topology of convergence in measure
$\mu$.

\end{definition}

\begin{definition}[Polar of a set]
  
The {\em polar}, $A^{\circ}$, of a set $A\subseteq L^{0}_{+}(\mu)$,
is defined by
\begin{equation*}
A^{\circ} := \left\{h\in L^{0}_{+}(\mu): \langle g,h\rangle\leq
1,\,\mbox{for each $g\in A$}\right\}. 
\end{equation*}

\end{definition}

For clarity and for later use, we state here the bipolar theorem of
Brannath and Schachermayer \cite[Theorem 1.3]{bs99}, originally proven
in a probability space, and adapted here to the measure space
$(\mathbf{\Omega},\mathcal{G},\mu)$.

\begin{theorem}[Bipolar theorem, Brannath and Schachermayer
\cite{bs99}, Theorem 1.3]
\label{thm:bp}

On the finite measure space $(\mathbf{\Omega},\mathcal{G},\mu)$:

\begin{itemize}

\item[(i)] For a set $A\subseteq L^{0}_{+}(\mu)$, its polar
  $A^{\circ}$ is a closed, convex, solid subset of $L^{0}_{+}(\mu)$.

\item[(ii)] The bipolar $A^{\circ\circ}$, defined by
\begin{equation*}
A^{\circ\circ} := \left\{g\in L^{0}_{+}(\mu): \langle g,h\rangle\leq
1,\,\mbox{for each $h\in A^{\circ}$}\right\},
\end{equation*}
is the smallest closed, convex, solid set in $L^{0}_{+}(\mu)$
containing $A$.
  
\end{itemize}

\end{theorem}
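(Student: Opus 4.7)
The plan is to address parts (i) and (ii) sequentially: part (i) is a routine verification of three separate properties, while part (ii) reduces to a Hahn-Banach-type separation argument that must be lifted from $L^{0}_{+}(\mu)$ into $L^{\infty}(\mu)$ by truncation, since $L^{0}_{+}(\mu)$ itself is not a locally convex topological vector space.

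For part (i), I would verify convexity, solidity, and closedness of $A^{\circ}$ directly. Convexity is immediate from linearity of $\langle g,\cdot\rangle$: for $h_{1},h_{2}\in A^{\circ}$ and $\lambda\in[0,1]$, $\langle g,\lambda h_{1}+(1-\lambda)h_{2}\rangle\leq 1$ for each $g\in A$. Solidity follows from monotonicity of the integral: $0\leq h'\leq h\in A^{\circ}$ yields $\langle g,h'\rangle\leq\langle g,h\rangle\leq 1$. Closedness in $\mu$-measure follows from Fatou's lemma applied to an a.e.-convergent subsequence of any $(h_{n})\subseteq A^{\circ}$ converging in measure.

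For part (ii), the inclusion $A\subseteq A^{\circ\circ}$ is automatic, and applying (i) to $A^{\circ}$ shows $A^{\circ\circ}$ is itself closed, convex, and solid. Minimality --- that every closed, convex, solid set $B$ with $A\subseteq B$ contains $A^{\circ\circ}$ --- is equivalent, via the monotonicity $B^{\circ}\subseteq A^{\circ}$ of polars, to the separation claim: for every $g_{0}\in L^{0}_{+}(\mu)\setminus B$, there exists $h\in B^{\circ}$ with $\langle g_{0},h\rangle>1$. I would establish this in four substeps. \emph{Truncate}: because $g_{0}\wedge n\to g_{0}$ in $\mu$-measure on the finite measure space and $B$ is closed, some $\tilde g_{0}:=g_{0}\wedge n\notin B$, reducing the task to separating $\tilde g_{0}\in L^{\infty}_{+}$ from $B\cap L^{\infty}(\mu)$. \emph{Weak-$\ast$ closedness}: on any $L^{\infty}$-ball the weak-$\ast$ topology coincides with convergence in $\mu$-measure, so $B\cap\{\|h\|_{\infty}\leq n\}$ is weak-$\ast$ compact, and the Krein--Smulian theorem then yields that $B\cap L^{\infty}$ is weak-$\ast$ closed in $L^{\infty}(\mu)=L^{1}(\mu)^{\ast}$. \emph{Hahn--Banach}: the geometric Hahn--Banach theorem in $L^{\infty}(\mu)$ with the weak-$\ast$ topology produces $h\in L^{1}(\mu)$ and $\alpha\in\mathbb{R}$ satisfying $\langle f,h\rangle\leq\alpha<\langle\tilde g_{0},h\rangle$ for every $f\in B\cap L^{\infty}$; taking $f=0\in B$ (which lies in $B$ by solidity of the nonempty set $B$) forces $\alpha\geq 0$. \emph{Positivity and extension}: solidity of $B$ gives $f\mathbf{1}_{\{h>0\}}\in B\cap L^{\infty}$ for any such $f$, whence $\langle f,h^{+}\rangle\leq\alpha$ and $\langle\tilde g_{0},h^{+}\rangle>\alpha\geq 0$, so $h^{+}\geq 0$ also separates; scaling $h^{+}$ by $1/\alpha$ if $\alpha>0$, or by any $N>1/\langle\tilde g_{0},h^{+}\rangle$ if $\alpha=0$, and extending the inequality from $B\cap L^{\infty}$ to all $f\in B$ by monotone convergence along $f\wedge n\uparrow f$, delivers the required element of $B^{\circ}$ with $\langle g_{0},\cdot\rangle>1$.

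The main obstacle is this separation step: because $L^{0}_{+}(\mu)$ is not locally convex, Hahn--Banach cannot be applied there directly, and the entire detour through $L^{\infty}$ --- truncation, Krein--Smulian, and the $h\mapsto h^{+}$ replacement --- is needed to produce a \emph{positive} separating functional. Solidity of $B$ is what makes each delicate move succeed: it secures $0\in B$ (fixing the sign of $\alpha$), it permits the replacement $h\mapsto h^{+}$, and it enables the passage from $B\cap L^{\infty}$ to $B$ by monotone convergence, without which the construction breaks down.
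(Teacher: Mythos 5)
The paper itself contains no proof of Theorem \ref{thm:bp}: it is quoted from Brannath and Schachermayer \cite{bs99}, merely transcribed from a probability space to the finite measure space $(\mathbf{\Omega},\mathcal{G},\mu)$ (harmless, since one may normalise $\mu$ and rescale the dual variable), so your proposal has to be judged against the standard argument rather than an in-paper one. Your part (i) is correct, and the architecture of part (ii) — reduction of minimality to strict separation of a point $g_{0}\notin B$ from a closed convex solid $B$, truncation to $\tilde g_{0}=g_{0}\wedge n$, Krein--Smulian, Hahn--Banach in $(L^{\infty}(\mu),\sigma(L^{\infty},L^{1}))$, replacement $h\mapsto h^{+}$ via solidity, scaling, and extension from $B\cap L^{\infty}$ to $B$ by monotone convergence — is the classical route and is sound. (Brannath and Schachermayer organise their own proof somewhat differently, via a decomposition result for convex subsets of $L^{0}_{+}$ and a Yan-type separation theorem, but the $L^{\infty}$/separation mechanics are the same.)

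There is, however, one genuinely false step: the claim that on an $L^{\infty}$-ball the weak$^{*}$ topology coincides with convergence in $\mu$-measure. It does not. Rademacher-type functions $r_{n}$ on $[0,1]$ converge to $0$ in $\sigma(L^{\infty},L^{1})$ but satisfy $|r_{n}|=1$, so they do not converge in measure (equivalently: the ball is weak$^{*}$ compact but not compact in measure); only one direction holds, namely that a uniformly bounded sequence converging in measure converges weak$^{*}$. Consequently, measure-closedness of $B$ alone does not make $B\cap\{\|h\|_{\infty}\leq n\}$ weak$^{*}$ closed, and your argument at this point never invokes convexity, which is precisely what is needed. The repair is standard: $B\cap\{\|h\|_{\infty}\leq n\}$ is convex and closed in $L^{1}(\mu)$-norm (norm convergence yields an a.e.\ convergent subsequence, so the $L^{\infty}$-bound and membership of the measure-closed set $B$ pass to the limit), hence it is $\sigma(L^{1},L^{\infty})$-closed by Mazur/Hahn--Banach; since $\sigma(L^{\infty},L^{1})$ restricted to the ball is finer than the topology induced by $\sigma(L^{1},L^{\infty})$, the set is weak$^{*}$ closed, and Krein--Smulian then applies exactly as you intend. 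With that substitution (and the trivial proviso that $A$, hence $B$, is nonempty so that $0\in B$), your proof is complete.
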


\begin{proposition}[Abstract bipolarity]
\label{prop:abp}

Under the condition \eqref{eq:noarb}, the abstract primal and dual
sets $\mathcal{C}$ and $\mathcal{D}$ satisfy the following properties:

\begin{itemize}

\item[(i)] $\mathcal{C}$ and $\mathcal{D}$ are both closed with
respect to convergence in measure $\mu$, convex and solid;

\item[(ii)] $\mathcal{C}$ and $\mathcal{D}$ satisfy the bipolarity
relations
\begin{eqnarray}
g\in\mathcal{C} & \iff & \langle g,h\rangle \leq 1, \quad \forall\,
h\in\mathcal{D}, \quad \mbox{that is,
$\mathcal{C}=\mathcal{D}^{\circ}$}, \label{eq:bp1}\\
h\in\mathcal{D} & \iff & \langle g,h\rangle \leq 1, \quad \forall\,
g\in\mathcal{C}, \quad \mbox{that is,
$\mathcal{D}=\mathcal{C}^{\circ}$};  \label{eq:bp2}
\end{eqnarray}

\item[(iii)] $\mathcal{C}$ and $\mathcal{D}$ are bounded in
$L^{0}(\mu)$, and $\mathcal{C}$ is also bounded in $L^{1}(\mu)$.

\end{itemize}

\end{proposition}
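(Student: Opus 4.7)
The plan is to organise the proof so that the closedness portion of item (i) is obtained for free from item (ii) via the Bipolar Theorem~\ref{thm:bp}, since the polar of any set is automatically closed, convex and solid in $\mu$-measure. This leaves three substantive tasks: the algebraic properties of convexity and solidity, the $L^1/L^0$ boundedness of item (iii), and the bipolarity identities of item (ii).

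The elementary properties are direct. Convexity of $\mathcal{C}$ is inherited from that of $\mathcal{A}$, and convexity of $\mathcal{D}$ uses that convex combinations of elements of $\mathcal{Y}$ lie again in $\mathcal{Y}$ and dominate the corresponding convex combinations of the $h$'s. Solidity of $\mathcal{D}$ is built into its definition. For solidity of $\mathcal{C}$, if $c\in\mathcal{A}$ is witnessed by a strategy $H$, then consuming less at rate $g\leq c$ with the same $H$ only raises wealth, so $g\in\mathcal{A}$. For item (iii), the $L^{1}(\mu)$-boundedness of $\mathcal{C}$ is an immediate consequence of the standing assumption \eqref{eq:kappa}, giving $\int g\ud\mu=\mathbb{E}[\int_{0}^{\infty}c_{t}\ud\kappa_{t}]\leq K$ for every $g=c\in\mathcal{C}$, which entails $L^{0}(\mu)$-boundedness. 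The $L^{0}(\mu)$-boundedness of $\mathcal{D}$ then follows from the supermartingale property of any $Y\in\mathcal{Y}$ (with $Y_{0}=1$) via Doob's maximal inequality, together with the pathwise bound $h\leq\gamma Y$ and the fact that $\gamma\,\ud\kappa=\ud t$.

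The heart of the proposition is the bipolarity (ii). One direction is computational: for $g=c\in\mathcal{C}$ and $h\leq\gamma Y$ with $Y\in\mathcal{Y}$, the identity $\gamma_{t}\ud\kappa_{t}=\ud t$ yields
\[
\langle g,h\rangle\leq\mathbb{E}\!\left[\int_{0}^{\infty}c_{t}Y_{t}\ud t\right]\leq 1,
\]
the final bound coming from the defining supermartingale property of $XY+\int_{0}^{\cdot}c_{s}Y_{s}\ud s$ at value $1$, by letting $t\to\infty$ with monotone convergence and non-negativity of $XY$. This yields both $\mathcal{D}\subseteq\mathcal{C}^{\circ}$ and $\mathcal{C}\subseteq\mathcal{D}^{\circ}$. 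The reverse inclusions are substantially harder. For $\mathcal{D}^{\circ}\subseteq\mathcal{C}$, given $g\in L^{0}_{+}(\mu)$ satisfying $\langle g,\gamma Y\rangle\leq 1$ for every $Y\in\mathcal{Y}$, and hence in particular for every local martingale deflator $Z\in\mathcal{Z}$ (using the inclusion $\mathcal{Z}\subseteq\mathcal{Y}$ proved in Section~\ref{subsec:dfcp}), I would identify $g$ with a non-negative optional process $c$ and read the hypothesis as an infinite-horizon budget constraint $\mathbb{E}[\int_{0}^{\infty}c_{t}Z_{t}\ud t]\leq 1$ uniformly in $Z$. The Stricker--Yan \cite{sy98} deflator version of the optional decomposition theorem, applied on each finite horizon $[0,T]$ and then patched consistently on $[0,\infty)$, produces a predictable $S$-integrable $H$ with $1+(H\cdot S)-\int_{0}^{\cdot}c_{s}\ud s\geq 0$, certifying $c\in\mathcal{A}$. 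The inclusion $\mathcal{C}^{\circ}\subseteq\mathcal{D}$ is treated symmetrically, by showing that any $h$ in the polar gives rise, via a superhedging/ODT argument combined with a Fatou-type extraction of a limit supermartingale, to a consumption deflator $Y\in\mathcal{Y}$ dominating $\gamma^{-1}h$.

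The principal obstacle is therefore the reverse direction of the bipolarity, and specifically the ODT-based construction of the strategy $H$. Two delicacies arise: first, the Stricker--Yan ODT is naturally stated on a finite horizon, so one must splice the resulting $H$'s across $T$ coherently to obtain a global strategy on $[0,\infty)$ without destroying the non-negativity constraint; and second, the entire argument must scrupulously avoid any change of measure, since NUPBR does not supply ELMMs and any appeal to them would undercut the philosophy of the paper, as emphasised in Section~\ref{subsubsec:completion}. Once bipolarity is in hand, the closedness portion of item (i) follows instantly from Theorem~\ref{thm:bp}(i), completing the proof.
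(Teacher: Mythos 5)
There is a genuine gap, and it sits exactly where the real work of the proposition lies: the inclusion $\mathcal{C}^{\circ}\subseteq\mathcal{D}$. You dispose of it in one sentence as being ``treated symmetrically, by showing that any $h$ in the polar gives rise, via a superhedging/ODT argument combined with a Fatou-type extraction of a limit supermartingale, to a consumption deflator $Y\in\mathcal{Y}$ dominating $\gamma^{-1}h$.'' But there is no such symmetry: the Stricker--Yan ODT is a primal-side tool (it decomposes a process that is a supermartingale under all deflators into $W_{0}+(\phi\cdot S)-A$, i.e.\ it manufactures superhedging strategies), and it has no dual counterpart that manufactures a deflator dominating an arbitrary element of the polar; likewise a ``Fatou-type extraction'' needs a sequence of supermartingales to extract from, which the single polar condition $\langle g,h\rangle\leq 1$ does not supply. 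The paper never attempts such a direct construction. Its route is: (a) prove $\mathcal{A}=\widetilde{\mathcal{Y}}^{\circ}$ (budget constraint plus the ODT, Lemmas \ref{lem:ihbc} and \ref{lem:suffc}), hence $\mathcal{A}^{\circ}=\widetilde{\mathcal{Y}}^{\circ\circ}$; (b) prove \emph{directly} that $\mathcal{D}$ is closed in $\mu$-measure (Lemma \ref{lem:Dclosed}), via F\"ollmer--Kramkov Fatou convergence of convex combinations of supermartingales together with \v{Z}itkovi\'{c}'s lemma, checking that the Fatou limit is again in $\mathcal{Y}$; (c) invoke the Brannath--Schachermayer bipolar theorem: $\widetilde{\mathcal{Y}}^{\circ\circ}$ is the \emph{smallest} closed convex solid set containing $\widetilde{\mathcal{Y}}$, and since $\mathcal{D}$ is closed, convex, solid and contains $\widetilde{\mathcal{Y}}$, one gets $\mathcal{A}^{\circ}=\widetilde{\mathcal{Y}}^{\circ\circ}\subseteq\mathcal{D}$, which combined with the easy inclusion gives $\mathcal{D}=\mathcal{A}^{\circ}$. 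Your plan to obtain closedness of $\mathcal{D}$ ``for free'' from the bipolarity therefore has the dependency backwards: closedness of $\mathcal{D}$ is an \emph{input} to the hard inclusion, not a consequence of it, and deferring it creates a circularity unless you actually supply a direct construction (a dynamic essential-supremum construction in the style of Kramkov--Schachermayer, adapted to deflated wealth plus cumulative deflated consumption on the infinite horizon), which you do not.

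Two secondary points. First, your worry about splicing finite-horizon applications of the ODT is moot: the Stricker--Yan statement used in the paper (Theorem \ref{thm:syodt}) is formulated on $[0,\infty)$ with arbitrary stopping times, so no patching is needed; what your sketch of $\mathcal{D}^{\circ}\subseteq\mathcal{C}$ does omit is the integration-by-parts and localisation step converting $\mathbb{E}[\int_{0}^{\infty}c_{t}Z_{t}\ud t]\leq 1$ into the hypothesis $\sup_{Z\in\mathcal{Z},T\in\mathcal{T}}\mathbb{E}[Z_{T}C_{T}]\leq 1$ actually required by that theorem. Second, your Doob-maximal-inequality argument for the $L^{0}(\mu)$-boundedness of $\mathcal{D}$ ignores the region where $\gamma$ is large (e.g.\ $\gamma_{t}=\e^{\alpha t}$), where $h\leq\gamma Y$ gives no control from the maximal inequality alone; it can be repaired by splitting on $\{\gamma\leq M\}$ and using finiteness of $\mu$, but the paper's cleaner device is to integrate against the strictly positive element $\overline{g}_{t}=\e^{-\delta t}$ of $\mathcal{C}$ and bound $\mathcal{D}$ in $L^{1}(\overline{g}\ud\mu)$.
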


The proof of Proposition \ref{prop:abp} will be given in Section
\ref{sec:bcbpr}, where we shall establish the infinite horizon budget
constraint, giving a necessary condition for admissible consumption
plans, and a reverse implication, leading to a sufficient condition
for admissibility, culminating in the full bipolarity relations once
we enlarge the dual domain. The derivations in Section \ref{sec:bcbpr}
are quite distinct from previous approaches, and are the bedrock of
the mathematical results. As indicated earlier, we shall establish the
bipolarity results without any recourse whatsoever to constructions
involving ELMMs, by exploiting ramifications of the Stricker and Yan
\cite{sy98} version of the optional decomposition theorem.

\subsection{Abstract duality}
\label{subsec:ad}

Armed with the abstract bipolarity in Proposition \ref{prop:abp}, we
have the following abstract version of the convex duality relations
between the primal problem \eqref{eq:vfabs} and its dual
\eqref{eq:dvfabs}. The theorem shows that all the natural tenets of
utility maximisation theory, as established by Kramkov and
Schachermayer \cite{ks99} in the terminal wealth problem under NFLVR,
extend to infinite horizon inter-temporal problems under NUPBR, with
weak underlying assumptions on the primal and dual domains.

\begin{theorem}[Abstract duality theorem]
\label{thm:adt}

Define the primal value function $u(\cdot)$ by \eqref{eq:vfabs} and the
dual value function by  \eqref{eq:dvfabs}. Assume that the utility function
satisfies the Inada conditions \eqref{eq:inada} and that
\begin{equation}
u(x) > -\infty,\,\forall \, x>0, \quad v(y)< \infty,\,\forall \,y>0.   
\label{eq:minimal}
\end{equation}

Then, with Proposition \ref{prop:abp} in place, we have:

\begin{itemize}

\item[(i)] $u(\cdot)$ and $v(\cdot)$ are conjugate:
\begin{equation}
v(y) = \sup_{x>0}[u(x)-xy], \quad u(x) = \inf_{y>0}[v(y)+xy], \quad
x,y>0.
\label{eq:conjugacy}
\end{equation}

\item[(ii)] The primal and dual optimisers
$\widehat{g}(x)\in\mathcal{C}(x)$ and
$\widehat{h}(y)\in\mathcal{D}(y)$ exist and are unique, so that
\begin{equation*}
u(x) = \int_{\mathbf{\Omega}}U(\widehat{g}(x))\ud\mu, \quad v(y) =
\int_{\mathbf{\Omega}}V(\widehat{h}(y))\ud\mu, \quad x,y>0.  
\end{equation*}

\item[(iii)] With $y=u^{\prime}(x)$ (equivalently,
$x=-v^{\prime}(y)$), the primal and dual optimisers are related by
\begin{equation*}
U^{\prime}(\widehat{g}(x)) = \widehat{h}(y), \quad
\mbox{equivalently}, \quad \widehat{g}(x) =
-V^{\prime}(\widehat{h}(y)), 
\end{equation*}
and satisfy
\begin{equation*}
\langle \widehat{g}(x),\widehat{h}(y)\rangle = xy.
\end{equation*}

\item[(iv)] $u(\cdot)$ and $-v(\cdot)$ are strictly increasing,
strictly concave, satisfy the Inada conditions, and their
derivatives satisfy
\begin{equation*}
xu^{\prime}(x) = \int_{\mathbf{\Omega}}
U^{\prime}(\widehat{g}(x))\widehat{g}(x)\ud\mu, \quad
yv^{\prime}(y) = \int_{\mathbf{\Omega}}
V^{\prime}(\widehat{h}(y))\widehat{h}(y)\ud\mu, \quad
x,y>0. 
\end{equation*}

\end{itemize}

\end{theorem}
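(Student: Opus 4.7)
The plan is to execute the Kramkov--Schachermayer \cite{ks99,ks03} duality programme in its abstract form, using the bipolarity of Proposition \ref{prop:abp}, while keeping in mind the role reversal noted in the introduction: here $\mathcal{C}$ (not $\mathcal{D}$) is the $L^{1}(\mu)$-bounded side.

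From scaling, convexity and solidity of $\mathcal{C}$ and $\mathcal{D}$, together with concavity of $U$ and convexity of $V$, the functions $u$ and $-v$ are concave and nondecreasing on $(0,\infty)$. The Fenchel inequality $U(g)\le V(h)+gh$, integrated against $\mu$ and combined with $\langle g,h\rangle\le xy$ for $g\in\mathcal{C}(x)$ and $h\in\mathcal{D}(y)$ (which is the polar relation $\mathcal{D}=\mathcal{C}^{\circ}$ in Proposition \ref{prop:abp}), immediately yields
$$u(x)\le v(y)+xy,\qquad x,y>0,$$
i.e.\ the easy half of the conjugacy \eqref{eq:conjugacy}.

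Existence and uniqueness of the dual minimiser $\widehat{h}(y)\in\mathcal{D}(y)$ come next. Take a minimising sequence $h_{n}$ for \eqref{eq:dvfabs}; the $L^{0}(\mu)$-boundedness, convexity, solidity and $\mu$-measure closedness of $\mathcal{D}$ supplied by Proposition \ref{prop:abp} allow a Koml\'os-type lemma to produce convex combinations $\tilde{h}_{n}\in\conv(h_{n},h_{n+1},\ldots)$ converging $\mu$-a.e.\ to some $\widehat{h}(y)\in\mathcal{D}(y)$. Convexity of $V$ keeps $\int V(\tilde{h}_{n})\,d\mu$ a minimising sequence, and Fatou's lemma, applied after lower-truncating $V$ via the Fenchel bound $V(h)\ge -g_{0}h+U(g_{0})$ for a fixed $g_{0}\in\mathcal{C}$ and using the $L^{1}(\mu)$-boundedness of $\mathcal{C}$, delivers $\int V(\widehat{h}(y))\,d\mu\le v(y)$; strict convexity of $V$ forces uniqueness.

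The hardest step is the reverse conjugacy $u(x)\ge\inf_{y>0}[v(y)+xy]$, where the role-reversed minimax enters: I would carve out a compact convex subset of the \emph{dual} domain (via Koml\'os-compactness of $L^{0}(\mu)$-bounded level sets of $\mathcal{D}$), and apply the minimax theorem in its ``minimise convex, maximise concave'' form to the Lagrangian $(g,h)\mapsto\int U(g)\,d\mu-\langle g,h\rangle$; interchanging $\inf_{h}$ and $\sup_{g}$ and exhausting $\mathcal{D}$ furnishes the missing inequality. The candidate primal optimiser is then $\widehat{g}(x):=I(\widehat{h}(y))=-V'(\widehat{h}(y))$ with $y=u'(x)$; membership $\widehat{g}(x)\in\mathcal{C}(x)$ is verified via the bipolar relation $\mathcal{C}=\mathcal{D}^{\circ}$, using first-order optimality of $\widehat{h}(y)$ to check the polar bound, and saturation $\langle\widehat{g}(x),\widehat{h}(y)\rangle=xy$ follows from the pointwise equality case in the Fenchel inequality at $U'(\widehat{g}(x))=\widehat{h}(y)$. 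Strict monotonicity, strict concavity and the Inada conditions for $u$ and $-v$ descend from those of $U,V$ via uniqueness of the optimisers; the derivative formulas in (iv) come from differentiating the identity $u(x)=v(y)+xy$ with $y=u'(x)$ (giving $xu'(x)=xy$) and substituting $xy=\langle\widehat{g}(x),\widehat{h}(y)\rangle=\int U'(\widehat{g}(x))\widehat{g}(x)\,d\mu=-\int V'(\widehat{h}(y))\widehat{h}(y)\,d\mu$. The main obstacle is the minimax step, where the compact set must be formed on the dual side and the conjugation direction must be reversed compared to \cite{ks99,ks03}.
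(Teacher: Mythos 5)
Your overall architecture (weak duality from Fenchel plus the polar relations, Koml\'os-type convex-combination extraction, minimax with the compact set placed on the dual side, optimisers linked through the equality case in Fenchel) is the right programme, but two of your load-bearing steps have genuine gaps. The first is dual existence. You replace the required uniform integrability of the negative parts by the Fenchel minorant $V(\tilde h_n)\ge U(g_0)-g_0\tilde h_n$; but this minorant depends on $n$, so Fatou applied to the non-negative functions $V(\tilde h_n)+g_0\tilde h_n-U(g_0)$ only yields
$\int_{\mathbf{\Omega}}V(\widehat h(y))\ud\mu\le v(y)+\bigl[\limsup_{n}\langle g_0,\tilde h_n\rangle-\langle g_0,\widehat h(y)\rangle\bigr]$,
and the bracket need not vanish: the sequence $(g_0\tilde h_n)_{n}$ is bounded in $L^{1}(\mu)$ (by $\mathcal{D}=\mathcal{C}^{\circ}$) but not uniformly integrable, so Fatou gives only $\langle g_0,\widehat h(y)\rangle\le\liminf_n\langle g_0,\tilde h_n\rangle$ and mass can escape in the wrong direction. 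Since $\mathcal{D}$ is \emph{not} $L^{1}(\mu)$-bounded here, the de la Vall\'ee-Poussin route of \cite{ks99} is also unavailable; what is actually needed, and what the paper proves in Lemma \ref{lem:Vminush}, is uniform integrability of $(V^{-}(h^{n}))_{n}$ for $h^{n}\in\mathcal{D}(y)$, obtained by a disjointification argument in the style of \cite[Lemma 1]{ks03} (building $f^{n}=y_{0}h^{0}+\sum_{k\le n}h^{k}\mathbbm{1}_{A_{k}}\in\mathcal{D}(y_{0}+ny)$ and contradicting the weak-duality limit $\liminf_{y\to\infty}v(y)/y\ge 0$). Some argument of this type is indispensable and is missing from your proposal.

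The second gap is on the primal side and in the minimax step. You never establish uniform integrability of $(U^{+}(g))_{g\in\mathcal{C}(x)}$ --- this is exactly where the $L^{1}(\mu)$-boundedness of $\mathcal{C}$ earns its keep (de la Vall\'ee-Poussin, as in the paper's Lemma \ref{lem:Uplusg}) --- yet it is needed twice: for primal existence, and to pass to the limit in the conjugacy argument. Your alternative route to the primal optimiser, defining $\widehat g(x):=-V^{\prime}(\widehat h(y))$ and checking $\widehat g(x)\in\mathcal{C}(x)$ ``by first-order optimality'' of $\widehat h(y)$ through $\mathcal{C}=\mathcal{D}^{\circ}$, is precisely the delicate variational step of \cite{ks99} and is not carried out; the paper avoids it by proving primal existence directly from the compactness lemma for $\mathcal{C}$ together with the primal UI lemma, and only afterwards deduces $U^{\prime}(\widehat g(x))=\widehat h(y)$ and the saturation $\langle\widehat g(x),\widehat h(y)\rangle=xy$ from the equality case in Fenchel once conjugacy is known. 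Finally, ``Koml\'os-compactness of $L^{0}(\mu)$-bounded level sets of $\mathcal{D}$'' is not compactness in any topology to which a minimax theorem applies: the paper uses the genuinely $\sigma(L^{\infty},L^{1})$-compact balls $\mathcal{B}_{n}=\{h\le n\}$, identifies $\lim_{n}\inf_{\mathcal{B}_{n}}\sup_{\mathcal{C}(x)}$ with $\inf_{y>0}[v(y)+xy]$ via $\mathcal{D}=\mathcal{C}^{\circ}$, and then needs the compactness lemma for $\mathcal{C}$ and the uniform integrability of $(U_{n}^{+}(g^{n}))_{n}$ to prove the remaining inequality $\lim_{n}u_{n}(x)\le u(x)$; as written, both your minimax interchange and the subsequent exhaustion of $\mathcal{D}$ are unsupported. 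The Inada-type limits $u^{\prime}(0+)=\infty$, $u^{\prime}(\infty)=0$ also do not simply ``descend'' from $U$; they require the separate arguments of Lemmas \ref{lem:uprime0} and \ref{lem:derivatives}, which again exploit the $L^{1}(\mu)$-boundedness of $\mathcal{C}$.
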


The proof of Theorem \ref{thm:adt} will be given in Section
\ref{sec:pdt}, and uses as its starting point the bipolarity result in
Proposition \ref{prop:abp}. 

The duality proof itself follows some of the classical steps (with
adaptations) of Kramkov and Schachermayer \cite{ks99,ks03}, but there
is an interesting role reversal for the primal and dual sets. In the
terminal wealth problem, the dual domain is bounded in
$L^{1}(\mathbb{P})$, because the constant wealth process
$\mathbbm{1}:\Omega\mapsto 1$ lies in the primal domain. In the
infinite horizon consumption problem, by contrast, the constant
consumption stream $c\equiv 1$ is not admissible, so the dual domain
is not bounded in $L^{1}(\mu)$. Instead, it turns out that
$L^{1}(\mu)$-boundedness is satisfied by the primal domain. The upshot
is that, in a number of places, the method of proof used in
\cite{ks99,ks03} for a property of the primal domain is applied in our
case to a corresponding property in the dual domain (and vice
versa). Examples include the proofs of uniform integrability of the
families $(U^{+}(g))_{g\in\mathcal{C}(x)}$ and
$(V^{-}(h))_{h\in\mathcal{D}(y)}$, a reversed application of the
minimax theorem (replacing a maximisation with a minimisation and so
forth) in proving conjugacy of the value functions, and some
characterisations of the derivatives of the value functions at zero
and infinity. We shall point out these features when proving the
results. This is one of the reasons for our choosing to give a
complete, self-contained treatment with full proofs.

We conclude this section with a small remark (that is by now standard,
but does need stating) on reasonable asymptotic elasticity as an
alternative to assuming finiteness of the dual value function.

\begin{remark}[Reasonable asymptotic elasticity]
\label{rem:aeu}
  
In Theorem \ref{thm:adt} we have assumed only the minimal conditions
in \eqref{eq:minimal} to guarantee non-trivial primal and dual
problems. It is well-known that, in place of the second condition in
\eqref{eq:minimal} of a finitely-valued dual problem, we could have
imposed the reasonable asymptotic elasticity condition of Kramkov and
Schachermayer \cite{ks99} as given in \eqref{eq:AEU}, along with the
assumption that $u(x)<\infty$ for some $x>0$. Then, as in Kramkov and
Schachermayer \cite[Note 2]{ks03}, these conditions would have implied
that $v(y)<\infty$ for all $y>0$.

\end{remark}

\section{Budget constraint and bipolarity relations}
\label{sec:bcbpr}

\subsection{The budget constraint}
\label{subsec:bc}

The first step in the proof of the duality theorem is to establish
bipolarity relations between the primal and dual domains. We shall do
this in stages, first deriving the infinite horizon budget
constraint. This yields the form of the dual problem as a
byproduct. The derivation also lends itself to a discussion of the
rationale for choosing the dual domain to be the set $\mathcal{Y}(y)$
of consumption deflators, and what would have been the ramifications
of instead choosing the wealth deflators or the local martingale
deflators as the dual variables.

\begin{lemma}[Infinite horizon budget constraint]
\label{lem:ihbc}

Let $c\in\mathcal{A}(x)$ be any admissible consumption plan and let
$Y\in\mathcal{Y}(y)$ be any consumption deflator. We then have
the \emph{infinite horizon budget constraint}:
\begin{equation}
\mathbb{E}\left[\int_{0}^{\infty}c_{t}Y_{t}\ud t\right] \leq xy, \quad
\forall\,c\in\mathcal{A}(x),\,Y\in\mathcal{Y}(y).
\label{eq:ihbc}
\end{equation}

\end{lemma}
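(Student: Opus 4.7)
The plan is to read off the inequality directly from the defining supermartingale property of consumption deflators, after a brief scaling step to handle arbitrary initial capital $x$ and deflator level $y$.

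First, I would reduce to the supermartingale constraint built into the definition \eqref{eq:mcY}. By the scaling relation $\mathcal{A}(x)=x\mathcal{A}$, any $c\in\mathcal{A}(x)$ can be written as $c=xc'$ with $c'\in\mathcal{A}$, and the corresponding wealth process satisfies $X=xX'$, where $X'$ is the wealth process from initial capital $1$ with consumption $c'$. Similarly, $\mathcal{Y}(y)=y\mathcal{Y}$ gives $Y=yY'$ with $Y'\in\mathcal{Y}$. Since $Y'\in\mathcal{Y}$ means $X'Y' + \int_{0}^{\cdot}c'_{s}Y'_{s}\,\mathrm{d}s$ is a supermartingale for every $c'\in\mathcal{A}$, multiplying through by the constant $xy$ shows that
\begin{equation*}
M_{t} \;:=\; X_{t}Y_{t} + \int_{0}^{t}c_{s}Y_{s}\,\mathrm{d}s, \qquad t\ge 0,
\end{equation*}
is a supermartingale with $M_{0}=xy$.

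Second, taking expectations and using the supermartingale property gives
\begin{equation*}
\mathbb{E}\!\left[X_{t}Y_{t}\right] + \mathbb{E}\!\left[\int_{0}^{t}c_{s}Y_{s}\,\mathrm{d}s\right] \;=\; \mathbb{E}[M_{t}] \;\le\; M_{0} \;=\; xy, \qquad t\ge 0.
\end{equation*}
Since $X\ge 0$ by admissibility and $Y>0$ by definition of $\mathcal{Y}(y)$, the first term is non-negative and may be dropped, yielding $\mathbb{E}[\int_{0}^{t}c_{s}Y_{s}\,\mathrm{d}s]\le xy$ for every finite $t$.

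Finally, since the integrand $c_{s}Y_{s}$ is non-negative, I would invoke the monotone convergence theorem to let $t\uparrow\infty$ on the left-hand side, obtaining \eqref{eq:ihbc}. There is no real obstacle here; the whole content of the lemma is the observation that the supermartingale constraint chosen in the very definition of $\mathcal{Y}(y)$ was designed precisely to deliver this budget inequality in one line after a non-negativity argument. The only minor point worth stating carefully is the scaling reduction from $\mathcal{A}\times\mathcal{Y}$ to $\mathcal{A}(x)\times\mathcal{Y}(y)$, which is immediate from the homogeneity of both domains.
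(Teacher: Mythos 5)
Your proposal is correct and follows essentially the same route as the paper: both read the bound off the supermartingale property of $XY+\int_{0}^{\cdot}c_{s}Y_{s}\ud s$, drop the non-negative term $X_{t}Y_{t}$, and pass to the limit $t\uparrow\infty$ by monotone convergence. The only difference is that you spell out the scaling reduction $\mathcal{A}(x)=x\mathcal{A}$, $\mathcal{Y}(y)=y\mathcal{Y}$, which the paper leaves implicit.
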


\begin{proof}

Recall the wealth process $X$ incorporating consumption in
\eqref{eq:wealth}. Since
$XY + \int_{0}^{\cdot}c_{s}Y_{s}\ud s$ is a supermartingale and
$XY\geq 0$, we have 
\begin{equation*}
\mathbb{E}\left[\int_{0}^{t}c_{s}Y_{s}\ud s\right] \leq xy, \quad
t\geq 0.  
\end{equation*}
Letting $t\uparrow\infty$ and using monotone convergence we
obtain \eqref{eq:ihbc}.

\end{proof}

\begin{remark}[On alternative choices of dual domain]
\label{rem:alternatives}

The derivation of Lemma \ref{lem:ihbc} allows us to give some of the
rationale for choosing the dual domain as we did.

Suppose instead that we chose the dual domain to be the set
$\mathcal{Y}^{0}(y)$ of supermartingale deflators. Recall the
decomposition in \eqref{eq:wdecomp} of a wealth process $X$
incorporating consumption into a self-financing wealth process $X^{0}$
minus cumulative consumption $C=\int_{0}^{\cdot}c_{s}\ud s$. Now, for
any wealth deflator $Y^{0}\in\mathcal{Y}^{0}(y)$ and
$c\in\mathcal{A}(x)$ we have, on using the It\^o product rule on the
process $CY^{0}$ and re-arranging,
\begin{equation}
XY^{0} + \int_{0}^{\cdot}c_{s}Y^{0}_{s}\ud s =  X^{0}Y^{0} -
\int_{0}^{\cdot}C_{s-}\ud Y^{0}_{s}.
\label{eq:XcXY}
\end{equation}
The right-hand-side of \eqref{eq:XcXY} is a difference of
supermartingales, so not necessarily a supermartingale, and we would
fail to achieve the infinite horizon budget constraint.

Suppose, on the other hand, that we chose the dual domain to be
constructed from the set $\mathcal{Z}$ of local martingale
deflators. This is the route taken by Chau et al \cite{chauetal17} and
by Mostovyi \cite{most15} (except that the deflators were martingales
in \cite{most15}, in tandem with the NFLVR scenario in that paper.)
We would then reach the analogue of \eqref{eq:XcXY} in the form
\begin{equation*}
XZ + \int_{0}^{\cdot}c_{s}Z_{s}\ud s =  X^{0}Z -
\int_{0}^{\cdot}C_{s-}\ud Z_{s},
\end{equation*}
for any $Z\in\mathcal{Z}$. Now, $X^{0}Z$ is a non-negative local
martingale and thus a supermartingale, so using this and that
$XZ\geq 0$, we would obtain
\begin{equation}
\mathbb{E}\left[\int_{0}^{t}c_{s}Z_{s}\ud s\right] \leq x -
\mathbb{E}\left[\int_{0}^{t}C_{s-}\ud Z_{s}\right], \quad
t\geq 0.
\label{eq:EcZ}  
\end{equation}
The process $M:=\int_{0}^{\cdot}C_{s-}\ud Z_{s}$ is a local
martingale. With $(T_{n})_{n\in\mathbb{N}}$ a localising sequence for
$M$, so that
$\mathbb{E}\left[\int_{0}^{T_{n}}C_{s-}\ud
  Z_{s}\right]=0,\,n\in\mathbb{N}$, \eqref{eq:EcZ} would convert to
$\mathbb{E}\left[\int_{0}^{T_{n}}c_{s}Z_{s}\ud s\right] \leq
x,\,n\in\mathbb{N}$. Letting $n\uparrow\infty$ and using monotone
convergence we would obtain a budget constraint
$\mathbb{E}\left[\int_{0}^{\infty}c_{t}Z_{t}\ud t\right]\leq x$. So
far so good. The difficulty in taking this route would arise later,
when enlarging the dual domain to try to reach its bipolar. One seeks
to enlarge the dual domain to processes which are dominated by some
process in the original dual domain, and then to show that the
enlarged domain is closed with respect to convergence in measure
$\mu$. The closedness proof relies on exploiting Fatou convergence of
supermartingales. The limit in this procedure is known to be a
supermartingale, but there is no guarantee that it is a local
martingale deflator. So duality would ultimately fail, unless the
enlarged dual domain was made closed by explicit construction. This is
why Mostovyi \cite{most15} (respectively, Chau et al
\cite{chauetal17}) used a construction of the form in \eqref{eq:wtDy},
invoking the closure. Ultimately, as stated in Proposition
\ref{prop:dense}, all avenues reach the same goal, but the difference
is that in our approach we did not have to invoke a closure. We shall
return to this discussion of dual domains and their relations in
Remark \ref{rem:relbp}, once we have established full bipolarity
between our abstract primal and dual domains.

\end{remark}

From Lemma \ref{lem:ihbc} we obtain the form of the dual problem to
\eqref{eq:vf} by bounding the achievable utility in the familiar way.
For any $c\in\mathcal{A}(x)$ and $Y\in\mathcal{Y}(y)$ we have
\begin{eqnarray*}
\mathbb{E}\left[\int_{0}^{\infty}U(c_{t})\ud\kappa_{t}\right] & \leq & 
\mathbb{E}\left[\int_{0}^{\infty}U(c_{t})\ud\kappa_{t}\right] +
xy - \mathbb{E}\left[\int_{0}^{\infty}c_{t}Y_{t}\ud t\right] \\
& = & \mathbb{E}\left[\int_{0}^{\infty}\left(U(c_{t}) -
c_{t}\gamma_{t}Y_{t}\right)\ud\kappa_{t}\right] + xy \\ 
& \leq & \mathbb{E}\left[\int_{0}^{\infty}
V\left(\gamma_{t}Y_{t}\right)\ud\kappa_{t}\right] + xy, \quad x,y>0,  
\end{eqnarray*}
the last inequality a consequence of \eqref{eq:VUbound}. This
motivates the definition of the dual problem associated with the
primal problem \eqref{eq:vf}, with dual value function $v(\cdot)$
defined by \eqref{eq:dvf}.

\subsection{Bipolar relations}
\label{subsec:bpr}

In economic terms, the budget constraint \eqref{eq:ihbc} says that
initial capital can finance future consumption, and constitutes a
necessary condition for admissible consumption processes. Indeed,
another way of defining admissible consumption plans is to insist
that, at any time $t\geq 0$, current wealth (suitably deflated) must
finance future deflated consumption. We would thus require
\begin{equation*}
X_{t}Y_{t} \geq \mathbb{E}\left[\left.\int_{t}^{\infty}c_{s}Y_{s}\ud
s\right\vert\mathcal{F}_{t}\right], \quad t\geq 0,
\end{equation*}
for all deflators $Y\in\mathcal{Y}(y)$. Re-arranging the above
inequality, we have
\begin{equation*}
\mathbb{E}\left[\left.\int_{0}^{\infty}c_{s}Y_{s}\ud
s\right\vert\mathcal{F}_{t}\right] \leq  X_{t}Y_{t} +
\int_{0}^{t}c_{s}Y_{s}\ud s, \quad t\geq 0.
\end{equation*}
Taking expectations, one recovers the infinite horizon budget
constraint provided that the supermartingale condition in
\eqref{eq:mcY} holds. This is another justification for the choice of
dual domain as we have presented it.

Setting $x=y=1$ in \eqref{eq:ihbc}, the budget constraint gives us
that, for $c\in\mathcal{A}$ and $Y\in\mathcal{Y}$, we have
$\mathbb{E}\left[\int_{0}^{\infty}c_{t}Y_{t}\ud t\right]\leq 1$. We
thus have the implications
\begin{equation}
c\in\mathcal{A} \implies
\mathbb{E}\left[\int_{0}^{\infty}c_{t}Y_{t}\ud t\right]\leq 1, \quad
\forall\,Y\in\mathcal{Y},
\label{eq:forwardc}
\end{equation}
and
\begin{equation}
Y\in\mathcal{Y} \implies
\mathbb{E}\left[\int_{0}^{\infty}c_{t}Y_{t}\ud t\right]\leq 1, \quad
\forall\,c\in\mathcal{A}.
\label{eq:forwardY}
\end{equation}

We wish to establish the reverse implications in some form, if need be
by enlarging the domains. First, we establish the reverse implication
to \eqref{eq:forwardc} in Lemma \ref{lem:suffc} below. This requires
some version of the Optional Decomposition Theorem (ODT), whose
original form is due to El Karoui and Quenez \cite{ekq95} in a
Brownian setting. This was generalised to the locally bounded
semimartingale case by Kramkov \cite{k96} , extended to the
non-locally bounded case by F\"ollmer and Kabanov \cite{fkab98}, and
to models with constraints by F\"ollmer and Kramkov \cite{fk97}.

The relevant version of the ODT for us is the one due to Stricker and
Yan \cite{sy98}, which uses \emph{deflators} (and in particular LMDs)
rather then ELMMs. In the proof of Lemma \ref{lem:suffc} we shall
apply a part of the Stricker and Yan ODT which applies to the
super-hedging of American claims, so is designed to construct a
process which can super-replicate a payoff at an arbitrary time. The
salient observation is that this result can also be used to dominate a
consumption stream, which is how we shall employ it. For clarity and
convenience of the reader, we state here the ODT results we need, and
afterwards specify precisely which results from \cite{sy98} we have
taken.

For $t\geq 0$, let $\mathcal{T}(t)$ denote the set of
$\mathbb{F}$-stopping times with values in $[t,\infty)$. For $t=0$,
write $\mathcal{T}\equiv\mathcal{T}(0)$, and recall the set
$\mathcal{Z}$ of local martingale deflators in \eqref{eq:mcZ}.

\begin{theorem}[Stricker and Yan \cite{sy98} ODT]
\label{thm:syodt}

\begin{itemize}
  
\item[(i)] Let $W$ be an adapted non-negative process. The process
$ZW$ is a supermartingale for each $Z\in\mathcal{Z}$ if and only if
$W$ admits a decomposition of the form
\begin{equation*}
W = W_{0} + (\phi\cdot S) - A,
\end{equation*}
where $\phi$ is a predictable $S$-integrable process such that
$Z(\phi\cdot S)$ is a local martingale for each $Z\in\mathcal{Z}$, $A$
is an adapted increasing process with $A_{0}=0$, and for all
$Z\in\mathcal{Z}$ and $T\in\mathcal{T}$,
$\mathbb{E}[Z_{T}A_{T}]<\infty$. In this case, moreover, we have
$\sup_{Z\in\mathcal{Z},T\in\mathcal{T}}\mathbb{E}[Z_{T}A_{T}]\leq
W_{0}$.

\item[(ii)] Let $b=(b_{t})_{t\geq 0}$ be a non-negative c\`adl\`ag
process such that
$\sup_{Z\in\mathcal{Z},T\in\mathcal{T}}\mathbb{E}[Z_{T}b_{T}]<\infty$. Then
there exists an adapted c\`adl\`ag process $W$ that dominates $b$:
$W_{t}\geq b_{t}$ almost surely for all $t\geq 0$, $ZW$ is a
supermartingale for each $Z\in\mathcal{Z}$, and the smallest such
process $W$ is given by
\begin{equation}
W_{t} = \esssup_{Z\in\mathcal{Z},T\in\mathcal{T}(t)}\frac{1}{Z_{t}}
\mathbb{E}[Z_{T}b_{T}\vert\mathcal{F}_{t}], \quad t\geq 0. 
\label{eq:syam}
\end{equation}

\end{itemize}

\end{theorem}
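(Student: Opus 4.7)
The plan is to prove the two parts essentially independently, with part (ii) reducing to part (i) via a Snell-envelope construction. Throughout I would rely on standard semimartingale calculus and, for the delicate direction, on a bipolar/Hahn--Banach argument in the spirit of the Kramkov--F\"ollmer--Kabanov optional decomposition theorem, adapted to use deflators in $\mathcal{Z}$ rather than equivalent measures.

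For the easy ($\Leftarrow$) direction of part (i), suppose $W = W_{0} + (\phi\cdot S) - A$ with $Z(\phi\cdot S)$ a local martingale for every $Z\in\mathcal{Z}$ and $A$ adapted non-decreasing with $A_{0}=0$. Setting $X^{0}:=W_{0}+(\phi\cdot S)$, the process $ZX^{0}$ is a local martingale. The integration-by-parts formula $Z_{t}A_{t}=\int_{0}^{t}A_{s-}\ud Z_{s}+\int_{0}^{t}Z_{s-}\ud A^{c}_{s}+\sum_{0<s\leq t}Z_{s}\Delta A_{s}$ writes $ZA$ as a local martingale plus an increasing process. Hence $ZW = ZX^{0}-ZA$ is a local supermartingale; non-negativity together with Fatou's lemma promotes it to a true supermartingale. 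Taking expectations at any stopping time $T$ and using $W\geq 0$ then yields $\mathbb{E}[Z_{T}A_{T}]\leq W_{0}$ uniformly in $(Z,T)$.

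The hard ($\Rightarrow$) direction is the main obstacle. I would define, for each stopping time $T\in\mathcal{T}$, the convex cone $\mathcal{K}_{T}:=\{(\phi\cdot S)_{T}:\phi\text{ admissible}\}-L^{0}_{+}(\mathcal{F}_{T})$ and establish, via the Brannath--Schachermayer bipolar theorem applied in $L^{0}_{+}(\mathcal{F}_{T})$, that the hypothesis $\mathbb{E}[Z_{T}(W_{T}-W_{0})]\leq 0$ for every $Z\in\mathcal{Z}$ forces $W_{T}-W_{0}$ into the $L^{0}$-closure of $\mathcal{K}_{T}$. The technical heart of the matter is then to paste these pointwise decompositions into a \emph{single} predictable $S$-integrable process $\phi$ and a \emph{single} adapted increasing process $A$ such that $W=W_{0}+(\phi\cdot S)-A$ globally; this consistency step, which is what the Stricker--Yan argument actually accomplishes, proceeds by exhaustion along a sequence of stopping times, predictable projection, and a Koml\'os-type convex-combination convergence to select $\phi$, combined with the observation that $A:=W_{0}+(\phi\cdot S)-W$ is automatically non-decreasing once global consistency has been achieved.

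For part (ii), I would take \eqref{eq:syam} as the definition of $W$. The assumption $\sup_{Z,T}\mathbb{E}[Z_{T}b_{T}]<\infty$ guarantees $W_{0}<\infty$, and standard lattice/upward-filtering properties of the family $\{Z_{T}^{-1}\mathbb{E}[Z_{T}b_{T}\vert\mathcal{F}_{t}]:Z\in\mathcal{Z},\,T\in\mathcal{T}(t)\}$ yield that the essential supremum in \eqref{eq:syam} is attained along an increasing sequence; this produces a c\`adl\`ag modification and shows that $ZW$ is a supermartingale for every $Z\in\mathcal{Z}$. The domination $W\geq b$ is immediate from taking $T=t$ in the essential supremum, so part (i) then supplies the decomposition of $W$. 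Minimality of $W$ among processes dominating $b$ with the supermartingale property follows from the Snell-envelope construction: any candidate $W'$ with $ZW'$ a supermartingale and $W'\geq b$ must satisfy $Z_{t}W'_{t}\geq \mathbb{E}[Z_{T}b_{T}\vert\mathcal{F}_{t}]$ for every $(Z,T)$, and hence $W'\geq W$.
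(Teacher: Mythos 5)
The paper itself does not prove this theorem: part (i) is quoted from \cite[Theorem 2.1]{sy98} and part (ii) from \cite[Lemma 2.4 and Remark 2]{sy98}, so your attempt has to stand on its own. The pieces you actually carry out are sound: the $\Leftarrow$ half of (i) (product rule on $ZA$, a non-negative local supermartingale is a true supermartingale, and $\mathbb{E}[Z_{T}A_{T}]\leq W_{0}$ from the supermartingale property of the non-negative local martingale $Z\left(W_{0}+(\phi\cdot S)\right)$), and the reduction of (ii) to (i) via the envelope \eqref{eq:syam}, where the domination and minimality arguments are correct --- though you should state explicitly the multiplicative pasting (fork-convexity) property of $\mathcal{Z}$, which is what makes the family in \eqref{eq:syam} upward directed and justifies exchanging the essential supremum with conditional expectations and performing the c\`adl\`ag regularisation.

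The genuine gap is the $\Rightarrow$ half of (i), which is the entire content of the theorem, and your treatment of it is a plan rather than a proof. Two concrete problems. First, the Brannath--Schachermayer bipolar theorem is a statement about subsets of $L^{0}_{+}$; the cone $\mathcal{K}_{T}=\{(\phi\cdot S)_{T}:\phi\ \mbox{admissible}\}-L^{0}_{+}(\mathcal{F}_{T})$ is not contained in $L^{0}_{+}$, so it cannot be applied as you state. What is really needed at this point is a superreplication duality under NUPBR (closedness of the superhedgeable cone together with a Kreps--Yan type separation in the absence of an ELMM), a result of essentially the same depth as the optional decomposition theorem itself, so the step is unsubstantiated and arguably circular. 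There is also a sign slip: the supermartingale hypothesis yields $\mathbb{E}[Z_{T}W_{T}]\leq W_{0}$, not $\mathbb{E}[Z_{T}(W_{T}-W_{0})]\leq 0$; the two differ because $\mathbb{E}[Z_{T}]$ may be strictly less than one. Second, even granting a $T$-by-$T$ superhedging representation $W_{T}\leq W_{0}+(\phi^{T}\cdot S)_{T}$, the aggregation into a \emph{single} predictable $S$-integrable $\phi$ and a \emph{single} increasing $A$ with $W=W_{0}+(\phi\cdot S)-A$ is exactly the technical heart of \cite[Theorem 2.1]{sy98}, and you explicitly defer it (``which is what the Stricker--Yan argument actually accomplishes''), so the decomposition has not been established. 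If the intent is to use the result as quoted, the honest route --- and the one the paper takes --- is simply to cite \cite{sy98} rather than to offer a proof.
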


Part (i) of Theorem \ref{thm:syodt} is taken from \cite[Theorem
2.1]{sy98}. Part (ii) is a combination of \cite[Lemma 2.4 and Remark
2]{sy98}. 

The following lemma establishes the reverse implication to
\eqref{eq:forwardc}.

\begin{lemma}
\label{lem:suffc}
  
Suppose $c$ is a non-negative adapted c\`adl\`ag process that
satisfies, for all $Y\in\mathcal{Y}$,
\begin{equation}
\mathbb{E}\left[\int_{0}^{\infty}c_{t}Y_{t}\ud t\right]\leq 1.
\label{eq:ecz}
\end{equation}
Then $c\in\mathcal{A}$.

\end{lemma}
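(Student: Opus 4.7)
The plan is to produce, for the given process $c$, a self-financing wealth process starting from capital $1$ that super-replicates cumulative consumption $C_t := \int_0^t c_s \ud s$. This is exactly a super-hedging statement for an ``American-type'' target process, and so is tailor-made for Theorem \ref{thm:syodt}. Concretely, I will find $\phi$ predictable and $S$-integrable and an increasing adapted $A\geq 0$ such that the process $W := W_0 + (\phi\cdot S) - A$ dominates $C$ with $W_0 \le 1$; then $X := 1 + (\phi\cdot S) - C = (1-W_0) + (W-C) + A \geq 0$ a.s., which places $c$ in $\mathcal A$ via the strategy $H:=\phi$.

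The first key step is to translate the hypothesis \eqref{eq:ecz} into the uniform bound
\begin{equation*}
\sup_{Z\in\mathcal{Z},\,T\in\mathcal{T}}\mathbb{E}[Z_{T}C_{T}] \leq 1,
\end{equation*}
required to apply Theorem \ref{thm:syodt}(ii) with $b=C$. For any $Z\in\mathcal{Z}$, integration by parts applied to the continuous finite-variation process $C$ gives
\begin{equation*}
Z_{T}C_{T} = \int_{0}^{T}c_{s}Z_{s}\ud s + \int_{0}^{T}C_{s-}\ud Z_{s}.
\end{equation*}
The second term is a non-negative integrand driving a local martingale, so by localising with stopping times $T_n \uparrow \infty$ that reduce it, taking expectations, and letting $n\uparrow\infty$ via Fatou/monotone convergence (the first term is monotone increasing in the upper limit), one obtains $\mathbb{E}[Z_{T}C_{T}] \leq \mathbb{E}\bigl[\int_0^\infty c_s Z_s\ud s\bigr]$. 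Since $\mathcal{Z}\subseteq\mathcal{Y}$ (as noted in the excerpt), the hypothesis \eqref{eq:ecz} applied with $Y=Z$ yields the desired bound by $1$ for every $Z\in\mathcal{Z}$ and every $T\in\mathcal{T}$.

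The second step is then mechanical: Theorem \ref{thm:syodt}(ii) produces a càdlàg adapted $W\geq C$ with $ZW$ a supermartingale for every $Z\in\mathcal{Z}$, and the essential-supremum formula \eqref{eq:syam} at $t=0$ together with $Z_0=1$ gives $W_0 = \sup_{Z,T}\mathbb{E}[Z_T C_T] \leq 1$. Applying Theorem \ref{thm:syodt}(i) to $W$ yields the decomposition $W = W_0 + (\phi\cdot S) - A$ with $\phi$ predictable and $S$-integrable and $A$ adapted and increasing with $A_0=0$. Setting $H:=\phi$, the calculation
\begin{equation*}
1 + (H\cdot S)_t - C_t = (1-W_0) + (W_t - C_t) + A_t \geq 0, \qquad t\geq 0,
\end{equation*}
exhibits a non-negative wealth process corresponding to $c$, so $c\in\mathcal{A}$.

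The only delicate point is the integration-by-parts/localisation argument in the first step, which must be carried out so that the local-martingale piece disappears in expectation and monotone convergence legitimately pushes $T\uparrow\infty$ through the integral against $\ud s$; once this uniform bound on $\mathbb{E}[Z_T C_T]$ is secured, the conclusion is essentially a direct invocation of the two parts of the Stricker–Yan theorem, and no appeal to ELMMs is needed.
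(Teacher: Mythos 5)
Your proposal is correct and follows essentially the same route as the paper: integration by parts plus localisation converts the hypothesis into the uniform bound $\sup_{Z\in\mathcal{Z},T\in\mathcal{T}}\mathbb{E}[Z_{T}C_{T}]\leq 1$, after which parts (ii) and (i) of the Stricker--Yan theorem give a dominating process $W=W_{0}+(\phi\cdot S)-A$ with $W_{0}\leq 1$, and discarding $A$ yields the self-financing wealth process dominating $C$. The only (harmless) difference is that you settle for the inequality $\mathbb{E}[Z_{T}C_{T}]\leq\mathbb{E}[\int_{0}^{\infty}c_{s}Z_{s}\ud s]$ via Fatou, whereas the paper asserts the local-martingale term vanishes in expectation exactly; the inequality is all that is needed.
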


\begin{proof}

Since $c$ is assumed to satisfy \eqref{eq:ecz} for all deflators
$Y\in\mathcal{Y}$, and since $\mathcal{Z}\subseteq\mathcal{Y}$,
\eqref{eq:ecz} is satisfied for any $Z\in\mathcal{Z}$. For such a
local martingale deflator, and for any stopping time
$T\in\mathcal{T}$, the integration by parts formula gives
\begin{equation}
C_{T}Z_{T} = \int_{0}^{T}C_{s-}\ud Z_{s}+
\int_{0}^{T}c_{s}Z_{s}\ud s, \quad T\in\mathcal{T},  
\label{eq:ibpzoc}
\end{equation}
where $C:=\int_{0}^{\cdot}c_{s}\ud s$ is the non-decreasing candidate
cumulative consumption process. The process
$M:=\int_{0}^{\cdot}C_{s-}\ud Z_{s}$ is a local martingale. Let
$(T_{n})_{n\in\mathbb{N}}$ be a localising sequence for $M$, an almost
surely increasing sequence of stopping times with
$\lim_{n\to\infty}T_{n}=\infty$ a.s. such that the stopped process
$M^{T_{n}}_{t}:= M_{t\wedge T_{n}},\,t\geq 0$ is a uniformly
integrable martingale for each $n\in\mathbb{N}$. Therefore,
$\mathbb{E}\left[\int_{0}^{T\wedge T_{n}}C_{s-}\ud Z_{s}\right]=0$ for
each $n\in\mathbb{N}$. Using this along with the finiteness of
$T\in\mathcal{T}$ and the uniform integrability of $M^{T_{n}}$, we
have
\begin{equation*}
\mathbb{E}\left[\int_{0}^{T}C_{s-}\ud Z_{s}\right] =
\mathbb{E}\left[\lim_{n\to\infty}\int_{0}^{T\wedge
T_{n}}C_{s-}\ud Z_{s}\right]  =
\lim_{n\to\infty}\mathbb{E}\left[\int_{0}^{T\wedge
T_{n}}C_{s-}\ud Z_{s}\right] = 0.
\end{equation*}
Using this in \eqref{eq:ibpzoc} we obtain
\begin{equation*}
\mathbb{E}[Z_{T}C_{T}] = \mathbb{E}\left[\int_{0}^{T}Z_{s}c_{s}\ud
  s\right] \leq 1,  
\end{equation*}
the last inequality a consequence of the assumption
\eqref{eq:ecz} and $\mathcal{Z}\subseteq\mathcal{Y}$. Since
$Z\in\mathcal{Z}$ and $T\in\mathcal{T}$ were arbitrary, we have
\begin{equation*}
\sup_{Z\in\mathcal{Z},T\in\mathcal{T}}\mathbb{E}[Z_{T}C_{T}]\leq
1 < \infty.
\end{equation*}
Thus, from part (ii) of Theorem \ref{thm:syodt}, there exists a
c\`adl\`ag process $W$ that dominates $C$, so
$W_{t}\geq C_{t},\,\mathrm{a.s.},\,\forall t\geq 0$, and
$ZW$ is a super-martingale for each $Z\in\mathcal{Z}$. From
\eqref{eq:syam}, the smallest such $W$ given by
\begin{equation*}
W_{t} = \esssup_{Z\in\mathcal{Z},T\in\mathcal{T}(t)}\frac{1}{Z_{t}}
\mathbb{E}[Z_{T}C_{T}\vert\mathcal{F}_{t}], \quad t\geq 0,
\end{equation*}
so that $W_{0}\leq 1$. Further, by part (i) of Theorem
\eqref{thm:syodt}, there exists a predictable $S$-integrable process
$H$ and an adapted increasing process $A$, with $A_{0}=0$, such that
$W$ has decomposition $W=W_{0}+(H\cdot S)-A$, with $Z(H\cdot S)$ a
local martingale for each $Z\in\mathcal{Z}$, and
$\mathbb{E}[Z_{T}A_{T}]<\infty$ for all $Z\in\mathcal{Z}$ and
$T\in\mathcal{T}$.

Since $W$ dominates $C$, we can define a process $X^{0}$ by
\begin{equation*}
X^{0}_{t} := 1 + (H\cdot S)_{t}, \quad t\geq 0,
\end{equation*}
which also dominates $C$, since its initial value is no smaller than
$W_{0}$ and we have dispensed with the increasing process $A$. We
observe that $X^{0}$ corresponds to the value of a self-financing
wealth process with initial capital $1$ which dominates $C$, so that
$c\in\mathcal{A}$.

\end{proof}

We can now assemble consequences of the budget constraint and of Lemma
\ref{lem:suffc} which, combined with the bipolar theorem, gives the
following polarity properties of the set $\mathcal{A}$.

\begin{lemma}[Polarity properties of $\mathcal{A}$]
\label{lem:Aprop}

The set $\mathcal{A}\equiv\mathcal{A}(1)$ of admissible consumption
plans with initial capital $x=1$ is a closed, convex and solid subset
of $L^{0}_{+}(\mu)$. It is equal to the polar of the set
$\widetilde{\mathcal{Y}}\equiv\widetilde{\mathcal{Y}}(1)$ of
\eqref{eq:Ycbar} with respect to measure $\mu$:
\begin{equation}
\mathcal{A} = \widetilde{\mathcal{Y}}^{\circ},
\label{eq:AY0}
\end{equation}
so that
\begin{equation}
\mathcal{A}^{\circ} = \widetilde{\mathcal{Y}}^{\circ\circ},
\label{eq:AY00}
\end{equation}
and $\mathcal{A}$ is equal to its bipolar:
\begin{equation}
\mathcal{A}^{\circ\circ} = \mathcal{A}.
\label{eq:Abipolar}
\end{equation}

\end{lemma}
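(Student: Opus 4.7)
The plan is to verify the stated properties of $\mathcal{A}$ in the order listed, with the bipolar theorem of Brannath and Schachermayer delivering the final two equalities as soon as the key identity $\mathcal{A} = \widetilde{\mathcal{Y}}^{\circ}$ is in hand.

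Convexity of $\mathcal{A}$ is immediate from the linearity of the wealth dynamics in \eqref{eq:wealth} and was already noted in Section \ref{subsec:acp}. Solidity is equally routine: if $c \in \mathcal{A}$ is financed by some strategy $H$ with wealth $X \geq 0$, and $0 \leq c' \leq c$ pointwise, then using the same $H$ produces wealth $X' = X + \int_{0}^{\cdot}(c_{s}-c'_{s})\ud s \geq X \geq 0$, so $c' \in \mathcal{A}$. Since the abstract set is taken modulo $\mu$-null sets, any $g \in L^{0}_{+}(\mu)$ dominated by some $c \in \mathcal{A}$ can be replaced by a c\`adl\`ag adapted representative to which this argument applies.

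The heart of the lemma is the bipolarity identity $\mathcal{A} = \widetilde{\mathcal{Y}}^{\circ}$. The forward inclusion is the infinite horizon budget constraint of Lemma \ref{lem:ihbc} with $x=y=1$, rewritten using the change of variables $\ud t = \gamma_{t}\ud\kappa_{t}$:
\begin{equation*}
\langle c, Y^{\gamma}\rangle
= \mathbb{E}\left[\int_{0}^{\infty} c_{t}\gamma_{t}Y_{t}\ud\kappa_{t}\right]
= \mathbb{E}\left[\int_{0}^{\infty}c_{t}Y_{t}\ud t\right] \leq 1, \quad c \in \mathcal{A},\, Y \in \mathcal{Y}.
\end{equation*}
The reverse inclusion is the content of Lemma \ref{lem:suffc}: if $g \in \widetilde{\mathcal{Y}}^{\circ}$, the polar condition reads $\mathbb{E}[\int_{0}^{\infty} g_{t}Y_{t}\ud t] \leq 1$ for every $Y \in \mathcal{Y}$, so after selecting an adapted c\`adl\`ag representative, Lemma \ref{lem:suffc} (whose proof invoked the Stricker--Yan ODT with $\mathcal{Z}\subseteq\mathcal{Y}$) produces a self-financing wealth process of initial value $1$ dominating the cumulative consumption, certifying $g \in \mathcal{A}$.

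Once $\mathcal{A} = \widetilde{\mathcal{Y}}^{\circ}$ is established, the remaining assertions fall out. Taking polars of both sides yields $\mathcal{A}^{\circ} = \widetilde{\mathcal{Y}}^{\circ\circ}$, which is \eqref{eq:AY00}. Closedness of $\mathcal{A}$ in $\mu$-measure is then automatic from Theorem \ref{thm:bp}(i), since $\mathcal{A}$ has just been exhibited as a polar. Combining this with the convexity and solidity verified in the first step, $\mathcal{A}$ is a closed, convex, solid subset of $L^{0}_{+}(\mu)$; Theorem \ref{thm:bp}(ii) identifies $\mathcal{A}^{\circ\circ}$ as the smallest such set containing $\mathcal{A}$, forcing $\mathcal{A}^{\circ\circ} = \mathcal{A}$.

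The principal obstacle is the reverse direction of $\mathcal{A} = \widetilde{\mathcal{Y}}^{\circ}$: for a generic $g \in L^{0}_{+}(\mu)$ satisfying the polar inequality one must exhibit an adapted c\`adl\`ag modification before Lemma \ref{lem:suffc} can be applied. This is a measurability/regularisation point (handled, for instance, by passing to an optional projection and taking a c\`adl\`ag version), but it is the only place in the argument where one leaves the abstract pairing framework and must re-enter the stochastic process world in which the Stricker--Yan optional decomposition operates.
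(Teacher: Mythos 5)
Your proposal is correct and follows essentially the same route as the paper: the equivalence from Lemma \ref{lem:ihbc} (forward) and Lemma \ref{lem:suffc} (reverse) gives $\mathcal{A}=\widetilde{\mathcal{Y}}^{\circ}$, and the Brannath--Schachermayer bipolar theorem then yields closedness, \eqref{eq:AY00} and $\mathcal{A}^{\circ\circ}=\mathcal{A}$. Your direct verification of convexity and solidity is redundant in this scheme (the paper obtains all three properties at once from $\mathcal{A}$ being a polar, relegating the direct arguments to Remark \ref{rem:Aprop}), and the c\`adl\`ag-representative point you flag for applying Lemma \ref{lem:suffc} to a generic element of the polar is likewise left implicit in the paper's own proof.
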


\begin{proof}

Lemma \ref{lem:suffc}, combined with the implication in
\eqref{eq:forwardc}, gives the equivalence
\begin{equation*}
c\in\mathcal{A} \iff
\mathbb{E}\left[\int_{0}^{\infty}c_{t}Y_{t}\ud t\right]\leq 1, \quad
\forall\,Y\in\mathcal{Y}.  
\end{equation*}
In terms of the measure $\kappa$ of \eqref{eq:kappa} and the set
$\widetilde{\mathcal{Y}}$ in \eqref{eq:Ycbar} of processes
$\gamma Y,\,Y\in\mathcal{Y}$, we have
\begin{equation*}
c\in\mathcal{A} \iff
\mathbb{E}\left[\int_{0}^{\infty}c_{t}Y^{\gamma}_{t}\ud\kappa_{t}\right]\leq
1, \quad \forall\,Y^{\gamma}\in\widetilde{\mathcal{Y}}.
\end{equation*}
Equivalently, in terms of the measure $\mu$, we have
\begin{equation}
c\in\mathcal{A} \iff 
\int_{\mathbf{\Omega}}cY^{\gamma}\ud\mu \leq 1, \quad \forall\,
Y^{\gamma}\in\widetilde{\mathcal{Y}}.
\label{eq:Achar0}
\end{equation}
The characterisation \eqref{eq:Achar0} is the dual representation of
$\mathcal{A}$:
\begin{equation*}
\mathcal{A} = \left\{c\in L^{0}_{+}(\mu):
\langle c,Y^{\gamma}\rangle\leq 1, \quad \mbox{for each
$Y^{\gamma}\in\widetilde{\mathcal{Y}}$}\right\}.
\end{equation*}
This says that $\mathcal{A}$ is the polar of
$\widetilde{\mathcal{Y}}$, establishing \eqref{eq:AY0} and thus
\eqref{eq:AY00}.

Part (i) of the bipolar theorem, Theorem \ref{thm:bp}, along with
\eqref{eq:AY0}, imply that $\mathcal{A}$ is a closed, convex and solid
subset of $L^{0}_{+}(\mu)$ (since it is equal to the polar of a set)
as claimed. Part (ii) of Theorem \ref{thm:bp} gives
$\mathcal{A}^{\circ\circ}\supseteq\mathcal{A}$ with
$\mathcal{A}^{\circ\circ}$ the smallest closed, convex, solid set
containing $\mathcal{A}$. But since $\mathcal{A}$ is itself closed,
convex and solid, we have \eqref{eq:Abipolar}.

\end{proof}

\begin{remark}
\label{rem:Aprop}

There are other ways to obtain the closed, convex and solid properties
of $\mathcal{A}$. First, the equivalence \eqref{eq:Achar0} along with
Fatou's lemma yields that the set $\mathcal{A}$ is closed with respect
to the topology of convergence in measure $\mu$. To see this, let
$(c^{n})_{n\in\mathbb{N}}$ be a sequence in $\mathcal{A}$ which
converges $\mu$-a.e. to an element $c\in L^{0}_{+}(\mu)$. For
arbitrary $Y^{\gamma}\in\widetilde{\mathcal{Y}}$ we obtain, via
Fatou's lemma and the fact that $c^{n}\in\mathcal{A}$ for each
$n\in\mathbb{N}$,
\begin{equation*}
\int_{\mathbf{\Omega}}cY^{\gamma}\ud\mu \leq
\liminf_{n\to\infty}\int_{\mathbf{\Omega}}c^{n}Y^{\gamma}\ud\mu \leq 1,
\end{equation*}
so by \eqref{eq:Achar0}, $c\in\mathcal{A}$, and thus $\mathcal{A}$ is
closed. Further, it is straightforward to establish the convexity of
$\mathcal{A}$ from its definition. Finally, solidity of $\mathcal{A}$
is also clear: if one can dominate a consumption plan
$c\in\mathcal{A}$ with a self-financing wealth process, then one can
also dominate any smaller consumption plan with the same portfolio.
  
\end{remark}

The next step is to attempt to reach some form of reverse polarity
result to \eqref{eq:AY0}. It is here that the enlargement of the dual
domain from $\widetilde{\mathcal{Y}}$ to the set $\mathcal{D}$ of
\eqref{eq:Dy} comes into play.

To see why this enlargement is needed, we first observe from
\eqref{eq:forwardY} that we have
\begin{equation}
Y^{\gamma} \in \widetilde{\mathcal{Y}} \implies
\langle c,Y^{\gamma}\rangle \leq 1, \quad \forall \,
c\in\mathcal{A},
\label{eq:YA}
\end{equation}
which implies that
\begin{equation}
\widetilde{\mathcal{Y}} \subseteq \mathcal{A}^{\circ}.
\label{eq:YA0}  
\end{equation}
We do not have the reverse inclusion, because we do not have the
reverse implication to \eqref{eq:YA}, so cannot write a full
bipolarity relation between sets $\mathcal{A}$ and
$\widetilde{\mathcal{Y}}$. The enlargement from
$\widetilde{\mathcal{Y}}$ to the set $\mathcal{D}$ resolves the issue,
yielding the consumption bipolarity of Lemma \ref{lem:cbp} below. This
procedure, in the spirit of Kramkov and Schachermayer \cite{ks99},
requires us to establish that the enlarged domain is closed in an
appropriate topology. Here is the relevant result.

\begin{lemma}
\label{lem:Dclosed}

The enlarged dual domain $\mathcal{D}\equiv\mathcal{D}(1)$ of
\eqref{eq:Dy} is closed with respect to the topology of convergence in
measure $\mu$.
  
\end{lemma}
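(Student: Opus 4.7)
The plan is to take a sequence $(h^n)_{n \in \mathbb{N}}$ in $\mathcal{D}$ converging in $\mu$-measure to some $h \in L^0_+(\mu)$ and show $h \in \mathcal{D}$. Passing to a subsequence, I may assume $h^n \to h$ $\mu$-almost everywhere. By definition of $\mathcal{D}$, pick $Y^n \in \mathcal{Y}$ with $h^n \leq \gamma Y^n$ $\mu$-a.e.; after replacing $h^n$ by $h^n \wedge \gamma Y^n$ (which leaves it unchanged modulo $\mu$-null sets) the bound holds pointwise on $\mathbf{\Omega}$. The aim is to produce a limit $Y \in \mathcal{Y}$ satisfying $h \leq \gamma Y$ $\mu$-a.e.

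The main tool is the F\"ollmer--Kramkov / Delbaen--Schachermayer Fatou convergence lemma for non-negative c\`adl\`ag supermartingales. Each $Y^n$ is a non-negative c\`adl\`ag supermartingale with $Y^n_0 = 1$, so there exist convex combinations $\widehat{Y}^n \in \conv(Y^n, Y^{n+1}, \ldots)$, a countable dense set $\tau \subset [0, \infty)$, and a non-negative c\`adl\`ag supermartingale $Y$ with $Y_0 = 1$ such that $\widehat{Y}^n_t \to Y_t$ $\mathbb{P}$-a.s.\ for every $t \in \tau$, with $Y$ obtained as the right-regularisation $Y_s = \lim_{t \downarrow s, t \in \tau} Y_t$. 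The matching convex combinations $\widehat{h}^n$ still converge $\mu$-a.e.\ to $h$ and satisfy $\widehat{h}^n \leq \gamma \widehat{Y}^n$ pointwise on $\mathbf{\Omega}$.

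To verify $Y \in \mathcal{Y}$ I would use convexity of $\mathcal{Y}$: each $\widehat{Y}^n$ lies in $\mathcal{Y}$, so for every $c \in \mathcal{A}$ with associated wealth process $X$, the process $X \widehat{Y}^n + \int_0^{\cdot} c_s \widehat{Y}^n_s \ud s$ is a non-negative supermartingale. Conditional Fatou lemmas along $\tau$, combined with right-continuity of the c\`adl\`ag limit, transfer the supermartingale inequality to $X Y + \int_0^{\cdot} c_s Y_s \ud s$, placing $Y$ in $\mathcal{Y}$.

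The hard part is lifting the pointwise bound $\widehat{h}^n \leq \gamma \widehat{Y}^n$ to a $\mu$-a.e.\ bound $h \leq \gamma Y$, given that Fatou convergence $\widehat{Y}^n \to Y$ is guaranteed only on the countable (hence Lebesgue- and $\kappa$-null) set $\tau$, whereas $\mu$ distributes mass on a Lebesgue-positive set of times. To bridge this I would apply a Koml\'os-type theorem in $L^0(\mu)$ to the sequence $\gamma \widehat{Y}^n$, which is $L^0(\mu)$-bounded by virtue of the budget constraint \eqref{eq:ihbc} combined with the existence of a $\mu$-a.e.\ strictly positive element of $\mathcal{C}$ (e.g., $c_t = \epsilon \e^{-\alpha t}$ with $\epsilon, \alpha > 0$ small, admissible via the Stricker--Yan optional decomposition exactly as in the proof of Lemma \ref{lem:suffc}). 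This yields further convex combinations of $\gamma \widehat{Y}^n$ converging $\mu$-a.e.\ to some $\varphi \in L^0_+(\mu)$; Fubini together with the right-continuity of $Y$ identifies $\varphi$ with $\gamma Y$ $\mu$-a.e. The corresponding convex combinations of $\widehat{h}^n$ still converge $\mu$-a.e.\ to $h$, and the pointwise dominating relation passes to the $\mu$-a.e.\ limit, yielding $h \leq \gamma Y$ $\mu$-a.e.\ and hence $h \in \mathcal{D}$, as required.
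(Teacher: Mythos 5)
Your overall strategy is the same as the paper's: apply the F\"ollmer--Kramkov Fatou-convergence lemma to convex combinations of the dominating supermartingales, argue that the Fatou limit $Y$ lies in $\mathcal{Y}$, and then pass the domination $h^{n}\leq\gamma\widehat{Y}^{n}$ to the limit. You also correctly isolate the crux: Fatou convergence is only guaranteed along a countable set $\tau$ of times, which is $\kappa$-null, so one must somehow relate the time-$t$ marginals of the approximating sequence to $Y_{t}$ for (Lebesgue-) a.e.\ $t$. However, the bridge you propose does not close this gap. Applying a Koml\'os/Delbaen--Schachermayer argument in $L^{0}(\mu)$ produces further convex combinations $\gamma\tilde{Y}^{n}$ converging $\mu$-a.e.\ to some $\varphi$, but your assertion that ``Fubini together with the right-continuity of $Y$ identifies $\varphi$ with $\gamma Y$ $\mu$-a.e.'' is unsupported: $\varphi$ is merely a $\mu$-measurable function with no path regularity, and knowing $\tilde{Y}^{n}_{t}\to Y_{t}$ a.s.\ only for $t$ in the Lebesgue-null set $\tau$ gives no handle on $\varphi_{t}$ for $t\notin\tau$. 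What \emph{can} be extracted cheaply from the supermartingale property and conditional Fatou (using $\mathbb{E}[Y_{s}\vert\mathcal{F}_{t}]\to Y_{t}$ as $s\downarrow t$ along $\tau$) is the one-sided bound $Y_{t}\leq\liminf_{n}\tilde{Y}^{n}_{t}$ a.s.\ for every $t$ --- which is the wrong direction for you, since from $h\leq\liminf_{n}\gamma\tilde{Y}^{n}$ you need the pointwise limits to be dominated by (in fact equal to) $\gamma Y$ off a $\mu$-null set.

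The missing ingredient is exactly the result the paper invokes, namely \v{Z}itkovi\'{c}'s Lemma 8: for a Fatou-convergent sequence of supermartingales there is a \emph{countable} exceptional set $K\subset\mathbb{R}_{+}$ such that $Y_{t}=\liminf_{n}\tilde{Y}^{n}_{t}$ a.s.\ for all $t\notin K$, whence $Y=\liminf_{n}\tilde{Y}^{n}$ holds $\Leb\times\mathbb{P}$-a.e.\ and hence $\mu$-a.e. The proof of that lemma genuinely uses the supermartingale structure (the decreasing mean function $t\mapsto\mathbb{E}[Y_{t}]$ has at most countably many discontinuities); it is not a consequence of Fubini plus path right-continuity. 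Note also that the same identification (or at least a careful treatment of the exceptional times) is needed in your step verifying $Y\in\mathcal{Y}$, since the supermartingale inequality for $XY+\int_{0}^{\cdot}c_{s}Y_{s}\ud s$ requires controlling $\liminf_{n}Y^{n}_{s}$ at the fixed conditioning time $s$ and $Y_{u}$ for Lebesgue-a.e.\ $u$ in the integral term. So as written the proposal has a genuine gap at its central step; it becomes a correct proof, essentially identical to the paper's, once the Koml\'os/Fubini claim is replaced by the \v{Z}itkovi\'{c}-type identification (a minor further remark: your strictly positive admissible plan $c_{t}=\epsilon\e^{-\alpha t}$ needs no optional decomposition argument --- holding cash suffices --- though this point is harmless).
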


The proof of Lemma \ref{lem:Dclosed} will be given further
below. First, we use the result of the lemma to establish the
consumption bipolarity result below.

\begin{lemma}[Consumption bipolarity]
\label{lem:cbp}

Given Lemma \ref{lem:Dclosed}, the set $\mathcal{D}$ is a closed,
convex and solid subset of $L^{0}_{+}(\mu)$, and the the sets
$\mathcal{A}$ and $\mathcal{D}$ satisfy the bipolarity relations
\begin{equation}
\mathcal{A} = \mathcal{D}^{\circ}, \quad \mathcal{D} =
\mathcal{A}^{\circ}.
\label{eq:ADc0}
\end{equation}

\end{lemma}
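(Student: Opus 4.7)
My plan is to reduce the lemma to the bipolar theorem of Brannath and Schachermayer (Theorem \ref{thm:bp}), using Lemma \ref{lem:Dclosed} as the main analytic input together with the budget constraint already established in Lemma \ref{lem:ihbc} and the polarity relation $\mathcal{A}=\widetilde{\mathcal{Y}}^{\circ}$ of Lemma \ref{lem:Aprop}. The hard work (closedness of $\mathcal{D}$ in $\mu$-measure) has been done in Lemma \ref{lem:Dclosed}; the remaining steps are essentially elementary manipulations of polars.

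First, I would verify the structural properties of $\mathcal{D}$. Closedness in $\mu$-measure is exactly Lemma \ref{lem:Dclosed}. Convexity follows from the convexity of $\mathcal{Y}$: if $h^{i}\le\gamma Y^{i}$ with $Y^{i}\in\mathcal{Y}$ and $\lambda\in[0,1]$, then $\lambda h^{1}+(1-\lambda)h^{2}\le\gamma(\lambda Y^{1}+(1-\lambda)Y^{2})$, and the latter combination lies in $\mathcal{Y}$. Solidity is built directly into the definition \eqref{eq:Dy}: any $g\in L^{0}_{+}(\mu)$ with $g\le h\le \gamma Y$ is itself dominated by $\gamma Y$.

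Next, I would establish the first bipolarity relation $\mathcal{A}=\mathcal{D}^{\circ}$ via two inclusions. For $\mathcal{A}\subseteq\mathcal{D}^{\circ}$, take $c\in\mathcal{A}$ and $h\in\mathcal{D}$, so $h\le\gamma Y$ $\mu$-a.e.\ for some $Y\in\mathcal{Y}$. Then, using the definition of $\mu=\kappa\times\mathbb{P}$ and $\gamma=(d\kappa_{t}/dt)^{-1}$,
\begin{equation*}
\langle c,h\rangle \le \int_{\mathbf{\Omega}}c\,\gamma Y\,d\mu = \mathbb{E}\!\left[\int_{0}^{\infty}c_{t}\gamma_{t}Y_{t}\,d\kappa_{t}\right] = \mathbb{E}\!\left[\int_{0}^{\infty}c_{t}Y_{t}\,dt\right]\le 1,
\end{equation*}
the last inequality being the infinite horizon budget constraint \eqref{eq:ihbc} with $x=y=1$. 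For the reverse inclusion $\mathcal{D}^{\circ}\subseteq\mathcal{A}$, observe that $\widetilde{\mathcal{Y}}\subseteq\mathcal{D}$: any $Y^{\gamma}=\gamma Y\in\widetilde{\mathcal{Y}}$ trivially satisfies $Y^{\gamma}\le\gamma Y$. Taking polars reverses inclusions, so $\mathcal{D}^{\circ}\subseteq\widetilde{\mathcal{Y}}^{\circ}=\mathcal{A}$ by \eqref{eq:AY0}.

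Finally, for $\mathcal{D}=\mathcal{A}^{\circ}$, I would take polars of the identity just proved to get $\mathcal{A}^{\circ}=\mathcal{D}^{\circ\circ}$. By Theorem \ref{thm:bp}(ii), $\mathcal{D}^{\circ\circ}$ is the smallest closed, convex, solid subset of $L^{0}_{+}(\mu)$ containing $\mathcal{D}$. Since $\mathcal{D}$ has already been shown to be closed, convex, and solid, we obtain $\mathcal{D}^{\circ\circ}=\mathcal{D}$, and hence $\mathcal{D}=\mathcal{A}^{\circ}$. The only non-routine ingredient is the closedness of $\mathcal{D}$ supplied by Lemma \ref{lem:Dclosed}; without that, the bipolar theorem would only give $\mathcal{D}^{\circ\circ}\supsetneq\mathcal{D}$ in general, and the reverse polarity would require passing to the closure, as in \cite{most15,chauetal17}.
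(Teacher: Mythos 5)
Your proof is correct and uses essentially the same ingredients as the paper: closedness of $\mathcal{D}$ from Lemma \ref{lem:Dclosed}, the budget constraint via the polarity $\mathcal{A}=\widetilde{\mathcal{Y}}^{\circ}$ of Lemma \ref{lem:Aprop}, and the Brannath--Schachermayer bipolar theorem. The only difference is cosmetic: you prove $\mathcal{A}=\mathcal{D}^{\circ}$ first and then get $\mathcal{D}=\mathcal{A}^{\circ}$ from $\mathcal{D}^{\circ\circ}=\mathcal{D}$, whereas the paper first identifies $\mathcal{D}=\widetilde{\mathcal{Y}}^{\circ\circ}=\mathcal{A}^{\circ}$ and then takes polars, using $\mathcal{A}^{\circ\circ}=\mathcal{A}$.
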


\begin{proof}

For any $h\in\mathcal{D}$ there will exist an element
$Y^{\gamma}\in\widetilde{\mathcal{Y}}$ such that
$h\leq Y^{\gamma},\,\mu$-almost everywhere. Hence, the implication
\eqref{eq:YA} holds true with $\mathcal{D}$ in place of
$\widetilde{\mathcal{Y}}$:
\begin{equation*}
h\in\mathcal{D} \implies
\langle c,h\rangle\leq 1, \quad \forall\,c\in\mathcal{A},
\end{equation*}
which yields the analogue of \eqref{eq:YA0}:
\begin{equation}
\mathcal{D} \subseteq \mathcal{A}^{\circ}.
\label{eq:DA0}  
\end{equation}
Combining \eqref{eq:AY00} and \eqref{eq:DA0} we have
\begin{equation}
\mathcal{D} \subseteq \widetilde{\mathcal{Y}}^{\circ\circ}.
\label{eq:DY001}  
\end{equation}

Part (ii) of the bipolar theorem, Theorem \ref{thm:bp}, says that
$\widetilde{\mathcal{Y}}^{\circ\circ} \supseteq
\widetilde{\mathcal{Y}}$ and that
$\widetilde{\mathcal{Y}}^{\circ\circ}$ is the smallest closed, convex,
solid set which contains $\widetilde{\mathcal{Y}}$. But $\mathcal{D}$
is also closed, convex and solid (closed due to Lemma
\ref{lem:Dclosed}, convexity following easily from the convexity of
$\widetilde{\mathcal{Y}}$, and solidity is obvious), and by definition
$\mathcal{D}\supseteq\widetilde{\mathcal{Y}}$, so we also have
\begin{equation}
\mathcal{D} \supseteq \widetilde{\mathcal{Y}}^{\circ\circ}.  
\label{eq:DY002}
\end{equation}
Thus, \eqref{eq:DY001} and \eqref{eq:DY002} give
\begin{equation}
\mathcal{D} = \widetilde{\mathcal{Y}}^{\circ\circ}.
\label{eq:DcY00}
\end{equation}
In other words, in enlarging from $\widetilde{\mathcal{Y}}$ to
$\mathcal{D}$ we have succeeded in reaching the bipolar of the former.

Combining \eqref{eq:DcY00} and \eqref{eq:AY00} we see that
$\mathcal{D}$ is the polar of $\mathcal{A}$,
\begin{equation}
\mathcal{D} = \mathcal{A}^{\circ}, 
\label{eq:DcA0}
\end{equation}
so we have the second equality in \eqref{eq:ADc0}. From
\eqref{eq:DcA0} we get $\mathcal{D}^{\circ}=\mathcal{A}^{\circ\circ}$
which, combined with \eqref{eq:Abipolar}, yields the first equality in
\eqref{eq:ADc0}, and the proof is complete.

\end{proof}

It remains to prove Lemma \ref{lem:Dclosed}, which we used above. We
recall the concept of Fatou convergence of stochastic processes from
F\"ollmer and Kramkov \cite{fk97}, that will be needed.

\begin{definition}[Fatou convergence]
\label{def:fc}

Let $(Y^{n})_{n\in\mathbb{N}}$ be a sequence of processes on a
stochastic basis
$(\Omega,\mathcal{F}, \mathbb{F}:=(\mathcal{F}_{t})_{t\geq
  0},\mathbb{P})$, uniformly bounded from below, and let $\tau$ be a
dense subset of $\mathbb{R}_{+}$. The sequence
$(Y^{n})_{n\in\mathbb{N}}$ is said to be \emph{Fatou convergent on
  $\tau$} to a process $Y$ if
\begin{equation*}
Y_{t} = \limsup_{s\downarrow t,\,s\in\tau}\limsup_{n\to\infty}Y^{n}_{s}
=   \liminf_{s\downarrow t,\,s\in\tau}\liminf_{n\to\infty}Y^{n}_{s},
\quad \mbox{a.s $\forall\,t\geq 0$}.
\end{equation*}
If $\tau=\mathbb{R}_{+}$, the sequence is simply called \emph{Fatou
  convergent}. 
  
\end{definition}

The relevant consequence for our purposes is F\"ollmer and Kramkov
\cite[Lemma 5.2]{fk97}, giving a Fatou convergence result for
supermartingales, on a \textit{countable} dense subset of
$\mathbb{R}_{+}$. For the convenience of the reader, we state the
result here.

\begin{lemma}[Fatou convergence of supermartingales, F\"ollmer and
  Kramkov \cite{fk97}, Lemma 5.2]
\label{lem:fk97}

Let $(S^{n})_{n\in\mathbb{N}}$ be a sequence of supermartingales,
uniformly bounded from below, with $S^{n}_{0}=0,\,n\in\mathbb{N}$. Let
$\tau$ be a dense countable subset of $\mathbb{R}_{+}$. Then there is
a sequence $(Y^{n})_{n\in\mathbb{N}}$ of supermartingales, with
$Y^{n}\in\conv(S^{n},S^{n+1},\ldots)$, and a supermartingale $Y$ with
$Y_{0}\leq 0$, such that $(Y^{n})_{n\in\mathbb{N}}$ is Fatou
convergent on $\tau$ to $Y$.
  
\end{lemma}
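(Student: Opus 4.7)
\textbf{Proof plan for Lemma \ref{lem:fk97}.} Since this is stated as a cited result from \cite{fk97}, the author will simply invoke it, but here is how I would reconstruct the proof by combining three classical ingredients: an $L^{0}_{+}$ Koml\'os-type extraction, a diagonal argument across the countable dense set, and Doob regularisation of supermartingale paths.

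First I would reduce to the non-negative case. By hypothesis there is a constant $a\geq 0$ with $S^{n}\geq -a$, and since $S^{n}_{0}=0$ the supermartingale property yields $\mathbb{E}[S^{n}_{t}]\leq 0$, whence $\mathbb{E}[|S^{n}_{t}|]\leq 2a$ for every $n$ and every fixed $t$. Thus at each fixed $t$ the shifted sequence $(S^{n}_{t}+a)_{n}$ lies in $L^{0}_{+}$ and is bounded in $L^{1}$, hence in probability.

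Next, enumerate $\tau=\{t_{1},t_{2},\ldots\}$ and proceed inductively. At stage $k$ apply the Koml\'os/Delbaen--Schachermayer lemma on $L^{0}_{+}$ (a sequence of non-negative random variables bounded in probability admits convex combinations that converge almost surely) to the values at $t_{k}$ of the sequence produced at stage $k-1$, extracting convex combinations of the underlying processes. Convex combinations of supermartingales are supermartingales, and convex combinations of an a.s.\ convergent sequence remain a.s.\ convergent, so convergence at $t_{1},\ldots,t_{k-1}$ is preserved. A diagonal extraction produces supermartingales $Y^{n}\in\conv(S^{n},S^{n+1},\ldots)$ for which $Y^{n}_{t}$ converges a.s.\ to a limit $Y^{\star}_{t}$ at every $t\in\tau$. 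Conditional Fatou, applied to the lower-bounded sequence $Y^{n}+a$, shows $(Y^{\star}_{t})_{t\in\tau}$ is a supermartingale indexed by $\tau$: for $r<s$ in $\tau$, $\mathbb{E}[Y^{\star}_{s}\vert\mathcal{F}_{r}]\leq\liminf_{n}\mathbb{E}[Y^{n}_{s}\vert\mathcal{F}_{r}]\leq\liminf_{n}Y^{n}_{r}=Y^{\star}_{r}$.

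Finally, Doob's upcrossing inequality applied to $Y^{\star}\vert_{\tau}$ guarantees, on a single event of full probability, the existence for every $t\geq 0$ of the right-limit $Y_{t}:=\lim_{s\downarrow t,\,s\in\tau}Y^{\star}_{s}$, which is precisely the Fatou convergence condition equating the iterated $\limsup$ and $\liminf$ in Definition \ref{def:fc}. The right-continuity of $\mathbb{F}$ together with a further conditional Fatou lifts the supermartingale property from $\tau$ to all of $\mathbb{R}_{+}$, and $Y_{0}\leq 0$ drops out from $\mathbb{E}[Y^{n}_{0}]=0$ and the supermartingale inequality along $s\downarrow 0$. The main obstacle is the $L^{0}_{+}$ Koml\'os step: $L^{1}$-boundedness alone does not deliver a.s.-convergent subsequences, and it is the lower-boundedness hypothesis that places us in $L^{0}_{+}$, where bounded-in-probability sequences admit a.s.-convergent \emph{convex combinations}. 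Once this ingredient is in hand, the diagonal extraction and Doob regularisation are entirely standard.
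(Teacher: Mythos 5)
The paper does not prove this lemma at all --- it is quoted verbatim from F\"ollmer and Kramkov \cite{fk97} (their Lemma 5.2) purely for the reader's convenience --- so the only comparison to make is with the cited source, whose argument your reconstruction correctly reproduces. Your route (the Delbaen--Schachermayer $L^{0}_{+}$ convex-combination lemma applied diagonally over the countable dense set $\tau$, conditional Fatou to obtain the supermartingale property of the limit along $\tau$, and Doob upcrossing/regularisation to take right limits along $\tau$, with $Y_{0}\leq 0$ extracted as you indicate) is sound and is essentially the standard proof of the cited result.
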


In Lemma \ref{lem:fk97}, $\conv(S^{n},S^{n+1},\ldots)$ denotes a
convex combination $\sum_{k=n}^{N(n)}\lambda_{k}S^{k}$ for
$\lambda_{k}\in[0,1]$ with $\sum_{k=n}^{N(n)}\lambda_{k}=1$. The
requirement that $S^{n}_{0}=0$ is of course no restriction, since for
a supermartingale with (say) $S^{n}_{0}=1$ (as we shall have when we
apply these results below for supermartingales in $\mathcal{Y}$), we
can always subtract the initial value $1$ to reach a process which
starts at zero.


With this preparation, we can now prove Lemma \ref{lem:Dclosed}.
  
\begin{proof}[Proof of Lemma \ref{lem:Dclosed}]
  
Let $(h^{n})_{n\in\mathbb{N}}$ be a sequence in $\mathcal{D}$,
converging $\mu$-a.e. to some $h\in L^{0}_{+}(\mu)$. We want to show
that $h\in\mathcal{D}$.

Since $h^{n}\in\mathcal{D}$, for each $n\in\mathbb{N}$ we have
$h^{n}\leq\gamma\widehat{Y}^{n},\,\mu$-a.e for some supermartingale
$\widehat{Y}^{n}\in\mathcal{Y}$. With $\tau$ a dense countable subset
of $\mathbb{R}_{+}$, Lemma \ref{lem:fk97} implies that there exists a
sequence $(Y^{n})_{n\in\mathbb{N}}$ of supermartingales, with each
$Y^{n}\in\conv(\widehat{Y}^{n},\widehat{Y}^{n+1},\ldots)$, and a
supermartingale $Y$, such that $(Y^{n})_{n\in\mathbb{N}}$ is Fatou
convergent on $\tau$ to $Y$.

Note that, because $\mathcal{Y}$ is a convex set, for each
$n\in\mathbb{N}$ we have $Y^{n}\in\mathcal{Y}$. Furthermore, by
\v{Z}itkovi\'{c} \cite[Lemma 8]{z02} (proven there for finite horizon
processes, but it is straightforward to verify that the proof goes
through without alteration for infinite horizon processes), there is a
countable set $K\subset\mathbb{R}_{+}$ such that for
$t\in\mathbb{R}_{+}\setminus K$, we have
$Y_{t}=\liminf_{n\to\infty}Y^{n}_{t}$ almost surely, and hence also
$Y=\liminf_{n\to\infty}Y^{n}$, $\mu$-almost everywhere (since these
differ only on a set of measure zero), and indeed
$Y=\liminf_{n\to\infty}Y^{n}$, $\Leb\times\mathbb{P}$-almost
everywhere, where $\Leb$ denotes Lebesgue measure on
$\mathbb{R}_{+}$. We shall use these latter properties shortly.

With $c\in\mathcal{A}$ an admissible consumption plan and $X$ the
associated wealth process, define a supermartingale sequence
$(\widehat{V}^{n})_{n\in\mathbb{N}}$ by
$\widehat{V}^{n}:=X\widehat{Y}^{n} +
\int_{0}^{\cdot}c_{s}\widehat{Y}^{n}_{s}\ud s$. Then, with
$Y^{n}=\sum_{k=n}^{N(n)}\lambda_{k}\widehat{Y}^{k}$ denoting the
convex combination which constructs $(Y^{n})_{n\in\mathbb{N}}$ from
$(\widehat{Y}^{n})_{n\in\mathbb{N}}$, define a corresponding sequence
$(V^{n})_{n\in\mathbb{N}}$ by
\begin{equation*}
V^{n} := \sum_{k=n}^{N(n)}\lambda_{k}\widehat{V}^{k} =
\sum_{k=n}^{N(n)}\lambda_{k}
\left(X\widehat{Y}^{k}+\int_{0}^{\cdot}c_{s}\widehat{Y}^{k}_{s}\ud
 s\right) = XY^{n} + \int_{0}^{\cdot}c_{s}Y^{n}_{s}\ud s.
\end{equation*}
Because $Y^{n}\in\mathcal{Y}$ and $c\in\mathcal{A}$ is an admissible
consumption plan (equivalently, $(X,c)$ is an admissible
investment-consumption strategy),
$XY^{n}+\int_{0}^{\cdot}c_{s}Y^{n}_{s}\ud s$ is a supermartingale for
each $n\in\mathbb{N}$, so that
\begin{equation*}
\mathbb{E}\left[\left.X_{t}Y^{n}_{t} + \int_{s}^{t}c_{u}Y^{n}_{u}\ud
u\right\vert\mathcal{F}_{s}\right] \leq X_{s}Y^{n}_{s}, \quad 0\leq
s\leq t <\infty, \quad n\in\mathbb{N}.
\end{equation*}
Using this, along with the property that $Y=\liminf_{n\to\infty}Y^{n}$,
$\Leb\times\mathbb{P}$-almost everywhere and Fatou's lemma, we have
\begin{eqnarray*}
\mathbb{E}\left[\left.X_{t}Y_{t} + \int_{s}^{t}c_{u}Y_{u}\ud
u\right\vert\mathcal{F}_{s}\right]   & = &
\mathbb{E}\left[\left.\liminf_{n\to\infty}X_{t}Y^{n}_{t} +
\int_{s}^{t}\liminf_{n\to\infty}c_{u}Y^{n}_{u}\ud
u\right\vert\mathcal{F}_{s}\right] \\
& \leq & \liminf_{n\to\infty}\mathbb{E}\left[\left.X_{t}Y^{n}_{t} +
\int_{s}^{t}c_{u}Y^{n}_{u}\ud u\right\vert\mathcal{F}_{s}\right] \\
& \leq &  \liminf_{n\to\infty}X_{s}Y^{n}_{s} \\
& = & X_{s}Y_{s}, \quad  0\leq s\leq t <\infty,
\end{eqnarray*}
which yields the supermartingale property for the process
$XY+\int_{0}^{\cdot}c_{u}Y_{u}\ud u$, and hence that $Y\in\mathcal{Y}$
(since $(X,c)$ is an admissible investment-consumption strategy).

Because $h^{n}\leq\gamma\widehat{Y}^{n},\,\mu$-a.e.for each
$n\in\mathbb{N}$, and since
$Y^{n}=\sum_{k=n}^{N(n)}\lambda_{k}\widehat{Y}^{k}$, we have
\begin{equation}
\gamma Y^{n} =
\sum_{k=n}^{N(n)}\lambda_{k}\gamma\widehat{Y}^{k}
\geq \sum_{k=n}^{N(n)}\lambda_{k}h^{k}, \quad \mbox{$\mu$-a.e.} 
\label{eq:conv}
\end{equation}
Now using that $Y=\liminf_{n\to\infty}Y^{n}$, $\mu$-almost everywhere,
taking the limit inferior in \eqref{eq:conv} and recalling that
$(h^{n})_{n\in\mathbb{N}}$ converges $\mu$-a.e. to $h$, we obtain
\begin{equation*}
\gamma Y \geq \liminf_{n\to\infty}\sum_{k=n}^{N(n)}\lambda_{k}h^{k} = h,
\quad \mbox{$\mu$-a.e.}   
\label{eq:conv1}
\end{equation*}
That is, $h\leq \gamma Y$ $\mu$-a.e, for $Y\in\mathcal{Y}$, so
$h\in\mathcal{D}$, and thus $\mathcal{D}$ is closed.

\end{proof}

With the consumption bipolarity of Lemma \ref{lem:cbp}, we have in
fact established Proposition \ref{prop:abp}, so let us confirm this.

\begin{proof}[Proof of Proposition \ref{prop:abp}]

With the identification $\mathcal{C}=\mathcal{A}$ (from the definition
\eqref{eq:Cx}), and the properties
of $\mathcal{A}$ established in Lemma \ref{lem:Aprop}, we have all the
claimed properties of $\mathcal{C}$ in items (i) and (ii). The
corresponding assertions for $\mathcal{D}$ follow from Lemma
\ref{lem:cbp}.

The property that $\mathcal{C}$ is bounded in $L^{1}(\mu)$ follows
from the second property in \eqref{eq:kappa}, that the integral of
cumulative consumption with respect to
$\kappa\times\mathbb{P}\equiv\mu$, is bounded, and thus
$\mathcal{C}$ is bounded in $L^{1}(\mu)$ and hence in $L^{0}(\mu)$.

For the $L^{0}$-boundedness of $\mathcal{D}$, we shall find a positive
element $\overline{g}\in\mathcal{C}$ and show that $\mathcal{D}$ is
bounded in $L^{1}(\overline{g}\ud\mu)$, and hence bounded in
$L^{0}(\mu)$. Choose
$\mathcal{A}\owns c_{t}\equiv \overline{c}_{t}:=\e^{-\delta t},\,t>0$,
for some $\delta>1$. It is easy to verify that with $x=1$ and
$H\equiv 0$ in \eqref{eq:wealth}, we have $X\geq 0,\,\mu$-a.e., so
$\overline{c}\in\mathcal{A}$. We observe that
$\mathcal{C}\owns\overline{g}\equiv\overline{c}$ is strictly positive
except on a set of $\mu$-measure zero. We then have, for any
$h\in\mathcal{D}$, so $h\leq\gamma Y$ for some $Y\in\mathcal{Y}$
(satisfying $\mathbb{E}[Y_{t}]\leq 1,\,t\geq 0$),
\begin{equation*}
\int_{\mathbf{\Omega}}\overline{g}h\ud\mu \leq
\mathbb{E}\left[\int_{0}^{\infty}\e^{-\delta t}Y_{t}\ud t\right]
= \int_{0}^{\infty}\e^{-\delta t}\mathbb{E}[Y_{t}]\ud t \leq
\frac{1}{\delta}.
\end{equation*}
Thus, $\mathcal{D}$ is bounded in $L^{1}(\overline{g}\ud\mu)$ and
hence bounded in $L^{0}(\mu)$.

\end{proof}

The $L^{1}(\mu)$-boundedness of the primal domain $\mathcal{C}$ is to
be contrasted with the terminal wealth problem of Kramkov and
Schachermayer \cite{ks99,ks03}, in which the dual domain is bounded in
$L^{1}(\mathbb{P})$. This is the source of a switching of roles of the
primal and dual domains in the consumption problem compared with the
terminal wealth problem, and will manifest itself on numerous
occasions in the course of proving the duality theorem in the next
section.

\subsection{Local martingale deflators versus consumption
deflators}
\label{subsec:comp}

We can now return to the discussion of Section \ref{subsec:add}, in
which we made comparisons with the approaches to bipolarity in Chau et
al \cite{chauetal17} and Mostovyi \cite{most15}. This will lead us to
the proof of Proposition \ref{prop:dense}. The proof will demonstrate
that the approach in \cite{most15,chauetal17} can get a fair way
towards establishing bipolarity between $\mathcal{A}$ and the set
$\widetilde{\mathcal{Z}}$, defined analogously to
$\widetilde{\mathcal{Y}}$ in \eqref{eq:Ycbar}, by 
\begin{equation*}
\widetilde{\mathcal{Z}} := \left\{Z^{\gamma}: Z^{\gamma}:=\gamma Z, \,
  Z\in\mathcal{Z}\right\}. 
\end{equation*}
One can get a little further by enlarging to $D$, but it is then
necessary to invoke the closure $\overline{D}$ to reach full
bipolarity. This establishes the result of the proposition, and shows
how the dual domain we chose is not too big, and not too small, to
establish bipolarity. A byproduct of the proof is that it shows how
results analogous to Mostovyi \cite[Lemma 4.2]{most15} and Chau et al
\cite[Lemma 1]{chauetal17}, which give an equivalence between an
admissible consumption plan and an appropriate budget constraint
involving either local martingale deflators (in \cite{chauetal17}) or
martingale deflators (in \cite{most15}) can be established without
recourse to constructions involving equivalent measures, by judicious
use of the Stricker and Yan \cite{sy98} ODT, rather like the proof of
Lemma \ref{lem:suffc}. As we pointed out in Section
\ref{subsubsec:completion}, this is both an aesthetic and
mathematically desirable feature.

\begin{proof}[Proof of Proposition \ref{prop:dense}]

Consider a consumption plan with initial capital $x=1$. Using the same
arguments as in Remark \ref{rem:alternatives} we establish the
analogue of \eqref{eq:EcZ} for $x=1$:
\begin{equation}
\mathbb{E}\left[\int_{0}^{t}c_{s}Z_{s}\ud s\right] \leq 1 -
\mathbb{E}\left[\int_{0}^{t}C_{s-}\ud Z_{s}\right], \quad
t\geq 0.
\label{eq:EcZ1}  
\end{equation}
The process $M:=\int_{0}^{\cdot}C_{s-}\ud Z_{s}$ is a local
martingale. With $(T_{n})_{n\in\mathbb{N}}$ a localising sequence for
$M$, so that
$\mathbb{E}\left[\int_{0}^{T_{n}}C_{s-}\ud
  Z_{s}\right]=0,\,n\in\mathbb{N}$, \eqref{eq:EcZ1} converts to
$\mathbb{E}\left[\int_{0}^{T_{n}}c_{s}Z_{s}\ud s\right] \leq
1,\,n\in\mathbb{N}$. Letting $n\uparrow\infty$ and using monotone
convergence we obtain a budget constraint in the form
$\mathbb{E}\left[\int_{0}^{\infty}c_{t}Z_{t}\ud t\right]\leq 1$. We
thus have the implications analogous to \eqref{eq:forwardc} and
\eqref{eq:forwardY}:
\begin{equation}
c\in\mathcal{A} \implies
\mathbb{E}\left[\int_{0}^{\infty}c_{t}Z_{t}\ud t\right]\leq 1, \quad
\forall\,Z\in\mathcal{Z},
\label{eq:forwardcZ}
\end{equation}
and
\begin{equation}
Z\in\mathcal{Z} \implies
\mathbb{E}\left[\int_{0}^{\infty}c_{t}Z_{t}\ud t\right]\leq 1, \quad
\forall\,c\in\mathcal{A}.
\label{eq:forwardZ}
\end{equation}
We can then establish the reverse implication to \eqref{eq:forwardcZ}
in exactly the same manner as in the proof of Lemma
\ref{lem:suffc}. In other words, if $c$ is a non-negative process
satisfying the budget constraint, then it is an admissible consumption
plan. That is, we have
\begin{equation}
\mathbb{E}\left[\int_{0}^{\infty}c_{t}Z_{t}\ud t\right]\leq 1, \,
\forall\, Z\in\mathcal{Z} \implies c\in\mathcal{A}.
\label{eq:ecz1}
\end{equation}
Thus, following the same arguments as in the proof of Lemma
\ref{lem:Aprop}, we have, from \eqref{eq:forwardcZ} and
\eqref{eq:ecz1},
\begin{equation*}
\mathcal{A} = \left\{c\in L^{0}_{+}(\mu):
\langle c,\gamma Z\rangle\leq 1, \quad \mbox{for each
$Z^{\gamma}=\gamma Z\in\widetilde{\mathcal{Z}}$}\right\},
\end{equation*}
so that $\mathcal{A}$ is the polar of $\widetilde{\mathcal{Z}}$:
\begin{equation}
\mathcal{A} = \widetilde{\mathcal{Z}}^{\circ},
\label{eq:AZ0}
\end{equation}
implying
\begin{equation}
\mathcal{A}^{\circ} = \widetilde{\mathcal{Z}}^{\circ\circ},
\label{eq:AZ00}
\end{equation}
and that $\mathcal{A}$ is equal to its bipolar:
\begin{equation*}
\mathcal{A}^{\circ\circ} = \mathcal{A},
\end{equation*}
by the same arguments as in the proof of Lemma \ref{lem:Aprop}.

Now, \eqref{eq:forwardZ} gives us that
\begin{equation*}
\widetilde{\mathcal{Z}} \subseteq \mathcal{A}^{\circ},
\end{equation*}
by the same arguments that led to \eqref{eq:YA0}. We do not have the
reverse inclusion, because we do not have the reverse implication to
\eqref{eq:forwardZ}, so cannot write a full bipolarity relation
between sets $\mathcal{A}$ and $\widetilde{\mathcal{Z}}$. To this end,
one can try enlarging the dual domain from $\widetilde{\mathcal{Z}}$
to $D$, in the same manner that we enlarged from
$\widetilde{\mathcal{Y}}$ to the set $\mathcal{D}$ when using
consumption deflators as dual variables. This yields, in the same
manner as we established \eqref{eq:DA0},
\begin{equation}
D \subseteq \mathcal{A}^{\circ}.
\label{eq:DA01}  
\end{equation}
Combining \eqref{eq:AZ00} and \eqref{eq:DA01} we have
\begin{equation}
D \subseteq \widetilde{\mathcal{Z}}^{\circ\circ}.
\label{eq:DZ001}  
\end{equation}
Here is the crucial point: to establish the reverse inclusion to
\eqref{eq:DZ001} would require that the set $D$ is closed with respect
to the topology of convergence in measure $\mu$. But the arguments we
used for the proof of Lemma \ref{lem:Dclosed} to establish this
property for the domain $\mathcal{D}$, break down when applied to the
set $D$, because the limiting supermartingale in the Fatou convergence
argument is known only to be a supermartingale, and cannot be shown to
be a local martingale deflator. So we are forced to enlarge $D$ itself
to its closure $\overline{D}$.

With this enlargement to $\overline{D}$, we first show that
\eqref{eq:DA01}, and hence \eqref{eq:DZ001}, extend from $D$ to
$\overline{D}$. Suppose $(h^{n})_{n\in\mathbb{N}}$ is a sequence in
$D\subseteq\overline{D}$ that converges $\mu$-a.e. to some element
$h\in L^{0}_{+}(\mu)$. Then $h\in\overline{D}$, since $\overline{D}$
is closed in $\mu$-measure (and a $\mu$-a.e. convergent sequence
must also converge in measure $\mu$). Using Fatou's lemma and that
$h^{n}\in D$ we have, for any $c\in\mathcal{A}$
\begin{equation*}
\langle c,h\rangle = \langle c,\lim_{n\to\infty}h^{n}\rangle \leq
\lim_{n\to\infty}\langle c,h^{n}\rangle \leq 1.  
\end{equation*}
Thus, we get the implication
$h\in\overline{D}\implies \langle c,h\rangle\leq 1,\,\forall
c\in\mathcal{A}$, so we extend \eqref{eq:DA01} and, in particular,
\eqref{eq:DZ001} from $D$ to $\overline{D}$:
\begin{equation*}
\overline{D} \subseteq \widetilde{\mathcal{Z}}^{\circ\circ},
\end{equation*}
which is the analogue of \eqref{eq:DY001}. Finally, using the bipolar
theorem in the same manner as the last part of the proof of Lemma
\ref{lem:cbp}, we establish bipolarity between $\overline{D}$ and
$\mathcal{A}$:
\begin{equation}
\mathcal{A} = \overline{D}^{\circ}, \quad \overline{D} =
\mathcal{A}^{\circ}.
\label{eq:AoDc0}
\end{equation}
Comparing \eqref{eq:AoDc0} with \eqref{eq:ADc0} shows that we have
$\mathcal{D}=\overline{D}$, so $D$ is dense in $\mathcal{D}$, and the
proof is complete.

\end{proof}

\begin{remark}[Relations between the bipolars of dual domains]
\label{rem:relbp}

We can now round off the discussion initiated in Remark
\ref{rem:alternatives}, regarding relations between the various dual
domains that one might use in establishing a consumption duality.  In
particular, we examine how these relations transform when passing to
the bipolar in the product space.

For brevity, in this remark we shall use the product space
$\nu:=\Leb\times\mathbb{P}$. Similar remarks pertain with respect to
$\mu=\kappa\times\mathbb{P}$, bringing in the extraneous factor
$\gamma$ in defining the solid hulls of the dual domains. Denote by
$\polar(A)$ the polar of any set $A\subset L^{0}_{+}(\nu)$, with
$\bipolar(\cdot)$ and $\solid(\cdot)$ denoting the bipolar and solid
hull, respectively.

In the original dual space, we have the inclusions
\begin{equation*}
\mathcal{Z} \subseteq \mathcal{Y} \subseteq \mathcal{Y}^{0},
\end{equation*}
as noted in Section \ref{subsec:dfcp}. The budget constraint of Lemma
\ref{lem:ihbc} combined with Lemma \ref{lem:suffc} yielded the
properties in Lemma \ref{lem:Aprop}, and in particular the property
\eqref{eq:AY0} that the primal domain was the polar of the original
dual domain of consumption deflators. The proof of Proposition
\ref{prop:dense} showed that the same property held with local
martingale deflators as dual variables (see \eqref{eq:AZ0}). Thus, in
the notation of this remark, we have
\begin{equation*}
\polar(\mathcal{Z}) = \polar(\mathcal{Y}), 
\end{equation*}
with both sets equal to the primal domain $\mathcal{A}$.

Passing to the bipolar, we established perfect bipolarity between the
primal and dual domains by enlarging the dual space to its solid hull
and showing (in Lemma \ref{lem:Dclosed}) that the resulting domain was
closed. In the notation of this remark, we have
$\bipolar(\mathcal{Y})=\solid(\mathcal{Y})$, and thus
\begin{equation}
\bipolar(\mathcal{Z}) = \solid(\mathcal{Y}).   
\label{eq:bipolar1}
\end{equation}
The content of Proposition \ref{prop:dense} is that we cannot replace
$\bipolar(\mathcal{Z})$ with $\solid(\mathcal{Z})$ in
\eqref{eq:bipolar1} unless we take the closure,
$\cl(\solid(\mathcal{Z}))$, becuase the Fatou supermartingale
convergence method could not guarantee a local martingale deflator as
the limiting supermartingale.

The final step in this chain of results is to incorporate the set
$\mathcal{Y}^{0}$ of wealth deflators. Clearly
$\solid(\mathcal{Y}^{0})\supseteq\solid(\mathcal{Y})$. Moreover, by a
similar (and easier) proof as for Lemma \ref{lem:Dclosed},
$\solid(\mathcal{Y}^{0})$ is closed (in essence, a convex combination
of wealth deflators Fatou converges to a supermartingale, while the
corresponding deflated self-financing wealth also converges to a
deflated wealth supermartingale, so the limiting supermartingale is a
wealth deflator, and the rest of the proof rests on the solidity of
$\solid(\mathcal{Y}^{0})$). The remaining arguments are as for
$\solid(\mathcal{Y})$, except for the crucial proviso that, as
indicated in Remark \ref{rem:alternatives}, we are not able to
establish the infinite horizon budget constraint with wealth
deflators, so we appear to have
$\mathcal{A}\not\subseteq\polar(\mathcal{Y}^{0})$, indicating that the
inclusion $\solid(\mathcal{Y}^{0})\supset\solid(\mathcal{Y})$ is
strict. We thus have
\begin{equation*}
\bipolar(\mathcal{Z}) = \solid(\mathcal{Y})
\subset\solid(\mathcal{Y}^{0}). 
\end{equation*}

This remark thus indicates some open questions. If one could show that
the budget constrant holds with wealth deflators, then one would
arrive at the equality $\mathcal{Y}=\mathcal{Y}^{0}$. (By the same
token, if one were able to show that $\solid(\mathcal{Z})$ is closed,
one would arrive at $\mathcal{Z}=\mathcal{Y}$.) One can envisage
concrete models where the local martingale deflators and the wealth
deflators coincide with the consumption deflators (think of the
Brownian models that form the building block of the monograph of
Karatzas and Shreve \cite{ks98}). The message here is that, at the
level of abstraction of this paper, utilising Fatou convergence
techniques for supermartingales, it does not appear possible to show
that the dual domains coalesce. It would appear that in order to
establish (for instance) $\mathcal{Y}=\mathcal{Y}^{0}$ in full
generality, would require new techniques.
  
\end{remark}

\section{Proofs of the duality theorems}
\label{sec:pdt}

In this section we prove the abstract duality of Theorem
\ref{thm:adt}, from which the concrete duality of Theorem
\ref{thm:consd} is then deduced. Throughout this section, we have in
place the result of Proposition \ref{prop:abp}, as this bipolarity is
the starting point of the duality proof. The proof of Theorem
\ref{thm:adt} proceeds via a series of lemmas. Some of them have a
similar flavour to the steps in the celebrated Kramkov and
Schachermayer \cite{ks99,ks03} abstract duality proof, but in many
places the roles of the primal and dual domains are reversed compared
to \cite{ks99,ks03}. This is because in \cite{ks99,ks03} the dual
domain is $L^{1}(\mathbb{P})$-bounded, but here it is the primal
domain that is $L^{1}(\mu)$-bounded.

Let us state the basic properties that are taken as given throughout
this section.

\begin{fact}
\label{fact:bf}
  
Throughout this section, assume that the utility function satisfies
the Inada conditions \eqref{eq:inada}, that the sets $\mathcal{C}$ and
$\mathcal{D}$ satisfy all the properties in Proposition
\ref{prop:abp}, and that the abstract primal and dual value functions
in \eqref{eq:vfabs} and \eqref{eq:dvfabs} satisfy the minimal
conditions in \eqref{eq:minimal}.

\end{fact}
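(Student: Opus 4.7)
The plan is to recognise that Fact \ref{fact:bf} contains no new assertion requiring proof: it is a standing declaration consolidating three hypotheses that will be in force throughout Section \ref{sec:pdt}. My task would therefore be to verify that each ingredient listed is either a hypothesis already imposed in the paper or a theorem that has already been established.

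First I would point to the Inada conditions \eqref{eq:inada}: these are modelling assumptions introduced at the beginning of Section \ref{sec:cpd} as part of the definition of admissible utility functions, so no derivation is needed. Second, I would note that the minimal finiteness conditions in \eqref{eq:minimal}, namely $u(x)>-\infty$ for all $x>0$ and $v(y)<\infty$ for all $y>0$, are likewise \emph{assumptions}, imposed explicitly in the statement of Theorem \ref{thm:adt} (and consistent, via Remark \ref{rem:aeu}, with the reasonable asymptotic elasticity condition \eqref{eq:AEU} should one wish to trade them for that alternative). Neither of these items calls for argument.

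The only substantive ingredient is the assertion that $\mathcal{C}$ and $\mathcal{D}$ satisfy all the properties in Proposition \ref{prop:abp}. This has been fully proven in Section \ref{sec:bcbpr}, with the proof assembled from Lemma \ref{lem:Aprop} (closedness, convexity and solidity of $\mathcal{C}=\mathcal{A}$ together with $\mathcal{A}=\widetilde{\mathcal{Y}}^{\circ}$ and $\mathcal{A}^{\circ\circ}=\mathcal{A}$), from Lemma \ref{lem:Dclosed} combined with Lemma \ref{lem:cbp} (closedness, convexity and solidity of $\mathcal{D}$, and the bipolarity $\mathcal{D}=\mathcal{A}^{\circ}$, $\mathcal{A}=\mathcal{D}^{\circ}$), and from the closing paragraphs of the proof of Proposition \ref{prop:abp} itself, where $L^{1}(\mu)$-boundedness of $\mathcal{C}$ was deduced from the integrability condition \eqref{eq:kappa} and $L^{0}(\mu)$-boundedness of $\mathcal{D}$ was obtained by exhibiting the admissible witness $\overline{c}_{t}=\mathrm{e}^{-\delta t}$ with $\delta>1$ in $\mathcal{C}$ and then bounding $\mathcal{D}$ in $L^{1}(\overline{c}\,\mathrm{d}\mu)$. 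The standing no-arbitrage hypothesis \eqref{eq:noarb}, assumed globally since Section \ref{sec:market}, is the sole premise underwriting Proposition \ref{prop:abp} and is in force here.

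There is consequently no step whose execution would present an obstacle, because the only nontrivial content has already been established; the Fact serves as a notational consolidation so that the lemmas constituting the duality proof in Section \ref{sec:pdt} may be stated cleanly without repeating these hypotheses each time. The closest thing to a difficulty would be merely bookkeeping: checking that no hidden assumption (e.g.\ continuity of $U$ on $\mathbb{R}_{+}$ or strict monotonicity) has been tacitly used later but omitted from the Fact; a quick audit of the definition of $U$ in Section \ref{sec:cpd} confirms that the continuous differentiability, strict concavity and strict monotonicity of $U$ are folded into ``utility function'', and are therefore implicit in the reference to $U$ in the Fact.
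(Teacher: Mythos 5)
Your reading is correct and matches the paper: Fact \ref{fact:bf} is a standing declaration of hypotheses with no proof attached, the Inada conditions and the finiteness conditions \eqref{eq:minimal} being assumptions, and the only substantive ingredient, Proposition \ref{prop:abp}, having already been established in Section \ref{sec:bcbpr} via Lemmas \ref{lem:Aprop}, \ref{lem:Dclosed} and \ref{lem:cbp}. Nothing further is required.
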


\emph{All subsequent lemmata and propositions in this section implicitly
take Fact \ref{fact:bf} as given.}

The first step is to establish weak duality. 

\begin{lemma}[Weak duality]
\label{lem:weakdual}
  
The primal and dual value functions $u(\cdot)$ and $v(\cdot)$ of
\eqref{eq:vfabs} and \eqref{eq:dvfabs} satisfy the weak duality bounds
\begin{equation}
v(y) \geq \sup_{x>0}[u(x)-xy], \quad y>0, \quad
\mbox{equivalently} \quad u(x) \leq \inf_{y>0}[v(y)+xy], \quad
x>0. 
\label{eq:weakd}
\end{equation}
As a result, $u(x)$ is finitely valued for all $x>0$. Moreover, we
have the limiting relations
\begin{equation}
\limsup_{x\to\infty}\frac{u(x)}{x} \leq 0, \quad
\liminf_{y\to\infty}\frac{v(y)}{y} \geq 0. 
\label{eq:limiting}
\end{equation}

\end{lemma}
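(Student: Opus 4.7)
The plan is to derive weak duality from two ingredients that are already in hand: the pointwise Fenchel--Young bound \eqref{eq:VUbound}, namely $U(a)\leq V(b)+ab$ for all $a,b>0$, and the bipolarity estimate $\langle g,h\rangle\leq xy$ for every $g\in\mathcal{C}(x)$ and $h\in\mathcal{D}(y)$. The latter is the scaled form of \eqref{eq:bp1}, obtained from $\mathcal{C}(x)=x\mathcal{C}$ and $\mathcal{D}(y)=y\mathcal{D}$ (Proposition \ref{prop:abp}).

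Step 1 (weak duality). Fix $x,y>0$ and pick $g\in\mathcal{C}(x)$, $h\in\mathcal{D}(y)$. Applying \eqref{eq:VUbound} pointwise on $\mathbf{\Omega}$ gives $U(g)\leq V(h)+gh$. Integrate against $\mu$ and use the polarity bound to obtain
\begin{equation*}
\int_{\mathbf{\Omega}}U(g)\,\mathrm{d}\mu \;\leq\; \int_{\mathbf{\Omega}}V(h)\,\mathrm{d}\mu + xy.
\end{equation*}
Integrability of $U(g)^{+}$ is the only point requiring mild care: from $U(g)^{+}\leq V(h)^{+}+gh$, both $V(h)^{+}$ and $gh$ are $\mu$-integrable (the former because $v(y)<\infty$ guarantees some $h$ with $\int V(h)\,\mathrm{d}\mu<\infty$, the latter by the polarity bound), so no $\infty-\infty$ ambiguity arises. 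Taking $\sup_{g\in\mathcal{C}(x)}$ on the left and then $\inf_{h\in\mathcal{D}(y)}$ on the right yields $u(x)\leq v(y)+xy$, which is the second inequality in \eqref{eq:weakd}; the first inequality is an algebraic rearrangement.

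Step 2 (finiteness of $u$). For any fixed $y>0$, the inequality just derived combined with $v(y)<\infty$ gives $u(x)\leq v(y)+xy<\infty$, while $u(x)>-\infty$ is part of the standing assumption.

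Step 3 (limits). Dividing $u(x)\leq v(y)+xy$ by $x$ yields $u(x)/x\leq v(y)/x+y$; fixing $y$ and sending $x\to\infty$ produces $\limsup_{x\to\infty}u(x)/x\leq y$, and taking $\inf_{y>0}$ gives the first relation in \eqref{eq:limiting}. Symmetrically, $v(y)\geq u(x)-xy$ divided by $y$ gives $v(y)/y\geq u(x)/y-x$; fixing $x$ and using finiteness of $u(x)$ to send $y\to\infty$ yields $\liminf_{y\to\infty}v(y)/y\geq -x$, and taking $\sup_{x>0}(-x)=0$ delivers the second relation. No step is a genuine obstacle; the only delicate point is the integrability check in Step 1, which is settled by the positive-part comparison noted there.
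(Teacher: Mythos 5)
Your proposal is correct and follows essentially the same route as the paper: the pointwise Fenchel--Young inequality \eqref{eq:VUbound} combined with the scaled polarity bound $\langle g,h\rangle\leq xy$ gives $u(x)\leq v(y)+xy$, from which finiteness of $u$ and the limiting relations \eqref{eq:limiting} follow by the same elementary manipulations. Your explicit integrability check and the detailed derivation of the limits are just slightly more careful versions of steps the paper leaves implicit.
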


\begin{proof}

For any $g\in\mathcal{C}(x)$ and $h\in\mathcal{D}(y)$, using the
polarity relations in  \eqref{eq:bp1} and \eqref{eq:bp2} we may bound
the achievable utility according to
\begin{eqnarray}
\int_{\mathbf{\Omega}}U(g)\ud\mu & \leq &
\int_{\mathbf{\Omega}}U(g)\ud\mu + xy -
\int_{\mathbf{\Omega}}gh\ud\mu \nonumber \\
& = & \int_{\mathbf{\Omega}}(U(g)-gh)\ud\mu + xy \nonumber \\ 
& \leq & \int_{\mathbf{\Omega}}V(h)\ud\mu + xy, \quad x,y>0, 
\label{eq:ucbound}         
\end{eqnarray}
the last inequality a consequence of \eqref{eq:VUbound}. Maximising
the left-hand-side of \eqref{eq:ucbound} over $g\in\mathcal{C}(x)$ and
minimising the right-hand-side over $h\in\mathcal{D}(y)$ gives
$u(x)\leq v(y)+xy$ for all $x,y>0$, and \eqref{eq:weakd} follows.

The assumption that $v(y)<\infty$ for all $y>0$ immediately yields
that $u(x)$ is finitely valued for some $x>0$. Since $U(\cdot)$ is
strictly increasing and strictly concave, and given the convexity of
$\mathcal{C}$, these properties are inherited by $u(\cdot)$, which is
therefore finitely valued for all $x>0$. Finally, the relations in
\eqref{eq:weakd} easily lead to those in \eqref{eq:limiting}.
  
\end{proof}

Above, we obtained concavity and monotonicity of $u(\cdot)$ by using
convexity of $\mathcal{C}$ and the properties of $U(\cdot)$. Similar
arguments show that $v(\cdot)$ is strictly decreasing and strictly
convex. We shall see these properties reproduced in proofs of
existence and uniqueness of the optimisers for $u(\cdot),v(\cdot)$.

The next step is to give a compactness lemma for the primal
domain.

\begin{lemma}[Compactness lemma for $\mathcal{C}$]
\label{lem:Ccompact}

Let $(\tilde{g}^{n})_{n\in\mathbb{N}}$ be a sequence in
$\mathcal{C}$. Then there exists a sequence $(g^{n})_{n\in\mathbb{N}}$
with $g^{n}\in\conv(\tilde{g}^{n}, \tilde{g}^{n+1},\ldots)$, which
converges $\mu$-a.e. to an element $g\in\mathcal{C}$ that is
$\mu$-a.e. finite.

\end{lemma}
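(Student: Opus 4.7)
The plan is to invoke a Koml\'os-type convex compactness result in $L^{0}_{+}(\mu)$. By Proposition \ref{prop:abp}(iii), the set $\mathcal{C}$ is $L^{1}(\mu)$-bounded and in particular $L^{0}(\mu)$-bounded, so the sequence $(\tilde{g}^{n})_{n\in\mathbb{N}}$ is $L^{0}(\mu)$-bounded. Since $(\mathbf{\Omega},\mathcal{G},\mu)$ is a finite measure space, it may be normalised to a probability space, which is the usual setting for the classical convex compactness lemma.

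First, I would apply the well-known lemma of Delbaen and Schachermayer (Lemma A1.1 in their 1994 paper; see also Brannath and Schachermayer \cite{bs99}) which states that any $L^{0}$-bounded sequence of non-negative measurable functions admits forward convex combinations $g^{n}\in\conv(\tilde{g}^{n},\tilde{g}^{n+1},\ldots)$ that converge $\mu$-a.e.\ to some $g\in L^{0}_{+}(\mu)$.

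Second, since $\mathcal{C}$ is convex (Proposition \ref{prop:abp}(i)), each $g^{n}$ lies in $\mathcal{C}$. Because $\mathcal{C}$ is closed with respect to the topology of convergence in measure $\mu$ (again Proposition \ref{prop:abp}(i)), and since $\mu$-a.e.\ convergence implies convergence in measure on the finite measure space $(\mathbf{\Omega},\mathcal{G},\mu)$, the limit satisfies $g\in\mathcal{C}$.

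Third, to see that $g$ is $\mu$-a.e.\ finite, I would use the $L^{1}(\mu)$-boundedness of $\mathcal{C}$: there is a constant $K>0$ such that $\int_{\mathbf{\Omega}}g^{n}\,\mathrm{d}\mu\leq K$ for all $n\in\mathbb{N}$ (convex combinations preserve this bound). Applying Fatou's lemma to the non-negative sequence $(g^{n})$ yields
\begin{equation*}
\int_{\mathbf{\Omega}}g\,\mathrm{d}\mu \leq \liminf_{n\to\infty}\int_{\mathbf{\Omega}}g^{n}\,\mathrm{d}\mu \leq K < \infty,
\end{equation*}
so $g<\infty$ $\mu$-a.e.\ as required. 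The only genuine work here is the appeal to the Koml\'os-style convex compactness lemma; everything else is a direct use of the bipolarity results already established. No serious obstacle is anticipated, as the $L^{1}(\mu)$-boundedness of the primal domain (a role reversal compared to the terminal-wealth problem) does all the heavy lifting.
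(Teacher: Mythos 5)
Your proof is correct and follows essentially the same route as the paper's: the decisive step is the Delbaen--Schachermayer Lemma A1.1 applied on the finite measure space $(\mathbf{\Omega},\mathcal{G},\mu)$, with convexity of $\mathcal{C}$ placing each convex combination $g^{n}$ in $\mathcal{C}$. The only (harmless) difference is how the last two points are justified: the paper verifies $g\in\mathcal{C}$ by checking $\langle g,h\rangle\leq 1$ for every $h\in\mathcal{D}$ via Fatou's lemma (i.e.\ the polarity $\mathcal{C}=\mathcal{D}^{\circ}$) and obtains $\mu$-a.e.\ finiteness of $g$ directly from the $L^{0}(\mu)$-boundedness clause of that lemma, whereas you cite the closedness of $\mathcal{C}$ in $\mu$-measure from Proposition \ref{prop:abp}(i) (noting that $\mu$-a.e.\ convergence implies convergence in measure on a finite measure space) and get finiteness from the $L^{1}(\mu)$-bound plus Fatou --- both substitutions are legitimate here, since the lemma is proved under Fact \ref{fact:bf}, which takes Proposition \ref{prop:abp} as given.
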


\begin{proof}

Delbaen and Schachermayer \cite[Lemma A1.1]{ds94} (adapted from a
probability space to the finite measure space
$(\mathbf{\Omega},\mathcal{G},\mu)$) implies the existence of a
sequence $(g^{n})_{n\in\mathbb{N}}$, with
$g^{n}\in\conv(\tilde{g}^{n}, \tilde{g}^{n+1},\ldots)$, which
converges $\mu$-a.e. to an element $g$ that is $\mu$-a.e. finite
because $\mathcal{C}$ is bounded in $L^{0}(\mu)$ (the finiteness also
following from \cite[Lemma A1.1]{ds94}). By convexity of
$\mathcal{C}$, each $g^{n},\,n\in\mathbb{N}$ lies in
$\mathcal{C}$. Finally, by Fatou's lemma, for every $h\in\mathcal{D}$
we have
\begin{equation*}
\int_{\mathbf{\Omega}}gh\ud\mu =
\int_{\mathbf{\Omega}}\liminf_{n\to\infty}g^{n}h\ud\mu \leq
\liminf_{n\to\infty}\int_{\mathbf{\Omega}}g^{n}h\ud\mu \leq 1,  
\end{equation*}
so that $g\in\mathcal{C}$.

\end{proof}

Results in the style of Lemma \ref{lem:Ccompact} are standard in these
duality proofs. We will see a similar result for the dual domain
$\mathcal{D}$ shortly. Typically, the program is to first prove such a
result in the dual domain and to follow this with a uniform
integrability result for the family
$(V^{-}(h))_{h\in\mathcal{D}(y)}$. This facilitates a proof of
existence and uniqueness of the dual minimiser, and of the conjugacy
for the value functions by establishing the first relation in
\eqref{eq:conjugacy}.

Here, as we have alluded to earlier, the natural course of events is
switched on its head: one works instead first in the primal domain,
with the next step to prove a uniform integrability result for the
family $(U^{+}(g))_{g\in\mathcal{C}(x)}$. This leads to existence and
uniqueness of the primal maximiser, and to conjugacy in the form of
the second (bi-conjugate) relation in \eqref{eq:conjugacy}. The style
of proof in the dual domain for the classical program transfers to the
primal domain here. This switching of the roles of the primal and dual
domains will be an almost continual feature of the analysis of this
section, and we shall point out further instances of it in due
course. All this stems from the $L^{1}(\mu)$-boundedness of the primal
(as opposed to the dual) domain in the consumption problem, as pointed
out in the first paragraph of this section.

Here is the next step in this chain of results.

\begin{lemma}[Uniform integrability of
$(U^{+}(g))_{g\in\mathcal{C}(x)}$]
\label{lem:Uplusg}

The family $(U^{+}(g))_{g\in\mathcal{C}(x)}$ is uniformly integrable,
for any $x>0$.
  
\end{lemma}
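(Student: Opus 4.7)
The plan is to exploit the $L^{1}(\mu)$-boundedness of $\mathcal{C}(x)$ from Proposition \ref{prop:abp}(iii), in combination with the Inada conditions, to bound tail integrals of $U^{+}(g)$ uniformly in $g$. The key ingredients are: concavity of $U$ gives the supporting-hyperplane upper bound $U(g)\le U(M)+U'(M)(g-M)$ at every $M>0$, and the Inada assumption $U'(\infty)=0$ together with concavity yields both $U'(M)\to 0$ and $U(M)/M\to 0$ as $M\to\infty$. No asymptotic-elasticity hypothesis will be needed, since the $L^{1}$-bound sits on the primal side.

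First I would dispense with the trivial case $\sup U\le 0$, in which $U^{+}\equiv 0$ and the claim is immediate. Otherwise, for $N\in(0,\sup U)$ set $M_N:=U^{-1}(N)$, so that $\{U^{+}(g)>N\}=\{g>M_N\}$; applying the supporting-hyperplane bound at $M_N$ and discarding the non-positive term $-U'(M_N)M_N$ gives
\begin{equation*}
\int_{\{U^{+}(g)>N\}}U^{+}(g)\ud\mu\;\le\; U(M_N)\,\mu(g>M_N)\,+\,U'(M_N)\int g\ud\mu.
\end{equation*}
Combining Markov's inequality $\mu(g>M_N)\le (Kx)/M_N$ with the $L^{1}(\mu)$-bound $\int g\ud\mu\le Kx$ (where $K$ is the constant from \eqref{eq:kappa} and $g\in\mathcal{C}(x)=x\mathcal{C}$) produces a right-hand side of the form $Kx\bigl[N/M_N+U'(M_N)\bigr]$, independent of $g\in\mathcal{C}(x)$. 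Since $M_N\uparrow\infty$ as $N\to\sup U$, the Inada condition \eqref{eq:inada} drives $U'(M_N)\to 0$, while sublinearity of $U$ at infinity drives $U(M_N)/M_N=N/M_N\to 0$, and the bound vanishes uniformly, delivering uniform integrability.

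The only point that requires minor care is verifying $U(x)/x\to 0$ at infinity from concavity and $U'(\infty)=0$ alone, which I would obtain either by writing $U(x)=U(x_0)+\int_{x_0}^{x}U'(t)\ud t$ and passing to the limit using monotonicity of $U'$, or by invoking the supporting-hyperplane bound $U(x)\le U(x_0)+U'(x_0)(x-x_0)$ and sending first $x\to\infty$ and then $x_0\to\infty$. Structurally, this argument is the precise counterpart in the consumption setting of the Kramkov--Schachermayer \cite{ks99,ks03} proof of uniform integrability of $(V^{-}(h))_{h\in\mathcal{D}(y)}$ in the terminal wealth problem: there the $L^{1}$-bound sits in the dual and the estimate is deployed dually, whereas here, in line with the role reversal highlighted in the paragraph preceding the lemma, the $L^{1}$-bound sits in the primal domain and the same type of tail estimate is carried out on the primal side.
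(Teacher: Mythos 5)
Your argument is correct, but it is not the route the paper takes. The paper's proof applies the de la Vall\'ee-Poussin criterion with the test function $\varphi:=U^{-1}$: one checks $\int_{\mathbf{\Omega}}\varphi(U^{+}(g))\ud\mu\leq\varphi(0)+\int_{\mathbf{\Omega}}g\ud\mu\leq\varphi(0)+Kx$ uniformly over $\mathcal{C}(x)$, and then $\varphi(z)/z\to\infty$ as $z\to U(\infty)$ via l'H\^opital and the Inada condition $U^{\prime}(\infty)=0$. You instead perform a hands-on tail estimate: the supporting-line bound $U(g)\leq U(M_{N})+U^{\prime}(M_{N})g$ on $\{g>M_{N}\}$, Markov's inequality, and the $L^{1}(\mu)$-bound give $\sup_{g\in\mathcal{C}(x)}\int_{\{U^{+}(g)>N\}}U^{+}(g)\ud\mu\leq Kx\left[N/M_{N}+U^{\prime}(M_{N})\right]\to 0$, and the sublinearity $U(M_{N})/M_{N}\to 0$ that you flag is indeed the only extra fact to verify, which you do correctly. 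Both proofs rest on exactly the same two pillars --- $L^{1}(\mu)$-boundedness of the primal domain (Proposition 5.8(iii)) and $U^{\prime}(\infty)=0$ --- so the difference is one of packaging: the paper's version is a two-line application of a cited compactness criterion, while yours is fully self-contained and makes the quantitative tail bound explicit; yours is also distinct from the paper's alternative proof in Remark 7.5 (the Kramkov--Schachermayer-style contradiction argument with disjoint sets), since you never need the weak-duality asymptotics $\limsup_{x\to\infty}u(x)/x\leq 0$ or the value function at all. One cosmetic remark: the genuinely trivial case is $U(\infty)<\infty$ (then $U^{+}$ is uniformly bounded and $\mu$ is finite), not merely $\sup U\leq 0$; your main estimate happens to cover $0<\sup U<\infty$ anyway, since $M_{N}\to\infty$ as $N\uparrow\sup U$ and the tail sets are empty beyond $\sup U$, so there is no gap, but stating the dichotomy as the paper does ($U(\infty)=+\infty$ versus not) is cleaner.
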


The style of the proof is along identical lines to Kramkov and
Schachermayer \cite[Lemma 3.2]{ks99}, but there it was applied to the
concave function $-V(\cdot)$ and in the dual domain to prove the
uniform integrability of $(V^{-}(h))_{h\in\mathcal{D}(y)}$. We are
witnessing the switching of the roles of $\mathcal{C}$ and
$\mathcal{D}$.

\begin{proof}[Proof of Lemma \ref{lem:Uplusg}]

  Since $U(\cdot)$ is increasing, we need only consider the case where
  $U(\infty):=\lim_{x\to\infty}U(x)=+\infty$ (otherwise there is
  nothing to prove). Let $\varphi:(U(0),U(\infty))\mapsto(0,\infty)$
  denote the inverse of $U(\cdot)$. Then $\varphi(\cdot)$ is strictly
  increasing. For any $g\in\mathcal{C}(x)$ (so
  $\int_{\mathbf{\Omega}}g\ud\mu\leq Kx$, for some $K<\infty$) we
  have, for all $x>0$,
\begin{equation*}
\int_{\mathbf{\Omega}}\varphi(U^{+}(g))\ud\mu \leq \varphi(0) +
\int_{\mathbf{\Omega}}\varphi(U(g))\ud\mu = \varphi(0) +
\int_{\mathbf{\Omega}}g\ud\mu \leq \varphi(0) + Kx.
\end{equation*}
Then, using l'H\^opital's rule and the change of variable
$\varphi(x)=y\iff x=U(y)$, we have
\begin{equation}
\lim_{x\to U(\infty)}\frac{\varphi(x)}{x} =
\lim_{x\to\infty}\frac{\varphi(x)}{x} =
\lim_{y\to\infty}\frac{y}{U(y)} =
\lim_{y\to\infty}\frac{1}{U^{\prime}(y)} = +\infty, 
\label{eq:phioverx}
\end{equation}
on using the Inada conditions \eqref{eq:inada}. The
$L^{1}(\mu)$-boundedness of $\mathcal{C}(x)$ means we can apply the de
la Vall\'ee-Poussin theorem (Pham \cite[Theorem A.1.2]{pham09}) which,
combined with \eqref{eq:phioverx}, implies the uniform integrability
of the family $(U^{+}(g))_{g\in\mathcal{C}(x)}$.

\end{proof}

\begin{remark}
\label{rem:Uplusg}

There is another way to establish Lemma \ref{lem:Uplusg} which matches
more closely the style of proof in Kramkov and Schachermayer
\cite[Lemma 1]{ks03}, and which we shall see applied to the dual
domain in Lemma \ref{lem:Vminush} to establish uniform integrability
of $(V^{-}(h))_{h\in\mathcal{D}(y)}$. We mention this method here,
because at first glance the method of \cite[Lemma 1]{ks03} will not
work to establish Lemma \ref{lem:Uplusg}, due to the fact that
$\mathcal{D}$ is not bounded in $L^{1}(\mu)$. However, as we show
here, a slight adjustment to the proof can rectify matters. Here is
the argument.

Let $(g^{n})_{n\in\mathbb{N}}$ be a sequence in $\mathcal{C}(x)$, for
any fixed $x>0$. We want to show that the sequence
$(U^{+}(g^{n}))_{n\in\mathbb{N}}$ is uniformly integrable. 

Fix $x>0$. If $U(\infty)\leq 0$ there is nothing to prove, so assume
$U(\infty)>0$.

If the sequence $(U^{+}(g^{n}))_{n\in\mathbb{N}}$ is not uniformly
integrable, then, passing if need be to a subsequence still denoted by 
$(g^{n})_{n\in\mathbb{N}}$, we can find a constant $\alpha>0$ and a
disjoint sequence $(A_{n})_{n\in\mathbb{N}}$ of sets of
$(\mathbf{\Omega},\mathcal{G})$ (so
$A_{n}\in\mathcal{G},\,n\in\mathbb{N}$ and $A_{i}\cap A_{j}=\emptyset$
if $i\neq j$) such that
\begin{equation*}
\int_{\mathbf{\Omega}}U^{+}(g^{n})\mathbbm{1}_{A_{n}}\ud\mu \geq
\alpha, \quad n\in\mathbb{N}.  
\end{equation*}
(See for example Pham \cite[Corollary A.1.1]{pham09}.) Define, for
some $g^{0}\in\mathcal{C}$, a sequence $(f^{n})_{n\in\mathbb{N}}$ of
elements in $L^{0}_{+}(\mu)$ by
\begin{equation}
f^{n} := x_{0}g^{0} + \sum_{k=1}^{n}g^{k}\mathbbm{1}_{A_{k}},   
\label{eq:fn}
\end{equation}
where $x_{0}:=\inf\{x>0:\,U(x)\geq 0\}$. (It is here where we are
amending the arguments in Kramkov and Schachermayer \cite[Lemma
1]{ks03}: there, one defines the sequence $(f^{n})_{n\in\mathbb{N}}$
by $f^{n}:=x_{0}+\sum_{k=1}^{n}g^{k}\mathbbm{1}_{A_{k}}$, but an
examination of the rest of the argument we now give shows that this
will require
$\int_{\mathbf{\Omega}}h\ud\mu\leq 1,\,\forall h\in\mathcal{D}$, which
we do not have, because the constant consumption stream $c\equiv 1$ is
not admissible. But we do have instead
$\int_{\mathbf{\Omega}}gh\ud\mu\leq 1,\,\forall g\in\mathcal{C},
h\in\mathcal{D}$, which allows the alternative definition of the
sequence $(f^{n})_{n\in\mathbb{N}}$ in \eqref{eq:fn} to make things
work.)

For any $h\in\mathcal{D}$ we have
\begin{equation*}
\int_{\mathbf{\Omega}}f^{n}h\ud\mu =
\int_{\mathbf{\Omega}}\left(x_{0}g^{0} +
\sum_{k=1}^{n}g^{k}\mathbbm{1}_{A_{k}}\right)h\ud\mu \leq x_{0} +
\sum_{k=1}^{n}\int_{\mathbf{\Omega}}g^{k}h\mathbbm{1}_{A_{k}}\ud\mu
\leq x_{0} + nx.
\end{equation*}
Thus, $f^{n}\in\mathcal{C}(x_{0}+nx),\,n\in\mathbb{N}$.

On the other hand, since $U^{+}(\cdot)$ is non-negative and
non-decreasing,
\begin{eqnarray*}
\int_{\mathbf{\Omega}}U(f^{n})\ud\mu & = &
\int_{\mathbf{\Omega}}U^{+}(f^{n})\ud\mu \\
& = & \int_{\mathbf{\Omega}}U^{+}\left(x_{0}g^{0} +
\sum_{k=1}^{n}g^{k}\mathbbm{1}_{A_{k}}\right)\ud\mu \\
& \geq & \int_{\mathbf{\Omega}}
U^{+}\left(\sum_{k=1}^{n}g^{k}\mathbbm{1}_{A_{k}}\right)\ud\mu \\
& = & \sum_{k=1}^{n}\int_{\mathbf{\Omega}}
U^{+}\left(g^{k}\mathbbm{1}_{A_{k}}\right)\ud\mu \geq \alpha n.
\end{eqnarray*}
Therefore,
\begin{equation*}
\limsup_{z\to\infty}\frac{u(z)}{z} =
\limsup_{n\to\infty}\frac{u(x_{0}+nx)}{x_{0}+nx} \geq
\limsup_{n\to\infty}\frac{\int_{\mathbf{\Omega}}U(f^{n})\ud\mu}{x_{0}+nx}
\geq \limsup_{n\to\infty}\left(\frac{\alpha n}{x_{0}+nx}\right) =
\frac{\alpha}{x} > 0,
\end{equation*}
which contradicts the limiting weak duality bound in
\eqref{eq:limiting}. This contradiction establishes the result.

\end{remark}

One can can now proceed to prove either existence of a unique
optimiser in the primal problem, or conjugacy of the value
functions. We proceed first the former, followed by conjugacy.

\begin{lemma}[Primal existence]
\label{lem:primexis}

The optimal solution $\widehat{g}(x)\in\mathcal{C}(x)$ to the primal
problem \eqref{eq:vfabs} exists and is unique, so that $u(\cdot)$ is
strictly concave.  
  
\end{lemma}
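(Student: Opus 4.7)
The plan is to apply the direct method: select a maximising sequence, compactify it via Lemma \ref{lem:Ccompact}, and pass to the limit using Fatou's lemma together with the uniform integrability of the positive parts. First I would fix $x>0$ and pick a maximising sequence $(\tilde g^n)_{n\in\mathbb N}\subset\mathcal C(x)$ with $\int_{\mathbf{\Omega}} U(\tilde g^n)\ud\mu\to u(x)$. Applying Lemma \ref{lem:Ccompact} to $\mathcal C(x)=x\mathcal C$ produces convex combinations $g^n\in\conv(\tilde g^n,\tilde g^{n+1},\dots)$ converging $\mu$-a.e.\ to some $\widehat g\in\mathcal C(x)$ which is $\mu$-a.e.\ finite; concavity of $U$ keeps the new sequence maximising, since $\int U(g^n)\ud\mu\ge\sum_{k}\lambda_k^n\int U(\tilde g^k)\ud\mu\to u(x)$.

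Next I would split $U=U^{+}-U^{-}$ and handle each part separately. Continuity of $U$ and the $\mu$-a.e.\ finiteness of $\widehat g$ give $U^{\pm}(g^n)\to U^{\pm}(\widehat g)$ $\mu$-a.e. The uniform integrability of $(U^{+}(g))_{g\in\mathcal C(x)}$ from Lemma \ref{lem:Uplusg}, combined with Vitali's convergence theorem, yields $\int U^{+}(g^n)\ud\mu\to\int U^{+}(\widehat g)\ud\mu<\infty$; while Fatou's lemma, applied to the non-negative sequence $U^{-}(g^n)$, gives $\int U^{-}(\widehat g)\ud\mu\le\liminf_n\int U^{-}(g^n)\ud\mu$. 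The hypothesis $u(x)>-\infty$ ensures $\int U^{-}(g^n)\ud\mu$ is finite along the sequence, so subtracting is legitimate and delivers $\int U(\widehat g)\ud\mu\ge\limsup_n\int U(g^n)\ud\mu=u(x)$; hence $\widehat g$ attains the supremum.

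For uniqueness, if two optimisers $\widehat g_1,\widehat g_2\in\mathcal C(x)$ disagreed on a set of positive $\mu$-measure, convexity of $\mathcal C(x)$ would put $\tfrac12(\widehat g_1+\widehat g_2)\in\mathcal C(x)$ and strict concavity of $U$ would force $\int U(\tfrac12(\widehat g_1+\widehat g_2))\ud\mu>u(x)$, a contradiction. Strict concavity of $u(\cdot)$ then follows by applying the same strict-concavity inequality to $\lambda\widehat g(x_1)+(1-\lambda)\widehat g(x_2)\in\mathcal C(\lambda x_1+(1-\lambda)x_2)$ for distinct $x_1,x_2>0$ and $\lambda\in(0,1)$: the two optimisers cannot agree $\mu$-a.e., since otherwise (taking $x_1<x_2$) one could strictly improve the common optimiser in $\mathcal C(x_2)$ by adding a positive multiple of the strictly positive admissible plan $\overline g\in\mathcal C$ constructed in the proof of Proposition \ref{prop:abp}.

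The main obstacle will be controlling $U^{+}$ in the limit: because $U(\infty)$ may be $+\infty$, one must prevent utility mass from escaping to infinity along the maximising sequence, and this is precisely what the uniform integrability in Lemma \ref{lem:Uplusg} is designed to rule out. The $U^{-}$ side is routine Fatou, and once existence is in hand, uniqueness and strict concavity of $u$ are purely convex-analytic consequences of the strict concavity of $U$ together with the convexity of $\mathcal C(x)$.
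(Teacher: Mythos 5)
Your proposal is correct and follows essentially the same route as the paper: a maximising sequence compactified via Lemma \ref{lem:Ccompact} (with concavity of $U$ preserving the maximising property), Fatou's lemma for $U^{-}$ and the uniform integrability of Lemma \ref{lem:Uplusg} for $U^{+}$ to pass to the limit, and strict concavity of $U$ together with convexity of $\mathcal{C}(x)$ for uniqueness and strict concavity of $u(\cdot)$. The only departure is that you justify the strict inequality in the concavity of $u$ by showing $\widehat{g}(x_{1})$ and $\widehat{g}(x_{2})$ cannot coincide $\mu$-a.e.\ (via the strictly positive plan $\overline{g}$), a detail the paper leaves implicit; this is a harmless refinement rather than a different argument.
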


\begin{proof}

Fix $x>0$. Let $(g^{n})_{n\in\mathbb{N}}$ be a maximising sequence
in $\mathcal{C}(x)$ for $u(x)<\infty$ (the finiteness proven in
Lemma \ref{lem:weakdual}). That is
\begin{equation}
\lim_{n\to\infty}\int_{\mathbf{\Omega}}U(g^{n})\ud\mu = u(x) < \infty.
\label{eq:maxseqprimal}
\end{equation}

By the compactness lemma for $\mathcal{C}$ (and thus also for
$\mathcal{C}(x)=x\mathcal{C}$), Lemma \ref{lem:Ccompact}, we can find
a sequence $(\widehat{g}^{n})_{n\in\mathbb{N}}$ of convex
combinations, so
$\mathcal{C}(x)\owns\widehat{g}^{n}\in \conv(g^{n},
g^{n+1},\ldots),\,n\in\mathbb{N}$, which converges $\mu$-a.e. to some
element $\widehat{g}(x)\in\mathcal{C}(x)$. We claim that
$\widehat{g}(x)$ is the primal optimiser. That is, that we have
\begin{equation}
\int_{\mathbf{\Omega}}U(\widehat{g}(x))\ud\mu = u(x).
\label{eq:primaloptimiser}
\end{equation}
By concavity of $U(\cdot)$ and \eqref{eq:maxseqprimal} we have
\begin{equation*}
\lim_{n\to\infty}\int_{\mathbf{\Omega}}U(\widehat{g}^{n})\ud\mu \geq
\lim_{n\to\infty}\int_{\mathbf{\Omega}}U(g^{n})\ud\mu = u(x),
\end{equation*}
which, combined with the obvious inequality
$u(x)\geq
\lim_{n\to\infty}\int_{\mathbf{\Omega}}U(\widehat{g}^{n})\ud\mu$ means
that we also have, further to \eqref{eq:maxseqprimal},
\begin{equation*}
\lim_{n\to\infty}\int_{\mathbf{\Omega}}U(\widehat{g}^{n})\ud\mu =
u(x).  
\end{equation*}
In other words
\begin{equation}
\lim_{n\to\infty}\int_{\mathbf{\Omega}}U^{+}(\widehat{g}^{n})\ud\mu -
\lim_{n\to\infty}\int_{\mathbf{\Omega}}U^{-}(\widehat{g}^{n})\ud\mu =
u(x) < \infty,    
\label{eq:Uplusminus}
\end{equation}
and note therefore that both integrals in \eqref{eq:Uplusminus} are
finite.

From Fatou's lemma, we have
\begin{equation}
\lim_{n\to\infty}\int_{\mathbf{\Omega}}U^{-}(\widehat{g}^{n})\ud\mu
\geq \int_{\mathbf{\Omega}}U^{-}(\widehat{g}(x))\ud\mu.  
\label{eq:Uminus}
\end{equation}
From Lemma \ref{lem:Uplusg} we have uniform integrability of
$(U^{+}(\widehat{g}^{n}))_{n\in\mathbb{N}}$, so that
\begin{equation}
\lim_{n\to\infty}\int_{\mathbf{\Omega}}U^{+}(\widehat{g}^{n})\ud\mu
= \int_{\mathbf{\Omega}}U^{+}(\widehat{g}(x))\ud\mu.  
\label{eq:Uplus}
\end{equation}
Thus, using \eqref{eq:Uminus} and \eqref{eq:Uplus} in
\eqref{eq:Uplusminus}, we obtain
\begin{equation*}
u(x) \leq \int_{\mathbf{\Omega}}U(\widehat{g}(x))\ud\mu,
\end{equation*}
which, combined with the obvious inequality
$u(x)\geq\int_{\mathbf{\Omega}}U(\widehat{g}(x))\ud\mu$, yields
\eqref{eq:primaloptimiser}. The uniqueness of the primal optimiser
follows from the strict concavity of $U(\cdot)$, as does the strict
concavity of $u(\cdot)$. For this last claim, fix $x_{1}<x_{2}$ and
$\lambda\in(0,1)$, note that
$\lambda\widehat{g}(x_{1}) +
(1-\lambda)\widehat{g}(x_{2})\in\mathcal{C}(\lambda x_{1} +
(1-\lambda)x_{2})$ (yet must be sub-optimal for
$u(\lambda x_{1}+(1-\lambda)x_{2})$ as it is not guaranteed to equal
$\widehat{g}(\lambda x_{1}+(1-\lambda)x_{2})$) and therefore, using
the strict concavity of $U(\cdot)$,
\begin{equation*}
u(\lambda x_{1} + (1-\lambda)x_{2})  \geq
\int_{\mathbf{\Omega}}U\left(\lambda\widehat{g}(x_{1}) +
  (1-\lambda)\widehat{g}(x_{2})\right)\ud\mu > \lambda u(x_{1}) +
(1-\lambda)u(x_{2}). 
\end{equation*}

\end{proof}

We now establish conjugacy of the value functions. Compared with the
classical method of proof in Kramkov and Schachermayer \cite[Lemma
3.4]{ks99}, our method is similar, but instead of bounding the
elements in the primal domain to create a compact set for the weak$*$
topology $\sigma(L^{\infty},L^{1})$ on $L^{\infty}(\mu)$, we bound the
elements in the dual domain.\footnote{Recall that a sequence
  $(h^{n})_{n\in\mathbb{N}}$ in $L^{\infty}(\mu)$ converges to
  $h\in L^{\infty}(\mu)$ with respect to the weak$*$ topology
  $\sigma(L^{\infty},L^{1})$ if and only if $\langle g,h^{n}\rangle$
  converges to $\langle g,h\rangle$ for each $g\in L^{1}(\mu)$.}
Accordingly, we apply a version of the minimax theorem with a
minimisation over a compact set and a maximisation over a subset of a
vector space (see, for example, Aubin and Ekeland \cite[Theorem 7,
Page 319]{ak84}), as opposed to the maximisation over a compact set
and a minimisation over a subset of a vector space (as in Strasser
\cite[Theorem 45.8]{strasser85}). (See also Sion \cite[Theorem 3.2 and
Corollary 3.3]{sion58}, in which either one of the convex spaces
involved can be taken to be compact.) This reversal is appropriate
because the primal domain is a subset of $L^{1}(\mu)$, whereas in the
terminal wealth problem the dual domain is a subset of
$L^{1}(\mathbb{P})$. The consequence is that we prove the second
(bi-conjugate) relation in \eqref{eq:conjugacy}, as opposed to the
first. Here is the minimax theorem as we shall use it.

\begin{theorem}[Minimax]
\label{thm:rminimax}

Let $\mathcal{X}$ be a convex subset of a normed vector space $E$ and
let $\mathcal{Y}$ be a $\sigma(E^{\prime},E)$-compact convex, subset
of the topological dual $E^{\prime}$ of $E$. Assume that
$f:\mathcal{X}\times\mathcal{Y}\to\mathbb{R}$ satisfies the following
conditions:

\begin{enumerate}

\item $x\mapsto f(x,y)$ is concave on $\mathcal{X}$ for every
$y\in\mathcal{Y}$;

\item $y\mapsto f(x,y)$ is lower semicontinuous and convex on
$\mathcal{Y}$ for every $x\in\mathcal{X}$.
  
\end{enumerate}

Then:
\begin{equation*}
\inf_{y\in\mathcal{Y}}\sup_{x\in\mathcal{X}}f(x,y) =
\sup_{x\in\mathcal{X}}\inf_{y\in\mathcal{Y}}f(x,y). 
\end{equation*}
  
\end{theorem}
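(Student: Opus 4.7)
The plan is to split into the trivial inequality and the non-trivial one. The inequality $\sup_{x\in\mathcal{X}}\inf_{y\in\mathcal{Y}}f(x,y) \leq \inf_{y\in\mathcal{Y}}\sup_{x\in\mathcal{X}}f(x,y)$ holds for any function whatsoever: for each fixed $y_{0}\in\mathcal{Y}$, $\sup_{x}\inf_{y}f(x,y)\leq \sup_{x}f(x,y_{0})$, and taking the infimum over $y_{0}$ gives the bound. This uses none of the hypotheses. All the work is in the reverse inequality.

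For the reverse, I would argue by contradiction. Set $\alpha:=\sup_{x}\inf_{y}f(x,y)$ and $\beta:=\inf_{y}\sup_{x}f(x,y)$; assume $\alpha<\beta$ and fix any $\gamma$ with $\alpha<\gamma<\beta$. The target is to produce a single $y^{*}\in\mathcal{Y}$ at which $f(x,y^{*})\leq\gamma$ for all $x\in\mathcal{X}$, since this would force $\inf_{y}\sup_{x}f(x,y)\leq\gamma<\beta$, a contradiction. The natural candidate family is $B_{x}:=\{y\in\mathcal{Y}:f(x,y)\leq\gamma\}$, $x\in\mathcal{X}$. Lower semicontinuity of $f(x,\cdot)$ in the $\sigma(E^{\prime},E)$ topology on $\mathcal{Y}$ makes each $B_{x}$ closed, and convexity of $f(x,\cdot)$ makes each $B_{x}$ convex. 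Thus $y^{*}$ exists provided $\bigcap_{x\in\mathcal{X}}B_{x}\neq\emptyset$, and by weak$*$ compactness of $\mathcal{Y}$ this reduces to verifying the finite intersection property.

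The technical heart of the argument is therefore to show: for every finite collection $x_{1},\ldots,x_{n}\in\mathcal{X}$, $\bigcap_{i=1}^{n}B_{x_{i}}\neq\emptyset$. I would prove this by induction on $n$, following Sion's classical scheme. The base case $n=1$ is immediate from $\inf_{y}f(x_{1},y)\leq \alpha<\gamma$ together with the fact that the closed set $\{y:f(x_{1},y)\leq\gamma\}$ contains any point where the infimum is essentially attained (using lower semicontinuity and compactness to actually attain it). The inductive step $n\Rightarrow n+1$ is reduced to the case $n=2$ applied to the convex set $K:=B_{x_{3}}\cap\cdots\cap B_{x_{n+1}}$, relative to which we must intersect $B_{x_{1}}\cap K$ with $B_{x_{2}}\cap K$; these are convex compact subsets of $\mathcal{Y}$ and the same reasoning applies.

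The main obstacle, and the step that really uses concavity of $f(\cdot,y)$ on $\mathcal{X}$ and the convex structure, is the $n=2$ case. Here one assumes $B_{x_{1}}\cap B_{x_{2}}\cap K=\emptyset$ and considers the segment $x_{\lambda}:=\lambda x_{1}+(1-\lambda)x_{2}\in\mathcal{X}$, $\lambda\in[0,1]$. One shows that the map $\lambda\mapsto \inf_{y\in K}f(x_{\lambda},y)$ satisfies an intermediate value / connectedness property: for each $\lambda$ the sets $\{y\in K:f(x_{\lambda},y)<\gamma\}\cap B_{x_{1}}$ and $\{y\in K:f(x_{\lambda},y)<\gamma\}\cap B_{x_{2}}$ partition a certain connected compact convex subset of $K$, each being relatively open by lower semicontinuity, and by concavity of $f(\cdot,y)$ one of them must be nonempty at both endpoints, producing a value of $\lambda$ where both are nonempty simultaneously, hence an intersection point, the desired contradiction. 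Once this technical lemma is in hand the rest of the argument slots together. It is exactly this delicate topological-convexity interplay that makes Sion's theorem subtle, and why invoking the statement as packaged in Aubin--Ekeland is the cleanest route.
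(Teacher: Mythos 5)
The paper does not prove this theorem; it is invoked as a known result, cited from Aubin--Ekeland (with Sion and Strasser given as alternatives), so your attempt has to stand on its own. Your overall architecture is the right one: the trivial inequality, then weak$*$ compactness plus the finite intersection property for the sets $B_{x}:=\{y\in\mathcal{Y}: f(x,y)\le\gamma\}$, reduced to a two-point lemma. But the two crucial steps contain genuine gaps. First, the openness claim is backwards: lower semicontinuity of $f(x,\cdot)$ makes the sublevel sets $\{y: f(x,y)\le\gamma\}$ closed and the strict superlevel sets open; it does not make $\{y\in K: f(x_{\lambda},y)<\gamma\}$ relatively open, so the partition into relatively open pieces you describe is not available. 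More seriously, the Sion-type connectedness argument you are paraphrasing partitions the segment $[x_{1},x_{2}]$ into two sets whose closedness is proved using \emph{upper} semicontinuity of $f(\cdot,y)$ in the $x$-variable; the theorem as stated assumes no topology or continuity on $\mathcal{X}$ at all, only concavity, so quasiconcavity plus connectedness does not suffice. One must use full concavity quantitatively: for instance via the Kneser-type propagation fact that if $f(x_{\lambda},y)\le\gamma$ and $f(x_{1},y)>\gamma'$ with $\gamma<\gamma'$, concavity forces $f(x_{\mu},y)\le\gamma$ for all $\mu\ge\lambda$ (which makes the two pieces of the segment intervals and lets one conclude by a limit-point argument using only lsc in $y$ and compactness of $\mathcal{Y}$), or by replacing $f(x_{\lambda},y)$ by its affine minorant $(1-\lambda)f(x_{1},y)+\lambda f(x_{2},y)$, for which continuity in the scalar parameter is automatic and a standard minimax on $[0,1]\times\mathcal{Y}$ applies.

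Second, the induction step ``apply the case $n=2$ relative to $K:=B_{x_{3}}\cap\cdots\cap B_{x_{n+1}}$; the same reasoning applies'' is not legitimate as stated: the two-point argument needs, for every $z$ on the segment $[x_{1},x_{2}]$, a point of $K$ with $f(z,\cdot)\le\gamma$. Over all of $\mathcal{Y}$ this comes from $\gamma>\alpha$, but over $K$ the inner infimum $\inf_{y\in K}f(z,y)$ may exceed $\gamma$, so the hypothesis of your two-point lemma is simply not available relative to $K$. This is repairable --- either observe that the induction hypothesis applied to the $n$ points $\{z,x_{3},\ldots,x_{n+1}\}$ supplies exactly the required nonemptiness of $B_{z}\cap K$, or run the standard Komiya-style induction whose inductive statement produces a new point $x_{0}\in\mathcal{X}$ with $\inf_{y}f(x_{0},y)>\gamma'$ rather than asserting the finite intersection property directly --- but as written it is a gap. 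Given these issues, your closing instinct is sound: for the version with concavity but no continuity in $x$, citing Aubin--Ekeland (or Kneser/Fan), as the paper does, is the clean route.
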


Here is the conjugacy result for the primal and dual value functions.

\begin{lemma}[Conjugacy]
\label{lem:conjugacy}
  
The primal value function in \eqref{eq:vfabs} satisfies the
bi-conjugacy relation 
\begin{equation*}
u(x) = \inf_{y>0}[v(y)+xy], \quad \mbox{for each $x>0$},
\end{equation*}
where $v(\cdot)$ is the dual value function in \eqref{eq:dvfabs}.

\end{lemma}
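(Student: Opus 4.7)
The weak-duality bound of Lemma \ref{lem:weakdual} supplies $u(x)\le\inf_{y>0}[v(y)+xy]$, so the content of the lemma is the reverse inequality. The plan is to obtain this via the minimax theorem (Theorem \ref{thm:rminimax}), but in a form ``dual'' to the classical Kramkov--Schachermayer argument. In \cite[Lemma 3.4]{ks99} the dual domain is $L^{1}(\mathbb{P})$-bounded and one extracts a weak$^{*}$-compact subset of the primal side on which to minimax; here, by contrast, Proposition \ref{prop:abp}(iii) tells us that it is the \emph{primal} set $\mathcal{C}$ that is $L^{1}(\mu)$-bounded, so the weak$^{*}$-compact set must be carved out of $\mathcal{D}$. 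Consequently the minimax swap will deliver the bi-conjugate identity $u(x)=\inf_{y}[v(y)+xy]$ directly, rather than the conjugate identity $v(y)=\sup_{x}[u(x)-xy]$ that one proves in \cite[Lemma 3.4]{ks99}.

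Concretely, for $n\in\mathbb{N}$ and $y>0$ I will set $\mathcal{D}^{n}(y):=\{h\in\mathcal{D}(y):h\le n\}\subset L^{\infty}(\mu)$. Convexity is immediate, and Banach--Alaoglu together with the $\mu$-measure closedness of $\mathcal{D}(y)$ recorded in Proposition \ref{prop:abp} (combined with the standard fact that a norm-bounded, measure-closed convex subset of $L^{\infty}(\mu)$ is $\sigma(L^{\infty},L^{1})$-closed) yields $\sigma(L^{\infty}(\mu),L^{1}(\mu))$-compactness of $\mathcal{D}^{n}(y)$. Theorem \ref{thm:rminimax} is then applied to the functional
\begin{equation*}
F(g,h):=\int_{\mathbf{\Omega}}\bigl[V(h)+gh\bigr]\,\mathrm{d}\mu
\end{equation*}
on $\mathcal{C}(x)\times\mathcal{D}^{n}(y)$. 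For fixed $h$, $g\mapsto F(g,h)$ is affine (hence concave) on $\mathcal{C}(x)\subset L^{1}(\mu)$; for fixed $g\in L^{1}(\mu)$, $h\mapsto F(g,h)$ is convex (as $V$ is convex) and $\sigma(L^{\infty},L^{1})$-lower semicontinuous (Fatou applied on the bounded set $\mathcal{D}^{n}(y)$, together with weak$^{*}$-continuity of $h\mapsto\int gh\,\mathrm{d}\mu$).

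The two sides of the resulting minimax identity are then identified via the pointwise Fenchel equality $\inf_{h>0}[V(h)+gh]=U(g)$, attained at $h=U^{\prime}(g)$, together with the bipolarity $\mathcal{C}=\mathcal{D}^{\circ}$ of Proposition \ref{prop:abp}. A measurable-selection argument, combined with the uniform integrability of $(U^{+}(g))_{g\in\mathcal{C}(x)}$ from Lemma \ref{lem:Uplusg}, identifies the $\sup_{g}\inf_{h}$ side, in the limit $n\to\infty$, with $u(x)$; the bipolarity $\sup_{g\in\mathcal{C}(x)}\int gh\,\mathrm{d}\mu\le xy$ for $h\in\mathcal{D}(y)$ and the definition of $v(y)$ identify the $\inf_{h}\sup_{g}$ side with $v(y)+xy$ in the same limit. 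Infimising over $y>0$ yields $u(x)\ge\inf_{y>0}[v(y)+xy]$, which combined with the weak-duality upper bound gives the claimed equality.

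The hardest point will be the double limiting procedure around the minimax step---in particular, justifying that the bound obtained from the truncated dual set $\mathcal{D}^{n}(y)$ converges correctly to the one for $\mathcal{D}(y)$ as $n\to\infty$, and that the parametric infimum over $y>0$ can be brought inside after this limit. The uniform integrability from Lemma \ref{lem:Uplusg} and the $L^{1}(\mu)$-boundedness of $\mathcal{C}(x)$ from Proposition \ref{prop:abp}(iii), together with monotonicity in $n$ and a monotone/dominated convergence argument, should make this routine but not automatic; the Inada behaviour of $V$ near zero has to be watched carefully in the process.
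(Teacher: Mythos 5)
There is a genuine gap, and it lies in your choice of the compact set for the minimax step. You truncate the dual domain itself, working with $\mathcal{D}^{n}(y):=\{h\in\mathcal{D}(y):h\le n\}$ for a \emph{fixed} $y>0$. With this choice the inner infimum on the $\sup_{g}\inf_{h}$ side is taken over a constrained set, and the pointwise Fenchel minimiser $U^{\prime}(g)\wedge n$ has no reason to belong to $\mathcal{D}^{n}(y)$ (that $U^{\prime}(\widehat{g}(x))$ lies in some $\mathcal{D}(\cdot)$ is only established \emph{after} conjugacy, in Lemma \ref{lem:derivatives}). Hence all you can say is $\inf_{h\in\mathcal{D}^{n}(y)}\int_{\mathbf{\Omega}}(V(h)+gh)\ud\mu\ \ge\ \int_{\mathbf{\Omega}}\inf_{0<z\le n}[V(z)+gz]\ud\mu\ \ge\ \int_{\mathbf{\Omega}}U(g)\ud\mu$, so the $\sup\inf$ side is bounded \emph{below} by $u(x)$, not identified with it; similarly, for $h\in\mathcal{D}(y)$ one only has $\sup_{g\in\mathcal{C}(x)}\int_{\mathbf{\Omega}}gh\ud\mu\le xy$, so the $\inf\sup$ side is bounded \emph{above} by $v(y)+xy$ in the limit, not equal to it. Chaining these through the minimax equality yields $u(x)\le v(y)+xy$, i.e.\ you recover only the weak duality of Lemma \ref{lem:weakdual}; and note that your claimed exact identifications ($\sup\inf$ side $=u(x)$ and $\inf\sup$ side $=v(y)+xy$ for every fixed $y$) cannot both hold, since they would force $v(y)+xy$ to be constant in $y$.

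The paper's proof avoids this by taking the compact set to be the full ball $\mathcal{B}_{n}:=\{h\in L^{0}_{+}(\mu):h\le n\ \mu\mbox{-a.e.}\}$, \emph{not} intersected with any dual set and not tied to a fixed $y$. This matters twice: on one side, the inner infimum over $\mathcal{B}_{n}$ is genuinely pointwise and equals $\int_{\mathbf{\Omega}}U_{n}(g)\ud\mu$ with $U_{n}(x)=\inf_{0<z\le n}[V(z)+xz]$; on the other side, the bipolarity \eqref{eq:bp2} converts $\sup_{g\in\mathcal{C}(x)}\int_{\mathbf{\Omega}}gh\ud\mu$ into the statement that $h\in\mathcal{D}(y)$ for the appropriate $y$, so the infimum over the ball automatically produces $\inf_{y>0}[v(y)+xy]$ -- the infimum over $y$ is generated by the argument rather than imposed beforehand. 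There then remains a second, nontrivial half which your proposal does not really confront: showing $\lim_{n\to\infty}u_{n}(x)\le u(x)$ even though $U_{n}\ge U$, which the paper does by taking a maximising sequence for $\lim_{n}u_{n}(x)$, passing to convex combinations via the compactness Lemma \ref{lem:Ccompact}, and combining the uniform integrability of Lemma \ref{lem:Uplusg} (using $U_{n}=U$ on $[I(n),\infty)$) with Fatou's lemma for the negative parts. Your instinct about the role reversal and the use of Theorem \ref{thm:rminimax} with a weak$^{*}$-compact set on the dual side is correct, but as written the construction collapses to weak duality and the limiting identifications are not available.
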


\begin{proof}

For $n\in\mathbb{N}$ denote by $\mathcal{B}_{n}$ the set of elements
in $L^{0}_{+}(\mu)$ lying in a ball of radius $n$:
\begin{equation*}
\mathcal{B}_{n} := \left\{h\in L^{0}_{+}(\mu): h\leq
  n,\,\mu-\mathrm{a.e.}\right\}. 
\end{equation*}
The sets $(\mathcal{B}_{n})_{n\in\mathbb{N}}$ are
$\sigma(L^{\infty},L^{1})$-compact. Because each $g\in\mathcal{C}(x)$
is $\mu$-integrable, $\mathcal{C}(x)$ is a closed, convex subset of
the vector space $L^{1}(\mu)$, so we apply the minimax theorem as
given in Theorem \ref{thm:rminimax} to the compact set
$\mathcal{B}_{n}$ ($n$ fixed) and the set $\mathcal{C}(x)$, with the
function $f(g,h):=\int_{\mathbf{\Omega}}(V(h)+gh)\ud\mu$, for
$g\in\mathcal{C}(x),\,h\in\mathcal{B}_{n}$, to give
\begin{equation}
\inf_{h\in\mathcal{B}_{n}}\sup_{g\in\mathcal{C}(x)}
\int_{\mathbf{\Omega}}(V(h)+gh)\ud\mu
= \sup_{g\in\mathcal{C}(x)}\inf_{h\in\mathcal{B}_{n}}
\int_{\mathbf{\Omega}}(V(h)+gh)\ud\mu.
\label{eq:mm}
\end{equation}
By the bipolarity relation $\mathcal{D}=\mathcal{C}^{\circ}$ in
\eqref{eq:bp2}, an element $h\in L^{0}_{+}(\mu)$ lies in
$\mathcal{D}(y)$ if and only if
$\sup_{g\in\mathcal{C}(x)}\int_{\mathbf{\Omega}}gh\ud\mu\leq
xy$. Thus, the limit as $n\to\infty$ on the left-hand-side of
\eqref{eq:mm} is given as
\begin{equation}
\lim_{n\to\infty}\inf_{h\in\mathcal{B}_{n}}\sup_{g\in\mathcal{C}(x)}
\int_{\mathbf{\Omega}}(V(h)+gh)\ud\mu =
\inf_{y>0}\inf_{h\in\mathcal{D}(y)}\left(\int_{\mathbf{\Omega}}V(h)\ud\mu
+ xy\right) = \inf_{y>0}[v(y)+xy].
\label{eq:lhsmm}
\end{equation}
Now consider the right-hand-side of \eqref{eq:mm}. Define
\begin{equation*}
U_{n}(x):= \inf_{0<y\leq n}[V(y)+xy], \quad x>0, \quad n\in\mathbb{N}.
\end{equation*}
The right-hand-side of \eqref{eq:mm} is then given as
\begin{equation*}
\sup_{g\in\mathcal{C}(x)}\inf_{h\in\mathcal{B}_{n}}
\int_{\mathbf{\Omega}}(V(h)+gh)\ud\mu =
\sup_{g\in\mathcal{C}(x)}\int_{\mathbf{\Omega}}U_{n}(g)\ud\mu =:
u_{n}(x),  
\end{equation*}
so that taking the limit as $n\to\infty$ and equating this with the
limit obtained in \eqref{eq:lhsmm}, we have
\begin{equation}
\lim_{n\to\infty}u_{n}(x) = \inf_{y>0}[v(y)+xy] \geq u(x),
\label{eq:limun}  
\end{equation}
with the inequality due to the weak duality bound in
\eqref{eq:weakd}. Consequently, we will be done if we can now show
that we also have
\begin{equation*}
\lim_{n\to\infty}u_{n}(x) \leq u(x).  
\end{equation*}

Evidently, $(u_{n}(x))_{n\in\mathbb{N}}$ is a decreasing sequence
satisfying the limiting inequality in \eqref{eq:limun}. Let
$(\tilde{g}^{n})_{n\in\mathbb{N}}$ be a maximising sequence in
$\mathcal{C}(x)$ for $\lim_{n\to\infty}u_{n}(x)$, so such that
\begin{equation*}
\lim_{n\to\infty}\int_{\mathbf{\Omega}}U_{n}(\tilde{g}^{n})\ud\mu =
\lim_{n\to\infty}u_{n}(x).
\end{equation*}
The compactness lemma for $\mathcal{C}$, Lemma \ref{lem:Ccompact},
implies the existence of a sequence $(g^{n})_{n\in\mathbb{N}}$ in
$\mathcal{C}(x)$, with
$g^{n}\in\conv(\tilde{g}^{n},\tilde{g}^{n+1},\ldots)$, which converges
$\mu$-a.e. to an element $g\in\mathcal{C}(x)$. Now, $U_{n}(x)=U(x)$
for $x\geq I(n)$, where $I(\cdot)=-V^{\prime}(\cdot)$ is the inverse
of $U^{\prime}(\cdot)$ (and $U_{n}(\cdot)\to U(\cdot)$ as
$n\to\infty$). So we deduce from Lemma \ref{lem:Uplusg} that
the sequence $(U^{+}_{n}(g^{n}))_{n\in\mathbb{N}}$ is uniformly
integrable, and hence that
\begin{equation}
\lim_{n\to\infty}\int_{\mathbf{\Omega}}U^{+}_{n}(g^{n})\ud\mu =
\int_{\mathbf{\Omega}}U^{+}(g)\ud\mu.  
\label{eq:Uplusg1}
\end{equation}
On the other hand, from Fatou's lemma, we have
\begin{equation}
\lim_{n\to\infty}\int_{\mathbf{\Omega}}U^{-}_{n}(g^{n})\ud\mu \geq
\int_{\mathbf{\Omega}}U^{-}(g)\ud\mu,
\label{eq:Uminusg1}
\end{equation}
so \eqref{eq:Uplusg1} and \eqref{eq:Uminusg1} give
\begin{equation}
\lim_{n\to\infty}\int_{\mathbf{\Omega}}U_{n}(g^{n})\ud\mu \leq
\int_{\mathbf{\Omega}}U(g)\ud\mu.
\label{eq:limUg}
\end{equation}
Finally, using concavity of $U_{n}(\cdot)$ and \eqref{eq:limUg}, we obtain
\begin{equation*}
\lim_{n\to\infty}u_{n}(x) =
\lim_{n\to\infty}\int_{\mathbf{\Omega}}U_{n}(\tilde{g}^{n})\ud\mu \leq
\lim_{n\to\infty}\int_{\mathbf{\Omega}}U_{n}(g^{n})\ud\mu \leq
\int_{\mathbf{\Omega}}U(g)\ud\mu \leq u(x),
\end{equation*}
and the proof is complete.

\end{proof}

We now move on to the dual side of the analysis. We begin with a
similar compactness lemma to Lemma \ref{lem:Ccompact}, but now for the
dual domain. The proof is identical to the proof of Lemma
\ref{lem:Ccompact} so is omitted.

\begin{lemma}[Compactness lemma for $\mathcal{D}$]
\label{lem:Dcompact}

Let $(\tilde{h}^{n})_{n\in\mathbb{N}}$ be a sequence in
$\mathcal{D}$. Then there exists a sequence $(h^{n})_{n\in\mathbb{N}}$
with $h^{n}\in\conv(\tilde{h}^{n}, \tilde{h}^{n+1},\ldots)$, which
converges $\mu$-a.e. to an element $h\in\mathcal{D}$ that is
$\mu$-a.e. finite.

\end{lemma}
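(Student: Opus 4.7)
The plan is to mimic the proof of Lemma \ref{lem:Ccompact} line by line, simply transferring the three ingredients used there (namely $L^{0}(\mu)$-boundedness, convexity, and closedness) from $\mathcal{C}$ to $\mathcal{D}$. All three are already in hand: $\mathcal{D}$ is bounded in $L^{0}(\mu)$ by item (iii) of Proposition \ref{prop:abp}, convex by item (i), and closed in $\mu$-measure also by item (i) (the closedness having been the content of Lemma \ref{lem:Dclosed}).

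First I would invoke the Delbaen--Schachermayer lemma (\cite[Lemma A1.1]{ds94}, translated from a probability space to the finite measure space $(\mathbf{\Omega},\mathcal{G},\mu)$), applied to the sequence $(\tilde{h}^{n})_{n\in\mathbb{N}}\subset\mathcal{D}$. Since $\mathcal{D}$ is bounded in $L^{0}(\mu)$, this yields the existence of a sequence $(h^{n})_{n\in\mathbb{N}}$ with $h^{n}\in\conv(\tilde{h}^{n},\tilde{h}^{n+1},\ldots)$ that converges $\mu$-a.e. to some $h\in L^{0}_{+}(\mu)$ which is finite $\mu$-a.e. (the finiteness again being a direct consequence of the $L^{0}(\mu)$-boundedness, as stated in \cite[Lemma A1.1]{ds94}).

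Next, by the convexity of $\mathcal{D}$ (Proposition \ref{prop:abp}(i)), each $h^{n}\in\mathcal{D}$. Finally, because the $\mu$-a.e. convergent sequence $(h^{n})_{n\in\mathbb{N}}$ converges in measure $\mu$, and since $\mathcal{D}$ is closed in $\mu$-measure (Proposition \ref{prop:abp}(i), underpinned by Lemma \ref{lem:Dclosed}), we conclude that $h\in\mathcal{D}$. Alternatively, and in closer parallel with the final step of the proof of Lemma \ref{lem:Ccompact}, one can invoke Fatou's lemma together with the bipolarity relation $\mathcal{D}=\mathcal{C}^{\circ}$ from \eqref{eq:bp2}: for arbitrary $g\in\mathcal{C}$,
\begin{equation*}
\int_{\mathbf{\Omega}}gh\ud\mu = \int_{\mathbf{\Omega}}\liminf_{n\to\infty}gh^{n}\ud\mu \leq \liminf_{n\to\infty}\int_{\mathbf{\Omega}}gh^{n}\ud\mu \leq 1,
\end{equation*}
so $h\in\mathcal{C}^{\circ}=\mathcal{D}$.

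There is no real obstacle here: all the heavy lifting has already been done in establishing Proposition \ref{prop:abp} (in particular, Lemma \ref{lem:Dclosed}, whose proof required the delicate Fatou convergence argument based on \cite[Lemma 5.2]{fk97}). The only mildly new ingredient is checking that the finiteness clause of \cite[Lemma A1.1]{ds94} transfers from probability measures to the finite (non-probability) measure $\mu$, which is immediate since the argument there depends only on $L^{0}(\mu)$-boundedness of the original sequence.
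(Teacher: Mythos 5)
Your proposal is correct and is exactly the route the paper intends: the paper omits the proof of this lemma precisely because it is identical to that of Lemma \ref{lem:Ccompact}, i.e.\ apply \cite[Lemma A1.1]{ds94} on the finite measure space using the $L^{0}(\mu)$-boundedness of $\mathcal{D}$, use convexity to keep the convex combinations in $\mathcal{D}$, and conclude membership of the limit via closedness (Lemma \ref{lem:Dclosed}) or, equivalently, the Fatou/bipolarity argument with $\mathcal{D}=\mathcal{C}^{\circ}$.
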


Next, we have an analogous result to Lemma \ref{lem:Uplusg}, but for
the dual variables, concerning the uniform integrability of a sequence
$(V^{-}(h^{n}))_{n\in\mathbb{N}}$ for $h^{n}\in\mathcal{D}(y)$ (which
will subsequently lead to a lemma on existence and uniqueness of the
dual optimiser). The proof is in the style of Kramkov and
Schachermayer \cite[Lemma 1]{ks03}, but there the technique was
applied to a corresponding primal result akin to Lemma
\ref{lem:Uplusg}.

\begin{lemma}[Uniform integrability of
$(V^{-}(h^{n}))_{n\in\mathbb{N}},\,h^{n}\in\mathcal{D}(y)$] 
\label{lem:Vminush}

Let $(h^{n})_{n\in\mathbb{N}}$ be a sequence in $\mathcal{D}(y)$, for
any fixed $y>0$. The sequence $(V^{-}(h^{n}))_{n\in\mathbb{N}}$ is
uniformly integrable. 
  
\end{lemma}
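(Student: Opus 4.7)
The plan is to argue by contradiction, in the spirit of the alternative proof of Lemma \ref{lem:Uplusg} sketched in Remark \ref{rem:Uplusg}, now mirrored to the dual side. The de la Vall\'ee-Poussin route used in the main proof of Lemma \ref{lem:Uplusg} is not available here because $\mathcal{D}$ fails to be $L^{1}(\mu)$-bounded; so instead I would construct elements of $\mathcal{D}(y_{0}+ny)$ whose $V$-integrals grow so negative that the limiting weak-duality bound $\liminf_{y\to\infty} v(y)/y \geq 0$ of Lemma \ref{lem:weakdual} is violated.

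First I would reduce to the case $V(\infty)=-\infty$: otherwise $V^{-}$ is bounded and there is nothing to prove. Assuming for contradiction that $(V^{-}(h^{n}))_{n}$ is not uniformly integrable, I would (after passing to a subsequence still labelled $h^{n}$) extract $\alpha>0$ and pairwise disjoint $A_{n}\in\mathcal{G}$ with $\int_{A_{n}}V^{-}(h^{n})\ud\mu\geq\alpha$, in the style of Pham \cite[Corollary A.1.1]{pham09}. Using the standing hypothesis $v(y_{0})<\infty$, fix any $h_{0}\in\mathcal{D}(y_{0})$ with $\int_{\mathbf{\Omega}}V(h_{0})\ud\mu<\infty$, so in particular $\int_{\mathbf{\Omega}}V^{+}(h_{0})\ud\mu<\infty$.

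Next I would introduce the test sequence $\tilde{h}^{n}:=h_{0}+\sum_{k=1}^{n}h^{k}\mathbbm{1}_{A_{k}}$. The bipolarity $\mathcal{D}=\mathcal{C}^{\circ}$ of Proposition \ref{prop:abp} yields, for every $g\in\mathcal{C}$,
\begin{equation*}
\int_{\mathbf{\Omega}}g\tilde{h}^{n}\ud\mu \leq \int_{\mathbf{\Omega}}gh_{0}\ud\mu+\sum_{k=1}^{n}\int_{\mathbf{\Omega}}gh^{k}\ud\mu \leq y_{0}+ny,
\end{equation*}
so $\tilde{h}^{n}\in\mathcal{D}(y_{0}+ny)$. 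Disjointness of the $A_{k}$ together with the monotonicity of $V$ (so $V^{+}$ is decreasing and $V^{-}$ is increasing) then gives
\begin{equation*}
\int_{\mathbf{\Omega}}V(\tilde{h}^{n})\ud\mu \leq \int_{\mathbf{\Omega}}V^{+}(h_{0})\ud\mu - \sum_{k=1}^{n}\int_{A_{k}}V^{-}(h^{k})\ud\mu \leq \int_{\mathbf{\Omega}}V^{+}(h_{0})\ud\mu-\alpha n,
\end{equation*}
since on $A_{k}$ one has $V^{+}(h_{0}+h^{k})\leq V^{+}(h_{0})$ and $V^{-}(h_{0}+h^{k})\geq V^{-}(h^{k})$. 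This forces $v(y_{0}+ny)\leq \int_{\mathbf{\Omega}} V^{+}(h_{0})\ud\mu-\alpha n$, hence $\liminf_{y\to\infty}v(y)/y\leq -\alpha/y<0$, contradicting Lemma \ref{lem:weakdual}.

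The hard part, and the reason for pre-pending the reference element $h_{0}$, is controlling $V(\tilde{h}^{n})$ on the complement $(\cup_{k}A_{k})^{c}$: there $\tilde{h}^{n}=h_{0}$, and without a choice of $h_{0}$ for which $\int V^{+}(h_{0})\ud\mu$ is finite the estimate above would be vacuous (indeed $V(0+)=U(\infty)$ may well be $+\infty$, which is precisely the non-trivial case). Once such an $h_{0}$ is secured by the hypothesis $v(y_{0})<\infty$, the remaining bookkeeping on each $A_{k}$ reduces to the two pointwise inequalities displayed above and is routine.
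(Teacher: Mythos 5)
Your proposal is correct and follows essentially the same route as the paper's proof: argue by contradiction, extract $\alpha>0$ and disjoint sets $A_{n}$ from the failure of uniform integrability, build test elements of $\mathcal{D}(y_{0}+ny)$ by adding $\sum_{k\leq n}h^{k}\mathbbm{1}_{A_{k}}$ to a reference dual element, and contradict the bound $\liminf_{z\to\infty}v(z)/z\geq 0$ from weak duality. The only difference is your choice of reference element $h_{0}\in\mathcal{D}(y_{0})$ with $\int_{\mathbf{\Omega}}V^{+}(h_{0})\ud\mu<\infty$ (secured from $v(y_{0})<\infty$) in place of the paper's $y_{0}h^{0}$ with $y_{0}=\inf\{y>0:V(y)\leq 0\}$; this is a harmless variant that in fact handles the positive part of $V$ off $\bigcup_{k}A_{k}$ more explicitly than the paper's displayed computation.
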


\begin{proof}

Fix $y>0$. If $V(\infty)\geq 0$ there is nothing to prove, so assume
$V(\infty)<0$.

If the sequence $(V^{-}(h^{n}))_{n\in\mathbb{N}}$ is not uniformly
integrable, then, passing if need be to a subsequence still denoted by
$(h^{n})_{n\in\mathbb{N}}$, we can find a constant $\alpha>0$ and a
disjoint sequence $(A_{n})_{n\in\mathbb{N}}$ of sets of
$(\mathbf{\Omega},\mathcal{G})$ (so
$A_{n}\in\mathcal{G},\,n\in\mathbb{N}$ and $A_{i}\cap A_{j}=\emptyset$
if $i\neq j$) such that
\begin{equation*}
\int_{\mathbf{\Omega}}V^{-}(h^{n})\mathbbm{1}_{A_{n}}\ud\mu \geq
\alpha, \quad n\in\mathbb{N}.  
\end{equation*}
(See for example Pham \cite[Corollary A.1.1]{pham09}.) Define, for
some $h^{0}\in\mathcal{D}$, a sequence $(f^{n})_{n\in\mathbb{N}}$ of
elements in $L^{0}_{+}(\mu)$ by
\begin{equation*}
f^{n} := y_{0}h^{0} + \sum_{k=1}^{n}h^{k}\mathbbm{1}_{A_{k}},   
\end{equation*}
where $y_{0}:=\inf\{y>0:\,V(y)\leq 0\}$. For any $g\in\mathcal{C}$
we have
\begin{equation*}
\int_{\mathbf{\Omega}}gf^{n}\ud\mu =
\int_{\mathbf{\Omega}}g\left(y_{0}h^{0} +
\sum_{k=1}^{n}h^{k}\mathbbm{1}_{A_{k}}\right)\ud\mu \leq y_{0} +
\sum_{k=1}^{n}\int_{\mathbf{\Omega}}gh^{k}\mathbbm{1}_{A_{k}}\ud\mu
\leq y_{0} + ny.
\end{equation*}
Thus, $f^{n}\in\mathcal{D}(y_{0}+ny),\,n\in\mathbb{N}$.

On the other hand, since $V^{-}(\cdot)$ is non-negative and
non-decreasing,
\begin{eqnarray*}
\int_{\mathbf{\Omega}}V(f^{n})\ud\mu & = &
-\int_{\mathbf{\Omega}}V^{-}(f^{n})\ud\mu \\
& = & -\int_{\mathbf{\Omega}}V^{-}\left(y_{0} +
\sum_{k=1}^{n}h^{k}\mathbbm{1}_{A_{k}}\right)\ud\mu \\
& \leq & -\int_{\mathbf{\Omega}}
V^{-}\left(\sum_{k=1}^{n}h^{k}\mathbbm{1}_{A_{k}}\right)\ud\mu \\
& = & -\sum_{k=1}^{n}\int_{\mathbf{\Omega}}
V^{-}\left(h^{k}\mathbbm{1}_{A_{k}}\right)\ud\mu \leq -\alpha n.
\end{eqnarray*}
Therefore,
\begin{equation*}
\liminf_{z\to\infty}\frac{v(z)}{z} =
\liminf_{n\to\infty}\frac{v(y_{0}+ny)}{y_{0}+ny} \leq
\liminf_{n\to\infty}\frac{\int_{\mathbf{\Omega}}V(f^{n})\ud\mu}{y_{0}+ny}
\leq \liminf_{n\to\infty}\left(\frac{-\alpha n}{y_{0}+ny}\right) =
-\frac{\alpha}{y} < 0,
\end{equation*}
which contradicts the limiting weak duality bound in
\eqref{eq:limiting}. This contradiction establishes the result.

\end{proof}

One can can now proceed to prove existence of a unique optimiser in
the dual problem. The proof is on similar lines to the proof of primal
existence (Lemma \ref{lem:primexis}), with adjustments for
minimisation as opposed to maximisation, convexity of $V(\cdot)$
replacing concavity of $U(\cdot)$ and Lemma \ref{lem:Vminush}
replacing Lemma \ref{lem:Uplusg}. For brevity, therefore, the proof is
omitted.

\begin{lemma}[Dual existence]
\label{lem:dualexis}

The optimal solution $\widehat{h}(y)\in\mathcal{D}(y)$ to the dual
problem \eqref{eq:dvfabs} exists and is unique, so that $v(\cdot)$ is
strictly convex.  
  
\end{lemma}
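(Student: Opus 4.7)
The plan is to mirror the proof of Lemma \ref{lem:primexis}, with the two changes that the supremum becomes an infimum and the concavity of $U(\cdot)$ is replaced by the convexity of $V(\cdot)$. Fix $y>0$ and take a minimising sequence $(\tilde h^{n})_{n\in\mathbb{N}}$ in $\mathcal{D}(y)$ for $v(y)<\infty$, so that $\lim_{n\to\infty}\int_{\mathbf{\Omega}}V(\tilde h^{n})\ud\mu=v(y)$. Applying the compactness lemma for $\mathcal{D}$ (Lemma \ref{lem:Dcompact}) to $\mathcal{D}(y)=y\mathcal{D}$, I obtain convex combinations $\widehat h^{n}\in\conv(\tilde h^{n},\tilde h^{n+1},\ldots)\subseteq\mathcal{D}(y)$ converging $\mu$-a.e.\ to some $\widehat h(y)\in\mathcal{D}(y)$. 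Convexity of $V(\cdot)$ together with the obvious lower bound $v(y)\leq\int_{\mathbf{\Omega}}V(\widehat h^{n})\ud\mu$ forces $(\widehat h^{n})_{n\in\mathbb{N}}$ to be minimising as well.

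The next step is to pass to the limit in $\int_{\mathbf{\Omega}}V(\widehat h^{n})\ud\mu$, splitting $V=V^{+}-V^{-}$. Since $V^{+}(\cdot)\geq 0$, Fatou's lemma yields
\begin{equation*}
\liminf_{n\to\infty}\int_{\mathbf{\Omega}}V^{+}(\widehat h^{n})\ud\mu \geq \int_{\mathbf{\Omega}}V^{+}(\widehat h(y))\ud\mu.
\end{equation*}
Lemma \ref{lem:Vminush} delivers uniform integrability of $(V^{-}(\widehat h^{n}))_{n\in\mathbb{N}}$, whence
\begin{equation*}
\lim_{n\to\infty}\int_{\mathbf{\Omega}}V^{-}(\widehat h^{n})\ud\mu = \int_{\mathbf{\Omega}}V^{-}(\widehat h(y))\ud\mu,
\end{equation*}
both integrals being finite (finiteness of $\int V^{+}(\widehat h^{n})\ud\mu$ is automatic from finiteness of the minimising values once $V^{-}$ is controlled). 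Subtracting these relations and using that $\widehat h^{n}$ is minimising gives $v(y)\geq\int_{\mathbf{\Omega}}V(\widehat h(y))\ud\mu$, which combined with the trivial reverse inequality $v(y)\leq\int_{\mathbf{\Omega}}V(\widehat h(y))\ud\mu$ shows that $\widehat h(y)$ attains $v(y)$.

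Uniqueness and strict convexity of $v(\cdot)$ follow from strict convexity of $V(\cdot)$. If two distinct minimisers $\widehat h_{1},\widehat h_{2}\in\mathcal{D}(y)$ existed (differing on a set of positive $\mu$-measure), convexity of $\mathcal{D}(y)$ would place $(\widehat h_{1}+\widehat h_{2})/2$ in $\mathcal{D}(y)$, and strict convexity of $V(\cdot)$ would yield a strictly smaller value of $\int V(\cdot)\ud\mu$, contradicting optimality. For strict convexity of $v(\cdot)$, fix $y_{1}<y_{2}$ and $\lambda\in(0,1)$, observe that $\lambda\widehat h(y_{1})+(1-\lambda)\widehat h(y_{2})\in\mathcal{D}(\lambda y_{1}+(1-\lambda)y_{2})$ by convexity of $\mathcal{D}$, and conclude
\begin{equation*}
v(\lambda y_{1}+(1-\lambda)y_{2}) \leq \int_{\mathbf{\Omega}}V\!\left(\lambda\widehat h(y_{1})+(1-\lambda)\widehat h(y_{2})\right)\ud\mu < \lambda v(y_{1})+(1-\lambda)v(y_{2}).
\end{equation*}

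No step presents a serious obstacle, since Proposition \ref{prop:abp} already secured closedness, convexity and solidity of $\mathcal{D}$, and Lemmas \ref{lem:Dcompact} and \ref{lem:Vminush} provide exactly the compactness and uniform integrability ingredients the argument needs. The mildest point of care is verifying that the convex-combination sequence $\widehat h^{n}$ remains minimising after invoking Lemma \ref{lem:Dcompact}; this is a direct consequence of the convexity of $V(\cdot)$, exactly as concavity of $U(\cdot)$ was used in Lemma \ref{lem:primexis}.
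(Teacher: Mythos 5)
Your proposal is correct and is essentially the paper's intended argument: the paper omits the proof of this lemma precisely because it proceeds "on similar lines" to Lemma \ref{lem:primexis}, with minimisation replacing maximisation, convexity of $V(\cdot)$ replacing concavity of $U(\cdot)$, and Lemma \ref{lem:Vminush} replacing Lemma \ref{lem:Uplusg} — exactly the substitutions you carry out, using Lemma \ref{lem:Dcompact} for the convex-combination limit in $\mathcal{D}(y)$, Fatou for $V^{+}$, uniform integrability for $V^{-}$, and strict convexity of $V(\cdot)$ for uniqueness and strict convexity of $v(\cdot)$.
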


We now move on to further characterise the derivatives of the value
functions, as well as the primal and dual optimisers. The first result
is on the derivative of the primal value value function $u(\cdot)$ at
infinity (equivalently, the derivative of the dual value function
$v(\cdot)$ at zero). Once again, because of the switching of the roles
of the primal and dual sets in our proofs compared with those of the
terminal wealth problem, the proof of the following lemma matches
closely the proof in Kramkov and Schachermayer \cite[Lemma 3.5]{ks99}
of the derivative of $v(\cdot)$ at infinity (giving the derivative of
$u(\cdot)$ at zero).

\begin{lemma}
\label{lem:uprime0}

The derivatives of the primal value function in \eqref{eq:vfabs} at
infinity and of the dual value function in \eqref{eq:dvfabs} at zero
are given by
\begin{equation}
u^{\prime}(\infty) := \lim_{x\to\infty}u^{\prime}(x) = 0, \quad
-v^{\prime}(0) := \lim_{y\downarrow 0}(-v^{\prime}(y)) = +\infty.   
\label{eq:uprime0}
\end{equation}

\end{lemma}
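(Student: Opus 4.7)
The plan is to first establish $u'(\infty)=0$ from concavity of $u$ together with the weak-duality upper bound on $u(x)/x$, and then to deduce $-v'(0)=+\infty$ from the bi-conjugacy relation in Lemma \ref{lem:conjugacy} by a short contradiction argument.

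For the first identity, since $u$ is concave and strictly increasing on $(0,\infty)$ by Lemma \ref{lem:primexis}, its right derivative $u'$ is non-negative and non-increasing, so the limit $u'(\infty)\in[0,\infty)$ exists. The concavity inequality $u(x)\geq u(x_{0})+u'(x)(x-x_{0})$ for $0<x_{0}<x$ rearranges to $u'(x)\leq (u(x)-u(x_{0}))/(x-x_{0})$. Letting $x\to\infty$ and invoking $\limsup_{x\to\infty}u(x)/x\leq 0$ from \eqref{eq:limiting} in Lemma \ref{lem:weakdual} yields $u'(\infty)\leq 0$. Combined with $u'(\infty)\geq 0$ this gives $u'(\infty)=0$.

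For the second identity, I would argue by contradiction. Suppose $-v'(0):=\lim_{y\downarrow 0}(-v'(y))=M<\infty$. Strict convexity of $v$ from Lemma \ref{lem:dualexis} implies that $-v'$ is non-increasing, so $-v'(y)\leq M$, i.e., $v'(y)\geq -M$, for all $y>0$. Fix any $x>M$. Then the map $y\mapsto v(y)+xy$ has right derivative $v'(y)+x\geq x-M>0$ everywhere on $(0,\infty)$, so it is strictly increasing. The bi-conjugacy relation of Lemma \ref{lem:conjugacy} would then force
\begin{equation*}
u(x) = \inf_{y>0}[v(y)+xy] = \lim_{y\downarrow 0}[v(y)+xy] = v(0+).
\end{equation*}
If $v(0+)=+\infty$, this contradicts the finiteness of $u(x)$ from Lemma \ref{lem:weakdual}; if $v(0+)<\infty$, then $u$ would be constant on $(M,\infty)$, contradicting the strict concavity (hence strict monotonicity) of $u$ from Lemma \ref{lem:primexis}. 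Either way we obtain a contradiction, so $-v'(0)=+\infty$.

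The main delicacy is the careful handling of right derivatives and of the boundary value $v(0+)$, which is a priori in $(-\infty,+\infty]$; once strict monotonicity of $u$ is available, both branches of the case split close off quickly, and no additional compactness or duality machinery beyond the already-proven Lemmas \ref{lem:weakdual}, \ref{lem:primexis}, \ref{lem:conjugacy}, and \ref{lem:dualexis} is required.
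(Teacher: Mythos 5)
Your proof is correct, but it takes a genuinely different route from the paper's. For $u^{\prime}(\infty)=0$, the paper does not invoke the limiting relation \eqref{eq:limiting}; instead it re-derives the bound on $u(x)/x$ from scratch, using the Inada-type growth estimate $U(x)\leq C_{\epsilon}+\epsilon x$, the $L^{1}(\mu)$-boundedness of $\mathcal{C}$ and l'H\^opital's rule (this is the role-reversed analogue of Kramkov--Schachermayer's argument for $v^{\prime}(\infty)$, and displaying where the $L^{1}(\mu)$-boundedness of the \emph{primal} domain enters is part of the paper's narrative); your chord-inequality argument, which only needs concavity of $u$ plus $\limsup_{x\to\infty}u(x)/x\leq 0$ from Lemma \ref{lem:weakdual}, is a softer and shorter way to the same conclusion. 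For the dual statement, the paper simply declares the two assertions in \eqref{eq:uprime0} equivalent via the conjugacy of Lemma \ref{lem:conjugacy} and proves only the primal one, whereas you give an independent contradiction argument through the bi-conjugacy; this is a legitimate alternative and has the virtue of making the ``equivalence'' explicit rather than leaving it as a standard convex-analysis fact. Two cosmetic points: under your contradiction hypothesis the map $y\mapsto v(y)+xy$ is increasing and $v$ is finite on $(0,\infty)$, so the case $v(0+)=+\infty$ cannot actually occur (harmless to include); and the parenthetical ``(hence strict monotonicity)'' is imprecise, since strict concavity does not imply strict monotonicity --- but constancy of $u$ on $(M,\infty)$ already contradicts the strict concavity established in Lemma \ref{lem:primexis}, so your conclusion stands without it (strict monotonicity of $u$ is in any case noted after Lemma \ref{lem:weakdual}, not in Lemma \ref{lem:primexis}).
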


\begin{proof}

By the conjugacy result in Lemma \ref{lem:conjugacy} between the value
functions, the assertions in \eqref{eq:uprime0} are equivalent. We
shall prove the first assertion.

The primal value function $u(\cdot)$ is strictly concave and strictly
increasing, so there is a finite non-negative limit
$u^{\prime}(\infty):=\lim_{x\to\infty}u^{\prime}(x)$. Because
$U(\cdot)$ is increasing with $\lim_{x\to\infty}U^{\prime}(x)=0$, for
any $\epsilon>0$ there exists a number $C_{\epsilon}$ such that
$U(x)\leq C_{\epsilon}+\epsilon x,\,\forall\,x>0$. Using this, the
$L^{1}(\mu)$-boundedness of $\mathcal{C}$ (so that
$\int_{\mathbf{\Omega}}g\ud\mu\leq Kx,\,\forall\,g\in\mathcal{C}(x)$,
for some $K<\infty$) and l'H\^opital's rule, we have, with
$\int_{\mathbf{\Omega}}\ud\mu=:\delta>0$,
\begin{eqnarray*}
0 \leq \lim_{x\to\infty}u^{\prime}(x) =
\lim_{x\to\infty}\frac{u(x)}{x} & = &
\lim_{x\to\infty}\sup_{g\in\mathcal{C}(x)}\int_{\mathbf{\Omega}}
\frac{U(g)}{x}\ud\mu \\
& \leq & \lim_{x\to\infty}\sup_{g\in\mathcal{C}(x)}\int_{\mathbf{\Omega}}
\frac{C_{\epsilon}+\epsilon g}{x}\ud\mu \\
& \leq & \lim_{x\to\infty}\left(\frac{C_{\epsilon}\delta}{x} +
\epsilon K\right) = \epsilon K,
\end{eqnarray*}
and taking the limit as $\epsilon\downarrow 0$ gives the result.
  
\end{proof}

The final step in the series of lemmas that will furnish us with the
proof of Theorem \ref{thm:adt} is to characterise the derivative of
the primal value value function $u(\cdot)$ at zero (equivalently, the
derivative of the dual value function $v(\cdot)$ at infinity) along
with a duality characterisation of the primal and dual optimisers.

\begin{lemma}
\label{lem:derivatives}
  
\begin{enumerate}
  
\item The derivatives of the primal value function in \eqref{eq:vfabs}
at zero and of the dual value function in \eqref{eq:dvfabs} at
infinity are given by
\begin{equation}
u^{\prime}(0) := \lim_{x\downarrow 0}u^{\prime}(x) = +\infty, \quad
-v^{\prime}(\infty) := \lim_{y\to\infty}(-v^{\prime}(y)) = 0.   
\label{eq:mvprime0}
\end{equation}

\item For any fixed $x>0$, with $y=u^{\prime}(x)$ (equivalently
  $x=-v^{\prime}(y)$), the primal and dual optimisers
  $\widehat{g}(x),\widehat{h}(y)$ are related by
\begin{equation}
U^{\prime}(\widehat{g}(x)) = \widehat{h}(y) =
\widehat{h}(u^{\prime}(x)), \quad \mu\mbox{-a.e.},
\label{eq:primaldual}
\end{equation}
and satisfy
\begin{equation}
\int_{\mathbf{\Omega}}\widehat{g}(x)\widehat{h}(y)\ud\mu = xy =
xu^{\prime}(x).
\label{eq:saturation}
\end{equation}

\item The derivatives of the value functions satisfy the relations
\begin{equation}
xu^{\prime}(x) =
\int_{\mathbf{\Omega}}U^{\prime}(\widehat{g}(x))\widehat{g}(x)\ud\mu,
\quad yv^{\prime}(y) =
\int_{\mathbf{\Omega}}V^{\prime}(\widehat{h}(y))\widehat{h}(y)\ud\mu,
\quad x,y >0.
\label{eq:derivatives}
\end{equation}

\end{enumerate}

\end{lemma}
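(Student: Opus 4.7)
The plan is to first establish the primal--dual pointwise relation and the saturated budget constraint of (ii) via a variational (first-order) argument, thereby also producing the derivative identity $xu'(x)=\int_{\mathbf\Omega}U'(\widehat g(x))\widehat g(x)\ud\mu$; the dual derivative identity in (iii) will then follow by substitution. Part (i) will be deduced from the strict convexity of $v(\cdot)$ established in Lemma~\ref{lem:dualexis}.

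\textbf{First-order condition.} Fix $x>0$ and write $\widehat g$ for $\widehat g(x)$. For any $g\in\mathcal C$, convexity of $\mathcal C(x)=x\mathcal C$ ensures that $(1-\epsilon)\widehat g+\epsilon xg\in\mathcal C(x)$ for $\epsilon\in(0,1)$, and optimality of $\widehat g$ gives $\int_{\mathbf\Omega}\epsilon^{-1}\bigl[U((1-\epsilon)\widehat g+\epsilon xg)-U(\widehat g)\bigr]\ud\mu\le 0$. By concavity of $U$ the integrand is non-decreasing as $\epsilon\downarrow 0$ and converges pointwise to $U'(\widehat g)(xg-\widehat g)$. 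A careful application of monotone convergence yields $\int_{\mathbf\Omega}U'(\widehat g)\,g\ud\mu\le\alpha$, with $\alpha:=x^{-1}\int_{\mathbf\Omega}U'(\widehat g)\widehat g\ud\mu$, for every $g\in\mathcal C$. Setting $\widehat h:=U'(\widehat g)$, the bipolarity relation $\mathcal D=\mathcal C^\circ$ of Proposition~\ref{prop:abp} then places $\widehat h$ in $\mathcal D(\alpha)$.

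\textbf{Identification of $\widehat h$ and $\alpha$.} The Fenchel equality $V(\widehat h)+\widehat g\widehat h=U(\widehat g)$ holds pointwise because $\widehat h=U'(\widehat g)$; integrating gives $\int_{\mathbf\Omega}V(\widehat h)\ud\mu=u(x)-x\alpha$, so $v(\alpha)+x\alpha\le u(x)$. Combined with the conjugacy $u(x)=\inf_{y>0}[v(y)+xy]$ of Lemma~\ref{lem:conjugacy}, this forces $u(x)=v(\alpha)+x\alpha$, identifying $y=\alpha$ as a minimiser; the infimum is unique by strict convexity of $v(\cdot)$ (Lemma~\ref{lem:dualexis}), so $\alpha=u'(x)$ and, by uniqueness of the dual optimiser, $\widehat h=\widehat h(u'(x))$. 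This yields \eqref{eq:primaldual}, \eqref{eq:saturation} and $xu'(x)=\int_{\mathbf\Omega}U'(\widehat g)\widehat g\ud\mu$. The dual identity in \eqref{eq:derivatives} then follows from $V'(\widehat h(y))=-\widehat g(x)$ and $-v'(y)=x$, giving $yv'(y)=-xy=\int_{\mathbf\Omega}V'(\widehat h(y))\widehat h(y)\ud\mu$.

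\textbf{Proof of (i) and main obstacle.} Suppose, for contradiction, $u'(0):=\lim_{x\downarrow 0}u'(x)=L<\infty$. Then for every $y>L$ the map $x\mapsto u(x)-xy$ has derivative bounded above by $L-y<0$ on $(0,\infty)$, so $v(y)=\sup_{x>0}[u(x)-xy]=\lim_{x\downarrow 0}[u(x)-xy]$ is independent of $y$; but a constant $v(\cdot)$ on $(L,\infty)$ contradicts the strict convexity of $v$ from Lemma~\ref{lem:dualexis}. Hence $u'(0)=+\infty$, and then $-v'(\infty)=0$ follows by letting $x\downarrow 0$ in the relation $-v'(u'(x))=x$ proved above. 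The main obstacle will be the passage to the limit in the variational step: since $xg-\widehat g$ changes sign, I anticipate having to split the domain of integration along $\{xg\le\widehat g\}$ and $\{xg>\widehat g\}$ and invoke monotone convergence on each piece, using the $\mu$-integrability of $U(\widehat g)$ (from finiteness of $u(x)$) together with the uniform integrability of $U^+(\cdot)$ on $\mathcal C(x)$ from Lemma~\ref{lem:Uplusg} to control dominating functions.
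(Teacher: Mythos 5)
Your variational route has a genuine gap at the point where you set $\widehat h:=U'(\widehat g(x))$ and invoke $\mathcal{D}=\mathcal{C}^{\circ}$ to place $\widehat h$ in $\mathcal{D}(\alpha)$ with $\alpha:=x^{-1}\int_{\mathbf{\Omega}}U'(\widehat g(x))\widehat g(x)\ud\mu$: nothing in your argument shows $\alpha<\infty$. If $\alpha=+\infty$ the first-order inequality $\int_{\mathbf{\Omega}}U'(\widehat g)g\ud\mu\leq\alpha$ is vacuous, and the subsequent steps (membership in $\mathcal{D}(\alpha)$, integration of the Fenchel equality, identification $\alpha=u'(x)$, and the saturation \eqref{eq:saturation}) all collapse. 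The same issue infects the limit passage itself: on $\{xg\leq\widehat g\}$ the negative part of the difference quotient is naturally dominated only by quantities of the order $U'((1-\epsilon)\widehat g)\,\widehat g$, so the monotone convergence you plan to use needs $\mu$-integrability of exactly the kind of object whose finiteness is in question. Your proposed remedy, uniform integrability of $U^{+}$ over $\mathcal{C}(x)$ from Lemma~\ref{lem:Uplusg}, cannot control $U'(\widehat g)\widehat g$: a bound of the form $zU'(z)\leq\mathrm{const}\,U^{+}(z)$ for large $z$ is essentially what the reasonable asymptotic elasticity condition \eqref{eq:AEU} provides, and this paper deliberately avoids assuming it, imposing only $v(y)<\infty$.

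The paper's proof sidesteps the problem by arguing from the dual side, with tools you already cite: since the dual optimiser $\widehat h(y)\in\mathcal{D}(y)$ exists (Lemma~\ref{lem:dualexis}) and bipolarity gives $\int_{\mathbf{\Omega}}\widehat g(x)\widehat h(y)\ud\mu\leq xy$, the pointwise bound \eqref{eq:VUbound} and the conjugacy of Lemma~\ref{lem:conjugacy} yield
\begin{equation*}
0\;\leq\;\int_{\mathbf{\Omega}}\bigl(V(\widehat h(y))-U(\widehat g(x))+\widehat g(x)\widehat h(y)\bigr)\ud\mu\;\leq\;v(y)-u(x)+xy,
\end{equation*}
and at $y=u'(x)$ the right-hand side vanishes; this delivers simultaneously the pointwise identity $U'(\widehat g(x))=\widehat h(y)$ $\mu$-a.e.\ and the saturation $\int_{\mathbf{\Omega}}\widehat g(x)\widehat h(y)\ud\mu=xy<\infty$, with no integrability difficulty, after which \eqref{eq:derivatives} follows by substitution exactly as in your final step. (Your deduction of $u'(0)=+\infty$ from the strict convexity of $v$ is a legitimate and rather tidy alternative to the paper's Fatou-based argument, granted the sup-form of conjugacy; but as written your proof of the second limit in \eqref{eq:mvprime0}, via $-v'(u'(x))=x$, leans on the gapped part (ii), so the proposal as a whole does not stand without repairing the variational step.)
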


\begin{proof}

Recall the inequality \eqref{eq:VUbound}, which also applies to the
value functions because they are also conjugate by Lemma
\ref{lem:conjugacy}. We thus have, in addition to \eqref{eq:VUbound},
\begin{equation}
v(y) \geq u(x) -xy, \quad \forall\,x,y>0, \quad \mbox{with equality
iff $y=u^{\prime}(x)$}.
\label{eq:vubound}
\end{equation}
With $\widehat{g}(x)\in\mathcal{C}(x),\,x>0$ and
$\widehat{h}(y)\in\mathcal{D}(y),\,y>0$ denoting the primal and dual
optimisers, the bipolarity relations \eqref{eq:bp1} and \eqref{eq:bp2}
imply that we have
\begin{equation*}
\int_{\mathbf{\Omega}}\widehat{g}(x)\widehat{h}(y)\ud\mu \leq xy, \quad
x,y >0.  
\end{equation*}
Using this as well as \eqref{eq:VUbound} and \eqref{eq:vubound} we
have
\begin{equation}
 0 \leq \int_{\mathbf{\Omega}}\left(V(\widehat{h}(y)) -
U(\widehat{g}(x)) + \widehat{g}(x)\widehat{h}(y)\right)\ud\mu \leq
v(y) - u(x) + xy, \quad x,y>0,
\label{eq:fundineq}
\end{equation}
The right-hand-side of \eqref{eq:fundineq} is zero if and only if
$y=u^{\prime}(x)$, due to \eqref{eq:vubound}, and the non-negative
integrand must then be $\mu$-a.e. zero, which by \eqref{eq:VUbound}
can only happen if \eqref{eq:primaldual} holds, which establishes that
primal-dual relation.

Thus, for any fixed $x>0$ and with $y=u^{\prime}(x)$, and hence
equality in \eqref{eq:fundineq}, we have
\begin{eqnarray*}
0 & = & \int_{\mathbf{\Omega}}\left(V(\widehat{h}(y)) -
U(\widehat{g}(x)) + \widehat{g}(x)\widehat{h}(y)\right)\ud\mu \\
& = & v(y) - u(x) +
\int_{\mathbf{\Omega}}\widehat{g}(x)\widehat{h}(y)\ud\mu \\ 
& = & v(y) - u(x) + xy, \quad y=u^{\prime}(x),
\end{eqnarray*}
which implies that \eqref{eq:saturation} must hold. Inserting the
explicit form of $\widehat{h}(y)=U^{\prime}(\widehat{g}(x))$ into
\eqref{eq:saturation} yields the first relation in
\eqref{eq:derivatives}. Similarly, setting
$\widehat{g}(x)=I(\widehat{h}(y))=-V^{\prime}(\widehat{h}(y))$ into
\eqref{eq:saturation}, with $x=-v^{\prime}(y)$ (equivalent to
$y=u^{\prime}(x)$), yields the second relation in
\eqref{eq:derivatives}.

It remains to establish the relations in \eqref{eq:mvprime0}, which are
equivalent assertions. We shall prove the first one. This will use the
fact that $\mathcal{C}$ is a subset of $L^{1}(\mu)$. In the terminal
wealth case, one typically proves the second assertion using the
property that the dual domain lies within $L^{1}(\mathbb{P})$. This is
the switching of the roles of the primal and dual domains in the
consumption problem, that we have witnessed throughout this section.

From the first relation in \eqref{eq:derivatives} and the fact that
\begin{equation}
\int_{\mathbf{\Omega}}gh\ud\mu\leq xy, \quad
\forall\,g\in\mathcal{C}(x),h\in\mathcal{D}(y), \quad x,y>0,
\label{eq:absbc}
\end{equation}
we see that, for any $x>0$, we have
$U^{\prime}(\widehat{g}(x))\in\mathcal{D}(u^{\prime}(x))$. Thus, for
any $g\in\mathcal{C}$, \eqref{eq:absbc} implies that
\begin{equation}
u^{\prime}(x) \geq
\int_{\mathbf{\Omega}}U^{\prime}(\widehat{g}(x))g\ud\mu, \quad
\forall\,g\in\mathcal{C},
\label{eq:uprimeineq}
\end{equation}
which we shall make use of shortly.

Since $\mathcal{C}(x)$ is a subset of $L^{1}(\mu)$, we have
$\int_{\mathbf{\Omega}}\widehat{g}(x)\ud\mu\leq Kx$, for some
$K<\infty$, and hence
\begin{equation}
\int_{\mathbf{\Omega}}\frac{\widehat{g}(x)}{x}\ud\mu\leq K, \quad
\forall\,x>0.
\label{eq:whgxineq}
\end{equation}
Using Fatou's lemma in \eqref{eq:whgxineq} we have
\begin{equation*}
K \geq \liminf_{x\downarrow 0}
\int_{\mathbf{\Omega}}\frac{\widehat{g}(x)}{x}\ud\mu \geq
\int_{\mathbf{\Omega}}\liminf_{x\downarrow 0}
\left(\frac{\widehat{g}(x)}{x}\right)\ud\mu, 
\end{equation*}
which, given that $\widehat{g}(x)/x$ is non-negative, gives that
$\liminf_{x\downarrow 0}(\widehat{g}(x)/x)<\infty,\,\mu$-a.e.
Therefore, writing $\widehat{g}(x)=:x\widehat{g}^{x}$, which defines a
unique element $\widehat{g}^{x}\in\mathcal{C}$, we have
\begin{equation*}
\widehat{g}^{0} := \liminf_{x\downarrow 0}\widehat{g}^{x} = 
\liminf_{x\downarrow 0}\frac{\widehat{g}(x)}{x} < \infty, \quad
\mu\mbox{-a.e.}   
\end{equation*}
Using this property and applying Fatou's lemma to \eqref{eq:uprimeineq}
we obtain, on using $U^{\prime}(0)=+\infty$,
\begin{equation*}
+\infty \geq \liminf_{x\downarrow 0}u^{\prime}(x) \geq 
\liminf_{x\downarrow 0}
\int_{\mathbf{\Omega}}U^{\prime}(x\widehat{g}^{x})g\ud\mu \geq
\int_{\mathbf{\Omega}}\liminf_{x\downarrow 0}
U^{\prime}(x\widehat{g}^{x})g\ud\mu = +\infty, 
\end{equation*}
which gives us the first relation in \eqref{eq:mvprime0}.

\end{proof}

We have now established all results that give the duality in Theorem
\ref{thm:adt}, so let us confirm this.

\begin{proof}[Proof of Theorem \ref{thm:adt}]

Lemma \ref{lem:conjugacy} implies the relations \eqref{eq:conjugacy}
of item (i). The statements in item (ii) are implied by Lemma
\ref{lem:primexis} and Lemma \ref{lem:dualexis}. Items (iii) and (iv)
follow from Lemma \ref{lem:uprime0} and Lemma \ref{lem:derivatives}.

\end{proof}

We are almost ready to prove the concrete duality in Theorem
\ref{thm:consd}, because Theorem \ref{thm:adt} readily implies nearly
all of the assertions of Theorem \ref{thm:consd}. The outstanding
assertion is the characterisation of the optimal wealth process in
\eqref{eq:owp} and the associated uniformly integrable martingale
property of the deflated wealth plus cumulative deflated consumption
process
$\widehat{X}(x)\widehat{Y}(y) +
\int_{0}^{\cdot}\widehat{c}_{s}(x)\widehat{Y}_{s}(y)\ud s$. So we
proceed to establish these assertions in the proposition below, which
turns out to be interesting in its own right. We take as given the
other assertions of Theorem \ref{thm:consd}, and in particular the
optimal budget constraint in \eqref{eq:oihbc}. We shall confirm the
proof of Theorem \ref{thm:consd} in its entirety after the proof of
the next result.

\begin{proposition}[Optimal wealth process]
\label{prop:owp}

Given the saturated budget constraint equality in \eqref{eq:oihbc},
the optimal wealth process is characterised by \eqref{eq:owp}. The
process
\begin{equation*}
\widehat{M}_{t} := \widehat{X}_{t}(x)\widehat{Y}_{t}(y) +
\int_{0}^{t}\widehat{c}_{s}(x)\widehat{Y}_{s}(y)\ud s, \quad 0\leq
t <\infty, 
\end{equation*}
is a uniformly integrable martingale, converging to an integrable
random variable $\widehat{M}_{\infty}$, so the martingale extends to
$[0,\infty]$. The process $\widehat{X}(x)\widehat{Y}(y)$ is a
potential, that is, a non-negative supermartingale satisfying
$\lim_{t\to\infty}\mathbb{E}[\widehat{X}_{t}(x)\widehat{Y}_{t}(y)]=0$.
Moreover, $\widehat{X}_{\infty}(x)\widehat{Y}_{\infty}(y)=0$, almost
surely.

\end{proposition}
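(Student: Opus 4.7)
The plan is to work entirely from the supermartingale property guaranteed by the definition of $\mathcal{Y}$ in \eqref{eq:mcY}, together with the saturated budget constraint \eqref{eq:oihbc}. First I would observe that since $(\widehat{X}(x),\widehat{c}(x))$ is an admissible investment-consumption strategy and $\widehat{Y}(y)\in\mathcal{Y}(y)$, the non-negative process $\widehat{M}$ is by construction a supermartingale with $\widehat{M}_{0}=xy$. The supermartingale property therefore gives $\mathbb{E}[\widehat{M}_{t}]\leq xy$ for every $t\geq 0$, and since $t\mapsto\mathbb{E}[\widehat{M}_{t}]$ is non-increasing, it has a limit $L\leq xy$ as $t\uparrow\infty$.

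Next I would push $t\uparrow\infty$ inside the expectation. By monotone convergence applied to the non-decreasing part $\int_{0}^{t}\widehat{c}_{s}(x)\widehat{Y}_{s}(y)\ud s$ and Fatou's lemma on $\widehat{X}_{t}(x)\widehat{Y}_{t}(y)\geq 0$, we have
\begin{equation*}
L \;=\;\lim_{t\to\infty}\mathbb{E}[\widehat{M}_{t}]\;\geq\;
\mathbb{E}\!\left[\int_{0}^{\infty}\widehat{c}_{s}(x)\widehat{Y}_{s}(y)\ud s\right]\;=\;xy,
\end{equation*}
the last equality being \eqref{eq:oihbc}. Combined with $L\leq xy$, this forces $\mathbb{E}[\widehat{M}_{t}]=xy$ for all $t\geq 0$. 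A supermartingale with constant expectation is a martingale, so $\widehat{M}$ is a martingale, and simultaneously $\lim_{t\to\infty}\mathbb{E}[\widehat{X}_{t}(x)\widehat{Y}_{t}(y)]=0$.

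For uniform integrability, since $\widehat{M}$ is a non-negative martingale it converges almost surely to an integrable limit $\widehat{M}_{\infty}$; UI is equivalent to $\mathbb{E}[\widehat{M}_{\infty}]=\mathbb{E}[\widehat{M}_{0}]=xy$. Fatou gives $\mathbb{E}[\widehat{M}_{\infty}]\leq xy$. For the reverse bound, since $\widehat{X}_{t}(x)\widehat{Y}_{t}(y)\geq 0$ we have $\widehat{M}_{\infty}\geq\int_{0}^{\infty}\widehat{c}_{s}(x)\widehat{Y}_{s}(y)\ud s$ almost surely, and taking expectations and using \eqref{eq:oihbc} gives $\mathbb{E}[\widehat{M}_{\infty}]\geq xy$. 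Hence $\widehat{M}_{\infty}=\int_{0}^{\infty}\widehat{c}_{s}(x)\widehat{Y}_{s}(y)\ud s$ almost surely (equality of two ordered random variables with equal expectations), from which $\lim_{t\to\infty}\widehat{X}_{t}(x)\widehat{Y}_{t}(y)=0$ almost surely, i.e.\ $\widehat{X}_{\infty}(x)\widehat{Y}_{\infty}(y)=0$.

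Finally, the representation \eqref{eq:owp} and the potential property follow immediately from UI and the optional sampling/closure of the martingale: writing $\widehat{X}_{t}(x)\widehat{Y}_{t}(y)=\widehat{M}_{t}-\int_{0}^{t}\widehat{c}_{s}(x)\widehat{Y}_{s}(y)\ud s=\mathbb{E}[\widehat{M}_{\infty}\vert\mathcal{F}_{t}]-\int_{0}^{t}\widehat{c}_{s}(x)\widehat{Y}_{s}(y)\ud s$ and substituting the identified value of $\widehat{M}_{\infty}$ yields \eqref{eq:owp}. As a conditional expectation of a non-negative random variable whose underlying integral support shrinks to the empty set, $\widehat{X}(x)\widehat{Y}(y)$ is then a non-negative supermartingale with expectation tending to zero, hence a potential. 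The main subtlety, and the only place requiring real care, is the identification $\widehat{M}_{\infty}=\int_{0}^{\infty}\widehat{c}_{s}\widehat{Y}_{s}\ud s$ almost surely, since everything else is a direct consequence of the interplay between the supermartingale bound, the saturated budget equality, and standard non-negative (super)martingale convergence.
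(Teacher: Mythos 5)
Your proof is correct and takes essentially the same route as the paper: the saturated budget constraint \eqref{eq:oihbc} forces the non-negative supermartingale $\widehat{M}$ to have constant expectation, hence to be a martingale, after which uniform integrability, $\widehat{X}_{\infty}(x)\widehat{Y}_{\infty}(y)=0$, the potential property and the representation \eqref{eq:owp} follow by closing the martingale at infinity. Your justification of uniform integrability, via $\mathbb{E}[\widehat{M}_{\infty}]=xy$ obtained from Fatou together with the pointwise bound $\widehat{M}_{\infty}\geq\int_{0}^{\infty}\widehat{c}_{s}(x)\widehat{Y}_{s}(y)\ud s$, is if anything a slightly more self-contained version of the paper's appeal to the closure theorem for martingales with a last element.
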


\begin{proof}

It simplifies notation if we take $x=y=1$, and is without loss of
generality: although $y=u^{\prime}(x)$ in \eqref{eq:oihbc}, one can
always multiply the utility function by an arbitrary constant so as
to ensure that $u^{\prime}(1)=1$.  We thus have the optimal budget
constraint
\begin{equation}
\mathbb{E}\left[\int_{0}^{\infty}\widehat{c}_{t}\widehat{Y}_{t}\ud
t\right] =1,
\label{eq:oihbc1}
\end{equation}
for $\widehat{c}\equiv\widehat{c}(1)\in\mathcal{A}$ and
$\widehat{Y}\equiv\widehat{Y}(1)\in\mathcal{Y}$. Since
$\widehat{c}\in\mathcal{A}$, we know there exists an optimal wealth
process $\widehat{X}\equiv\widehat{X}(1)$ and an associated optimal
trading strategy $\widehat{H}$, such that
\begin{equation*}
\widehat{X} = 1 + (\widehat{H}\cdot S) -
\int_{0}^{\cdot}\widehat{c}_{s}\ud s \geq 0,   
\end{equation*}
and such that
$\widehat{M}:=\widehat{X}\widehat{Y} +
\int_{0}^{\cdot}\widehat{c}_{s}\widehat{Y}_{s}\ud s$ is a
supermartingale over $[0,\infty)$. The supermartingale condition, by
the same arguments that led to the derivation of the budget constraint
in Lemma \ref{lem:ihbc}, leads to the inequality
$\mathbb{E}\left[\int_{0}^{\infty}\widehat{c}_{t}\widehat{Y}_{t}\ud
  t\right] \leq1$ instead of the equality
\eqref{eq:oihbc1}. Similarly, if the supermartingale is strict, we
get a strict inequality in place of \eqref{eq:oihbc1}. We thus deduce
that $\widehat{M}$ must be a martingale over $[0,\infty)$. We shall
show that this extends to $[0,\infty]$, along with the other claims in
the proposition.

Since $\widehat{M}$ is a martingale, the (non-negative c\`adl\`ag)
deflated wealth process $\widehat{X}\widehat{Y}$ is a martingale minus
a non-decreasing process, so is a non-negative c\`adl\`ag
supermartingale, and thus (by Cohen and Elliott \cite[Corollary
5.2.2]{ce15}, for example) converges to an integrable limiting random
variable
$\widehat{X}_{\infty}\widehat{Y}_{\infty}:=
\lim_{t\to\infty}\widehat{X}_{t}\widehat{Y}_{t}$ (and moreover
$\widehat{X}_{t}\widehat{Y}_{t}\geq
\mathbb{E}[\widehat{X}_{\infty}\widehat{Y}_{\infty}],\,t\geq 0$). The
non-decreasing integral in $\widehat{M}$ clearly also converges to an
integrable random variable, by virtue of the budget constraint. Thus,
$\widehat{M}$ also converges to an integrable random variable
$\widehat{M}_{\infty}:=\widehat{X}_{\infty}\widehat{Y}_{\infty}+
\int_{0}^{\infty}\widehat{c}_{t}\widehat{Y}_{t}\ud t$. By Protter
\cite[Theorem I.13]{protter}, the extended martingale over
$[0,\infty]$, $(\widehat{M}_{t})_{t\in[0,\infty]}$ is then uniformly
integrable, as claimed.

The martingale condition gives
\begin{equation*}
\mathbb{E}\left[\widehat{X}_{t}\widehat{Y}_{t} +
\int_{0}^{t}\widehat{c}_{s}\widehat{Y}_{s}\ud s\right] = 1, \quad 0\leq
t <\infty.
\end{equation*}
Taking the limit as $t\to\infty$, using monotone convergence
in the second term within the expectation and utilising
\eqref{eq:oihbc1} yields
\begin{equation*}
\lim_{t\to\infty}\mathbb{E}[\widehat{X}_{t}\widehat{Y}_{t}] = 0,
\end{equation*}
so that $\widehat{X}\widehat{Y}$ is a potential, as claimed.

Using the uniform integrability of $\widehat{M}$ and taking the limit as
$t\to\infty$ in $\mathbb{E}[\widehat{M}_{t}]=1,\,t\geq 0$, we have
\begin{equation*}
1 = \lim_{t\to\infty}\mathbb{E}[\widehat{M}_{t}] =
\mathbb{E}\left[\lim_{t\to\infty}\widehat{M}_{t}\right] =
\mathbb{E}[\widehat{X}_{\infty}\widehat{Y}_{\infty}] + 1,
\end{equation*}
on using \eqref{eq:oihbc1}. Hence, we get
$\mathbb{E}[\widehat{X}_{\infty}\widehat{Y}_{\infty}]=0$ and, since
$\widehat{X}_{\infty}\widehat{Y}_{\infty}$ is non-negative, we deduce
that $\widehat{X}_{\infty}\widehat{Y}_{\infty}=0$, almost surely as
claimed. 

We can now assemble these ingredients to arrive at the optimal wealth
process formula \eqref{eq:owp}. Applying the martingale condition
again, this time over $[t,u]$ for some $t\geq 0$, we have
\begin{equation*}
\mathbb{E}\left[\left.\widehat{X}_{u}\widehat{Y}_{u} +
\int_{0}^{u}\widehat{c}_{s}\widehat{Y}_{s}\ud
s\right\vert\mathcal{F}_{t}\right] = \widehat{X}_{t}\widehat{Y}_{t} +
\int_{0}^{t}\widehat{c}_{s}\widehat{Y}_{s}\ud
s, \quad 0\leq t\leq u <\infty.
\end{equation*}
Taking thew limit as $u\to\infty$ and using the uniform integrability
of $\widehat{M}$ we obtain
\begin{equation*}
\mathbb{E}\left[\left.\lim_{u\to\infty}\left(\widehat{X}_{u}\widehat{Y}_{u} +
\int_{0}^{u}\widehat{c}_{s}\widehat{Y}_{s}\ud
s\right)\right\vert\mathcal{F}_{t}\right] = \widehat{X}_{t}\widehat{Y}_{t} +
\int_{0}^{t}\widehat{c}_{s}\widehat{Y}_{s}\ud
s, \quad t\geq 0,
\end{equation*}
which, on using $\widehat{X}_{\infty}\widehat{Y}_{\infty}=0$,
re-arranges to
\begin{equation*}
\widehat{X}_{t}\widehat{Y}_{t} =
\mathbb{E}\left[\left.\int_{t}^{\infty}\widehat{c}_{s}\widehat{Y}_{s}\ud
s\right\vert\mathcal{F}_{t}\right], \quad t\geq 0,
\end{equation*}
which establishes \eqref{eq:owp}, and the proof is complete.

\end{proof}

\begin{proof}[Proof of Theorem \ref{thm:consd}]

Given the definitions of the sets $\mathcal{C}(x)$ and
$\mathcal{D}(y)$ in \eqref{eq:Cx} and \eqref{eq:Dy}, respectively, and
the identification of the abstract value functions in \eqref{eq:vfabs}
and \eqref{eq:dvfabs} with their concrete counterparts in
\eqref{eq:vf} and \eqref{eq:dvf}, Theorem \ref{thm:adt} implies all
the assertions of Theorem \ref{thm:consd}, with the exception of the
optimal wealth process formula \eqref{eq:owp} and the uniform
integrability of $\widehat{X}(x)\widehat{Y}(y) +
\int_{0}^{\cdot}\widehat{c}_{s}(x)\widehat{Y}_{s}(y)\ud s$, which are
established by Proposition \ref{prop:owp}.
  
\end{proof}

\section{An example: Bessel process with stochastic
volatility and correlation}
\label{sec:example}

We end with an example of an infinite horizon consumption problem in
an incomplete market model with strict local martingale deflators,
which is covered in our framework.

\begin{example}[Three-dimensional Bessel process with stochastic
  volatility and correlation, CRRA utility]
\label{examp:3dbessel}

Take an infinite horizon complete stochastic basis
$(\Omega,\mathcal{F},\mathbb{F}:=(\mathcal{F}_{t})_{t\geq
  0},\mathbb{P})$, with $\mathbb{F}$ satisfying the usual hypotheses.
Let $(W,W^{\perp})$ be a two-dimensional Brownian motion. We take
$\mathbb{F}$ to be the augmented filtration generated by
$(W,W^{\perp})$.

Let $B$ denote the process which solves the stochastic differential
equation
\begin{equation*}
\ud B_{t} = \frac{1}{B_{t}}\ud t + \ud W_{t} =: \lambda_{t}\ud t + \ud
W_{t}, \quad B_{0}=1.  
\end{equation*}
The process $B$ is the well-known three-dimensional Bessel
process. The process $\lambda:=1/B$ will be the market price of risk
of a stock with price process $S$ and stochastic volatility process
$Y>0$, driven by the correlated Brownian motion
$\widetilde{W}:=\rho W+\sqrt{1-\rho^{2}}W^{\perp}$, and with
$\rho\in[-1,1]$ some $\mathbb{F}$-adapted stochastic correlation. We
need not specify the dynamics of $Y$ or $\rho$ any further for the
purposes of the example. The stock price dynamics are given by
\begin{equation*}
\ud S_{t} = Y_{t}S_{t}\ud B_{t} = Y_{t}S_{t}(\lambda_{t}\ud t + \ud
W_{t}).  
\end{equation*}

Take a constant relative risk aversion (CRRA) utility function:
$U(x):=x^{p}/p,\,p<1,p\neq 0,\,x>0$. The results for logarithmic
utility $U(\cdot)=\log(\cdot)$ can be recovered by setting $p=0$ in
the final formulae, and this can be verified by carrying out the
analysis directly for that case. Take the measure $\kappa$ to be given
by $\ud\kappa_{t}=\e^{-\alpha t}\ud t$, for a positive discount rate
$\alpha$, so that $\gamma_{t}=\e^{\alpha t},\,t\geq 0$. The primal
value function is
\begin{equation*}
u(x) :=
\sup_{c\in\mathcal{A}(x)}\mathbb{E}\left[\int_{0}^{\infty}\e^{-\alpha
t}U(c_{t})\ud t\right], \quad x>0.
\end{equation*}
The wealth process incorporating consumption satisfies
\begin{equation*}
\ud X_{t} = Y_{t}\pi_{t}(\lambda_{t}\ud t + \ud W_{t}) - c_{t}\ud t,
\quad X_{0}=x, 
\end{equation*}
where $\pi=HS$ is the trading strategy expressed in terms of the
wealth placed in the stock, with $H$ the process for the number of
shares.

With $\mathcal{E}(\cdot)$ denoting the stochastic exponential, the
deflators in this model are given by local martingale deflators of the
form
\begin{equation}
Z:=\mathcal{E}(-\lambda\cdot W - \psi\cdot W^{\perp}),  
\label{eq:Zpsi}
\end{equation}
for an arbitrary process $\psi$ satisfying
$\int_{0}^{t}\psi^{2}_{s}\ud s<\infty$ almost surely for all
$t\geq 0$, with each such $\psi$ leading to a different deflator: this
market is of course incomplete. In the case that $Y$ and $\rho$ are
deterministic, the market is complete and there is a unique deflator
$Z^{(0)}:=\mathcal{E}(-\lambda\cdot W)$. It is well-known (see for
instance Larsen \cite[Example 2.2]{larsen09}) that $Z^{(0)}$ is a
strict local martingale and, what is more, that $Z^{(0)}=\lambda$ and
that $\lambda$ is square integrable. The strict local martingale
property is inherited by $Z$ in \eqref{eq:Zpsi}, for any choice of
integrand $\psi$.

The deflated wealth plus cumulative deflated consumption
process $M$ is then given by
\begin{equation}
M_{t} := X_{t}Z_{t} + \int_{0}^{t}c_{s}Z_{s}\ud s = x +
\int_{0}^{t}Z_{s}(Y_{s}\pi_{s}-\lambda_{s}X_{s})\ud W_{s} -
\int_{0}^{t}X_{s}Z_{s}\psi_{s}\ud W^{\perp}_{s}, \quad
t\geq 0,
\label{eq:M}
\end{equation}
which is a non-negative local martingale and thus a supermartingale.

The convex conjugate of the utility function is $V(y):=-y^{q}/q,\,y>0$,
where $q<1,\,q\neq 0$ is the conjugate variable to $p$, satisfying
$1-q=(1-p)^{-1}$. The dual value function is given by
\begin{equation*}
v(y) :=
\inf_{Z\in\mathcal{Z}}\mathbb{E}\left[\int_{0}^{\infty}\e^{-\alpha
t}V(yZ_{t}\e^{\alpha t})\ud t\right], \quad y>0.
\end{equation*}
Denote the unique dual minimiser by $\widehat{Z}$, given by
\begin{equation*}
\widehat{Z} :=\mathcal{E}(-\lambda\cdot W - \widehat{\psi}\cdot
W^{\perp}),
\end{equation*}
for some optimal integrand $\widehat{\psi}$ in \eqref{eq:Zpsi}. For
use below, define the non-negative martingale $H$ by
\begin{equation*}
H_{t} := \mathbb{E}\left[\left.
\int_{0}^{\infty}\e^{-\alpha(1-q)s}\widehat{Z}^{q}_{s}\ud
s\right\vert\mathcal{F}_{t}\right], \quad t\geq 0. 
\end{equation*}

Using Theorem \ref{thm:consd}, and in particular
\eqref{eq:pdc}, the optimal consumption process is given by
\begin{equation}
(\widehat{c}_{t}(x))^{-(1-p)} = u^{\prime}(x)\e^{\alpha
t}\widehat{Z}_{t}, \quad t\geq 0.
\label{eq:chat}
\end{equation}
By \eqref{eq:oihbc} the optimisers satisfy the saturated budget
constraint
\begin{equation}
\mathbb{E}\left[\int_{0}^{\infty}\widehat{c}_{t}(x)\widehat{Z}_{t}\ud
t\right]=x.
\label{eq:sbc}
\end{equation}
The relations \eqref{eq:chat} and \eqref{eq:sbc} yield
\begin{equation*}
\widehat{c}_{t}(x) = \frac{x}{H_{0}}\e^{-\alpha(1-q)
t}\widehat{Z}^{-(1-q)}_{t}, \quad t\geq 0.
\end{equation*}
Using \eqref{eq:owp}, the optimal wealth process is then given by
\begin{equation*}
\widehat{X}_{t}(x)\widehat{Z}_{t} = \frac{x}{H_{0}}\mathbb{E}\left[\left.
\int_{t}^{\infty}\e^{-\alpha(1-q)s}\widehat{Z}^{q}_{s}\ud
s\right\vert\mathcal{F}_{t}\right], \quad t\geq 0.  
\end{equation*}
More pertinently, the optimal martingale $\widehat{M}$, corresponding
to the process in \eqref{eq:M} at the optimum, is computed as
\begin{equation*}
\widehat{M}_{t} := \widehat{X}_{t}(x)\widehat{Z}_{t} +
\int_{0}^{t}\widehat{c}_{s}(x)\widehat{Z}_{s}\ud s =
\frac{x}{H_{0}}H_{t}, \quad t\geq 0,
\end{equation*}
so is indeed a martingale.

By martingale representation, $\widehat{M}$ will have a stochastic
integral representation which, without loss of generality, can be
written in the form
\begin{equation*}
\widehat{M}_{t} = x +
\int_{0}^{t}\widehat{Z}_{s}\widehat{X}_{s}(x)(\varphi_{s}
-q\lambda_{s})\ud W_{s} +
\int_{0}^{t}\widehat{Z}_{s}\widehat{X}_{s}(x)\beta_{s}\ud
W^{\perp}_{s}, \quad t\geq 0,  
\end{equation*}
for some integrands $\varphi,\beta$. Comparing with the representation
in \eqref{eq:M} at the optimum yields the optimal trading strategy in
terms of the optimal portfolio proportion
$\widehat{\theta}:=\widehat{\pi}/\widehat{X}(x)$ and the optimal
integrand $\widehat{\psi}$ in the form
\begin{equation*}
\widehat{\theta}_{t} := \frac{\widehat{\pi}_{t}}{\widehat{X}_{t}(x)} =
\frac{\lambda_{t}}{Y_{t}(1-p)} + \frac{\varphi_{t}}{Y_{t}}, \quad
\widehat{\psi}_{t} = -\beta_{t}, \quad t\geq 0.  
\end{equation*}
In particular, the process $\varphi$ records the correction to the
Merton-type strategy $\lambda/(Y(1-p))$ due to the stochastic
volatility and correlation.

This is as far as one can go without computing explicitly the dual
minimiser $\widehat{Z}$, which is typically impossible in closed form
for power utility. For the special case of logarithmic utility, one
can set $p=0$ and $q=0$ in the results for power utility, to show that
the process $H=1/\alpha$ is constant, and $\widehat{M}=x$ is also
constant, yielding
\begin{equation*}
\widehat{\theta}_{t} = \frac{\lambda_{t}}{Y_{t}}, \quad
\widehat{\psi}_{t} = 0, \quad t\geq 0,
\end{equation*}
giving the classic myopic trading strategy for logarithmic utility
and, in particular, that the dual optimiser is the minimal deflator:
$\widehat{Z}=Z^{(0)}=\mathcal{E}(-\lambda\cdot W)$. The optimal
consumption and wealth processes are given explicitly as
\begin{equation*}
\widehat{c}_{t}(x) = \alpha\e^{-\alpha t}\frac{x}{Z^{(0)}_{t}}, \quad
\widehat{X}_{t}(x) = \e^{-\alpha t}\frac{x}{Z^{(0)}_{t}}, \quad t\geq 0,
\end{equation*}
so that we have the classical relation
$\widehat{c}(x)=\alpha\widehat{X}(x)$, as is always the case for
infinite horizon logarithmic utility from consumption. The results for
logarithmic utility can of course be obtained by going directly
through the analysis from scratch in the manner above.

\end{example}

{\small
\bibliography{dfocunupbr_refs}
\bibliographystyle{siam}
}

\end{document}